\let\ltxcup\cup
\newcommand{\be}[0]{\begin{equation}}
\newcommand{\ee}[0]{\end{equation}}
\numberwithin{equation}{section}
\theoremstyle{plain}
\newtheorem{theorem}{Theorem}[section]
\newtheorem{lemma}[theorem]{Lemma}
\newtheorem{proposition}[theorem]{Proposition}
\newtheorem{corollary}[theorem]{Corollary}
\newtheorem{assumption}[theorem]{Assumption}
\theoremstyle{definition}
\begin{document}

\begin{center}
{\Large 
{\bf The global anomaly of the self-dual field in general backgrounds}}
\vspace{2.0cm}

{\large Samuel Monnier}
\vspace*{0.5cm}

Institut für Mathematik, 
Universität Zürich,\\
Winterthurerstrasse 190, 8057 Zürich, Switzerland\\
samuel.monnier@gmail.com 

\vspace*{2cm}

{\bf Abstract}
\end{center}

We prove a formula for the global gravitational anomaly of the self-dual field theory in the presence of background gauge fields, assuming the results of \href{http://arxiv.org/abs/1110.4639}{arXiv:1110.4639}. Along the way, we also clarify various points about the self-dual field theory. In particular, we give a general definition of the theta characteristic entering its partition function and settle the issue of its possible metric dependence. We treat the cohomological version of type IIB supergravity as an example of the formalism: a mixed gauge-gravitational global anomaly, occurring when the B-field and Ramond-Ramond 2-form gauge fields have non-trivial Wilson lines, cancels provided a certain cobordism group vanishes.

\newpage

\tableofcontents

\section{Introduction}

Global gravitational anomalies \cite{Witten:1985xe, Monnier:2012pn} detect anomalous phases that the partition function of a chiral quantum field theory can pick under large diffeomorphisms. For a quantum field theory supposed to describe the low energy limit of a quantum theory of gravity, such anomalous phases signal a fundamental inconsistency. Global gravitational anomaly cancellation, as a generalization of the well known local anomaly cancellation, provides non-trivial constraints on quantum field theories appearing in various effective descriptions of string theory or M-theory. For the conformal field theories on the string worldsheet, it takes the form of the familiar modular invariance constraint.

While the global gravitational anomalies of chiral fermionic field theories were understood a long time ago \cite{Witten:1985xe}, a formula for the global anomaly of the self-dual field theory on generic manifolds was derived only recently \cite{Monnier2011, Monnier2011a}. 
A self-dual $2\ell$-form field on a $4\ell+2$-dimensional manifold can be naturally sourced by a $2\ell+1$-form abelian gauge field. The derivation of \cite{Monnier2011, Monnier2011a} assumes that this background gauge field vanishes. While consistency requires its field strength to vanish, it can have non-trivial Wilson lines and torsion fluxes. The aim of this paper is to investigate how these non-trivial backgrounds affect the global anomaly and to find a general formula for the latter. The formula studied here was already sketched roughly in Section 4.4 of \cite{Monnier2011a}.

The derivation in \cite{Monnier2011, Monnier2011a} was not completely rigorous. Several consistency checks of the proposed anomaly formula were presented, one of them being the cancellation of global gravitational anomalies in the cohomological version of type IIB supergravity (see Section \ref{SecIIB}). In this paper, we propose a general formula and we show rigorously that under the assumption that the formula conjectured in \cite{Monnier2011a} is correct, then the general formula holds. The assumption and its validity are discussed in detail in Section \ref{SecHorizLoopVanGaugField}. The derivation of a formula valid in any background is a necessary step toward systematic checks of global gravitational anomaly cancellation in string theory, M-theory and supergravity. In \cite{Monnierb}, we show that the formula derived here implies the cancellation of global gauge and gravitational anomalies for the worldvolume theory of chiral five-branes, when the anomaly inflow from the bulk is taken into account.\\

Apart from the anomaly formula, the main outcome of our investigation is the clarification of several puzzles in the literature on the self-dual field theory. The partition function of the self-dual field involves a theta function whose characteristic $\tilde{\eta}$ has to be specified. Let $M$ be the manifold of dimension $4\ell+2$ on which the self-dual field is defined. The relevant theta function is essentially a Siegel theta function defined on the Kähler space $H^{2\ell+1}(M,\mathbb{R})$, with the symplectic and complex structures given by the wedge product pairing and the Hodge star operator. $H^{2\ell+1}(M,\mathbb{R})$ should be seen here as the space of Wilson lines of the background abelian $2\ell+1$-form gauge field. The theta characteristic is an element of $H^{2\ell+1}(M,\mathbb{R})$, modulo the lattice $\Lambda$ of classes with integral periods. It was shown in \cite{Monnier2011} that $\tilde{\eta}$ could essentially be identified with the torsion component of the integral first Chern class of the anomaly line bundle of the self-dual field theory, implying that $\tilde{\eta}$ can take only discrete values: $\tilde{\eta} \in \frac{1}{2}\Lambda$.

In \cite{Witten:1996hc}, the theta characteristic was determined by integrating a characteristic form on a manifold $W$ bounded by $M \times S^1$. As the integral takes place on a manifold with boundary, such a prescription does not ensure that $\tilde{\eta} \in \frac{1}{2}\Lambda$, or even that it is independent on the boundary data. It was indeed argued in \cite{Moore:2004jv,Moore2004} that the characteristic should depend continuously on the background metric.

In \cite{Witten:1999vg} a different prescription was used to determine $\tilde{\eta}$, this time using the integral of a characteristic class on $W$. While this formulation obviously produces an element of $\frac{1}{2}\Lambda$, its compatibility with the previous prescription was not clear for the reasons mentioned above. Other works like \cite{Belov:2006jd,Monnier2011} took an agnostic viewpoint and left this parameter unspecified.

The theta characteristic $\tilde{\eta}$ governs the global anomaly of the self-dual field theory with respect to the large gauge transformations of the background gauge field. Therefore our general global anomaly formula, when appropriately specialized, provides a formula for $\tilde{\eta}$, which turns out to be closely to \cite{Witten:1996hc}: it coincides with the adiabatic limit of the formula of \cite{Witten:1996hc}. We show that the puzzles mentioned above disappear: $\tilde{\eta}$ indeed belongs to $\frac{1}{2}\Lambda$ and it is metric independent. These properties follow essentially from the fact that the background gauge field, as a source of the self-dual field, has to be exact, and is trivialized by the latter. We also show how to recover a torsion anomaly analogous to the Freed-Witten anomaly, originally discovered in the formalism of \cite{Witten:1999vg}, and how it can be canceled by torsion fluxes.\\

Another outcome is the clarification of technical details concerning the computation of mixed gauge-gravitational global anomalies. Mixed gauge-gravitational anomalies are associated to loops in the quotient $\mathcal{F} = \mathcal{B}/\mathcal{G}$, where $\mathcal{B}$ is the space of background metrics and gauge fields, and $\mathcal{G}$ is the group of diffeomorphisms and gauge transformations. A loop $c \subset \mathcal{F}$ determines an equivalence class of metric and gauge field on each fiber of a mapping torus $M \rightarrow M_c \rightarrow S^1$. The global anomaly is essentially given by the evaluation on $M_c$ of a Chern-Simons functional depending on a metric and a gauge field on $M_c$. One therefore has to extend both the metric and the gauge field from the fibers of $M_c$ to $M_c$ itself. Doing this requires picking a specific gauge, or equivalently lifting the loop $c \subset \mathcal{F}$ to a path $p \subset \mathcal{B}$ (Lemma \ref{LemConstrDiffCocMapTor}). A priori, it is far from clear that the Chern-Simons functional is independent of this choice of lift. Proposition \ref{PropHolIndepPathLift} shows that this is in fact the case. We focus here on the Chern-Simons functional associated with the self-dual field, but this type of arguments should generalize to other chiral theories.
\\

As an example, we apply our formalism to the cohomological version of type IIB supergravity, namely the one in which the fluxes of the Ramond-Ramond gauge fields are classified by integral cohomology. This is a chiral theory in dimension 10 containing a self-dual field, whose global anomalies have not been completely studied. We find that while the pure global gravitational anomaly cancel, as was already noted in \cite{Monnier2011a}, an apparent mixed gauge-gravitational global anomaly is present. This is in fact not surprising, as it is a direct extension of the pure global gauge anomaly which was implicitly present in the original treatment of \cite{Witten:1996hc}. This anomaly vanishes if and only if a certain cobordism group vanishes. Infortunately, the computation of this cobordism group does not seem easy. 
\\

The paper is organized as follows. In Section \ref{SecAnomBundl}, we describe the anomaly bundle of the self-dual field theory \cite{Monnier2011} over the space of metrics and gauge fields modulo diffeomorphisms and gauge transformations. We point out certain crucial properties satisfied by the holonomies of the natural connection defined on the anomaly bundle. In Section \ref{SecHSForm}, we describe a formula for the global anomaly of the self-dual field in an arbitrary background. In Section \ref{SecProofAnForm}, assuming the results of \cite{Monnier2011a}, we show that it correctly reproduces the holonomies of the anomaly bundle. The example of type IIB supergravity is treated in Section \ref{SecIIB}. The beginning of each section contains a more detailed summary of their respective contents.

\section{The anomaly bundle of the self-dual field}

\label{SecAnomBundl}

We start by reviewing the geometrical picture of anomalies in Section \ref{SecGeomIntAnom}. We present the model of abelian gauge fields by (shifted) differential cocycles in Section \ref{SecAbGFAsDC}, which allows us to construct the space of background fields $\mathcal{B}$ of the self-dual field theory in Section \ref{SecSpaBackFields}. The background fields consist of the Riemannian metric of the underlying manifold $M$ and an abelian gauge field sourcing the self-dual field. The group $\mathcal{G}$ of diffeomorphisms and gauge transformations acts on $\mathcal{B}$, as explained in Section \ref{SecActGrpLocTrans}. Section \ref{SecRelModGeom} shows how modular geometry appears in relation to the intermediate Jacobian of $M$. Section \ref{SecAnomBundle} describes the anomaly line bundle of the theory as a pull-back of a theta line bundle. Its connection is constructed in Section \ref{SecConnAnBun}. In Section \ref{SecFactPathHol}, we point out certain properties of the fundamental group of the quotient $\mathcal{B}/\mathcal{G}$ which will ease the proof of the anomaly formula in Section \ref{SecProofAnForm}.

\subsection{The geometric interpretation of anomalies}

\label{SecGeomIntAnom}
 
Let us quickly remind the reader about anomalies and their geometrical interpretation. Anomalies are the breaking of a (global) symmetry of a classical field theory upon quantization. 
In the classical theory, we can associate to the symmetry a Noether current $J$, whose divergence vanishes upon imposing the equations of motion. In the quantum theory, the breaking of the global symmetry is characterized by a non-zero expectation value of the correlation functions involving the divergence of $J$.

In order to compute these correlation functions, and more generally to understand the realization of the global symmetry in the quantum field theory, we can use the background field method. We therefore introduce a background field acting as a source for $J$. 
For gravitational anomalies, associated with a breaking of the diffeomorphism symmetry, the current is the energy-momentum tensor and the background field is the metric of the underlying manifold. For gauge anomalies, the current is the gauge current and the background field is a background gauge field. 

The partition function $Z$ can be seen as a function over the space $\mathcal{B}$ of background fields. The group of local transformations $\mathcal{G}$ associated to the global symmetry group acts on $\mathcal{B}$, and the presence of an anomaly translates in the lack of invariance of $Z$ under the action of $\mathcal{G}$. For instance, the expectation value of the divergence of the current can be obtained by computing the variation of the partition function under the action of an infinitesimal local transformation on $\mathcal{B}$. More generally, a lack of invariance under local transformations connected to the identity is associated to a local anomaly, while a lack of invariance under transformations in the other components of $\mathcal{G}$ signals a global anomaly. If the symmetry under consideration is to be gauged, then consistency clearly requires the partition function to be invariant under the whole group of local transformations; in other words the local and global anomalies have to vanish.

If $Z$ is not invariant under $\mathcal{G}$, it does not push down to a function over the quotient $\mathcal{B}/\mathcal{G}$. However it pushes down in all cases to a section of a line bundle $\mathscr{A}$ over $\mathcal{B}/\mathcal{G}$, the anomaly line bundle (see for instance \cite{Monnier:2012pn} for more details). In all cases at hand, $\mathscr{A}$ comes equipped with a natural connection $\nabla_\mathscr{A}$, and it turns out to be easier to study this connection than the (lack of) invariance of $Z$. The global anomaly, of interest to us in this paper, is the set of holonomies of $\nabla_\mathscr{A}$; it is precisely the obstruction to the geometric trivialization of $\mathscr{A}$. If it vanishes, $Z$ can be seen as the pull-back of an ordinary function over $\mathcal{B}/\mathcal{G}$ and is invariant under the action of $\mathcal{G}$.

Our aim will therefore to give a compact formula for these holonomies. In this section, as a preparation to this task,  we describe the space of background fields $\mathcal{B}$, the action of the diffeomorphism and gauge symmetries on $\mathcal{B}$, the anomaly line bundle $\mathscr{A}$ and its natural connection.

\subsection{Abelian gauge fields as differential cocycles}

\label{SecAbGFAsDC}

We now briefly recall how abelian $p$-form gauge fields can be modeled by shifted differential cocycles \cite{hopkins-2005-70, Freed:2006yc}. Let $M$ be a compact orientable manifold. $C^p(M; \mathbbm{K})$ will denote the space of $\mathbbm{K}$-valued cochains of degree $p$ for $\mathbbm{K} = \mathbbm{Z}$, $\mathbbm{R}$ or $\mathbbm{R}/\mathbbm{Z}$ and $Z^p(M; \mathbbm{K})$ the corresponding space of cocycles. Let $\Omega^p(M)$ be the space of smooth real-valued differential forms of degree $p$ on $M$. A shifted degree $p$ differential cochain is a triplet
\be
\check{A} = (f, h, F) \in C^{p}(M; \mathbbm{R}) \times C^{p-1}(M; \mathbbm{R}) \times \Omega^{p}(M) := \check{C}_\bullet^p(M) \;.
\ee
The component $f$ and $F$ are called respectively the characteristic class and the curvature of the differential cochain. Given $c \in Z^{p}(M; \mathbbm{R}/\mathbbm{Z})$, $\check{A}$ belongs to the group $\check{C}_c^p(M)$ of shifted differential cochains shifted by $c$ if $f = c$ modulo 1. Unshifted differential cochains are the cochains shifted by $c = 0$, for which $f \in C^{p}(M; \mathbbm{Z})$. We define a differential on $\check{C}_\bullet^{p}(M)$ by 
\be
\label{EqDefDiffOnDiffCoch}
d\check{A} = (df, F - f - dh, dF) \;,
\ee
where we see $F$ as a real-valued cochain, obtained via integration of $F$ over $p$-chains, in the expression $F - f - dh$. In \eqref{EqDefDiffOnDiffCoch}, we abuse the notation by using $d$ to denote the differential on cochains, on differential forms and on differential cochains. $\check{Z}_c^{p}(M) = \ker d|_{\check{C}_c^{p}(M)}$ is the space of differential cocycles shifted by $c$. We define two shifted differential cocycles to be gauge equivalent if they differ by the differential of an unshifted cochain with vanishing curvature:
\be
\label{EqActGauDiffCoc}
\check{A} = (f,h,F) \sim \check{A} + d\check{B} = (f + dg, h - dj - g, F)
\ee
for $\check{B} = (g,j,0) \in \check{C}_0^{p-1}(M)$. $\check{B}$ is equivalent to $\check{B}' = (g',j',0)$ if $g'-g = da$ and $j'-j = db-a$, for $a \in C^{p-2}(M;\mathbbm{Z})$ and $b \in C^{p-3}(M;\mathbbm{R})$. We write $\check{\mathcal{G}}^p$ for the group generated by these transformations up to equivalence.  

Remark that $\check{\mathcal{G}}^p$ contains non-trivial \emph{micro gauge transformations} \cite{Diaconescu:2003bm}, which act like the identity on $\check{Z}_c^{p}(M)$. Those are given by differential cochains $\check{B}$ with $g$ representing an integral torsion cocycle and $-j = dg$ a real cocycle trivializing $g$ as a real cocycle. These micro gauge transformations will play an important role in the following (see Section \ref{SecGlobTorsAnom}). One can get a more natural and elegant picture by seeing $\check{Z}_c^{p}(M)$ and $\check{\mathcal{G}}^p$ respectively as the objects and arrows of a groupoid, but this point of view will not be required in what follows.

$\check{H}_c^{p}(M) := \check{Z}_c^{p}(M)/\check{\mathcal{G}}^p$ is the space of differential cohomology classes shifted by $c$. Let us denote by $\Omega^p_{\mathbbm{Z}}(M) \subset \Omega^p(M)$ the subgroup of differential forms of degree $p$ with integral periods. $\check{H}_0^{p}(M)$ fits in the following exact sequence:
\be
\label{EqExSeqDiffChar}
0 \rightarrow \Omega^{p-1}(M)/\Omega^{p-1}_{\mathbbm{Z}}(M) \rightarrow \check{H}_0^{p}(M) \rightarrow H^{p}(M;\mathbbm{Z}) \rightarrow 0 \;.
\ee
The differential cohomology classes in $\Omega^{p-1}(M)/\Omega^{p-1}_{\mathbbm{Z}}(M) \subset \check{H}_0^{p}(M)$ are said to be topologically trivial. Such classes always admit representative cocycles of the form $(0,h_A, dA)$, where $A \in \Omega^{p-1}(M)$ and $h_A$ is the real $p$-cocycle associated to $A$ via integration over $(p-1)$-cycles. $\check{H}_c^{p}(M)$ is a torsor on $\check{H}_0^{p}(M)$. From now on, we will write $\check{Z}^{p}(M)$ and $\check{H}^{p}(M)$ instead of $\check{Z}_0^{p}(M)$ and $\check{H}_0^{p}(M)$ for the groups of unshifted differential cocycles and cohomology classes.

Abelian $p$-form gauge fields with ordinary flux quantization laws can be modeled by unshifted differential cocycles of degree $p + 1$. Given $\check{A} = (f,h,F) \in \check{Z}^{p+1}(M)$, we see the curvature $F$ as the field strength of the abelian gauge field. $\chi := \exp 2\pi i h: Z_p(M) \rightarrow U(1)$ is a smooth differential character \cite{springerlink:10.1007/BFb0075216}. It associates an element of $U(1)$ to each smooth $p$-dimensional cycle $\sigma \subset M$, with the property that if $\sigma = \partial \sigma'$, $\chi(\sigma) = \exp 2\pi i \int_{\sigma'} F$. Physically, $\chi$ encodes the Wilson lines of the gauge field around the $p$-cycles of $M$. $f$ is a refinement to integral cohomology of the de Rahm cohomology class of $F$, encoding extra information only on manifolds displaying torsion cohomology. We refer the reader to \cite{Freed:2006yc} for a more detailed and pedagogical exposition of these facts.

Shifted differential cocycles naturally occur when the fluxes of the gauge field $\check{A}$ satisfy a shifted quantization law, like for instance the C-field of M-theory \cite{Witten:1996md}. Indeed, assume temporarily that $M$ is 11-dimensional and let $p_1(M)$ an integral valued cocycle representing the first Pontryagin class of $M$. As $M$ is spin, $p_1(M)$ can be taken to have even periods and $c := \frac{1}{4}p_1(M)$ mod 1 has half-integral periods. Then a C-field is an element of $\check{Z}_c^{p}(M)$, and its gauge equivalence class is an element of $\check{H}_c^{p}(M)$. One can check that the curvatures of elements of $\check{Z}_c^{p}(M)$ do indeed satisfy the quantization law of the field strength of the C-field described in \cite{Witten:1996md}.

\subsection{The space of background fields}

\label{SecSpaBackFields}

The self-dual field displays gravitational anomalies only in dimensions $4\ell+2$ \cite{AlvarezGaume:1983ig}, so we take $M$ to be a compact orientable manifold of dimension $4\ell+2$. Such a manifold can support a self-dual $2\ell$-form unshifted gauge field $\check{T} = (u, t, U)$, whose field strength $U$ satisfies the self-duality condition $U = i\ast U$. 

The self-dual field will be sourced by a (non self-dual) unshifted abelian $2\ell+1$-form gauge field $\check{A} \in \check{Z}^{2\ell+2}(M)$ \cite{Witten:1996hc}. In the case of the self-dual field of type IIB supergravity, this abelian gauge field is a composite of the Ramond and Neveu-Schwarz 2-form gauge fields. In the case of the M5-brane, $\check{A}$ is a certain restriction \cite{Monnierb} of the M-theory C-field 3-form gauge field to the worldvolume of the brane. We refer the reader to \cite{Witten:1996hc} for a more in-depth discussion of the physics and to Section \ref{SecIIB} below for the case of type IIB supergravity. As we will explain later, the background gauge field should be thought of as a differential cocycle shifted by the Wu class, but this shift is trivial in dimension $4\ell+2$ (see Section \ref{SecWuClass}). Genuinely shifted cocycles will appear when $\check{A}$ will be extended to a manifold of dimension $4\ell+4$.

We will also allow for torsion sources, which are required in order to cancel an anomaly \cite{Witten:1999vg, Belov:2006jd}. Recall that the Freed-Witten anomaly \cite{Freed:1999vc} on the worldsheet of strings bound to D-branes requires the characteristic class of the B-field to be a non-vanishing torsion cohomology class on the worldvolume of D-branes which are not Spin$^c$. Although there is no worldsheet picture in our situation, a similar anomaly can plague the self-dual field theory, requiring the characteristic class $f$ of the gauge field $\check{A}$ to be a non-trivial element of $H^{2\ell+2}_{\rm tors}(M; \mathbbm{Z})$ \cite{Witten:1999vg}. We model this fact by adding a fictitious torsion source for the self-dual field, modeled by an unshifted differential cocycle $\check{S} = (s, r, 0) \in \check{Z}^{2\ell+2}(M)$. Here $s$ is an integral cocycle representing a class in $H^{2\ell+2}_{\rm tors}(M; \mathbbm{Z})$. This anomaly will be studied in Section \ref{SecGlobTorsAnom}, where the cohomology class of $s$ will be determined. In equations, we have 
\be
\label{EqCondSources}
d \check{T} = \check{A} - \check{S}\;, \quad d\check{A} = 0 \;, \quad d\check{S} = 0 \;.
\ee
Remark that it would not be possible to satisfy these equations if $\check{A}$ admitted a non-trivial shift.

We are now ready to describe the space $\mathcal{B}$ of background fields for the self-dual field theory, which comprises the metric of $M$ and the $2\ell+1$ abelian gauge field $\check{A} = (f, h, F)$ sourcing the self-dual field. \eqref{EqCondSources} implies $du = f-s$ so we can take $f = s$ by using the gauge symmetry of the self-dual field. This implies that $F$ is exact. Moreover, the equations of motion of $\check{A}$ impose $d^\dagger F = 0$, so we have in fact $F = 0$. 
We can therefore take the degree $2\ell+1$ cocycle $h$ to be a real cocycle $h_A$ determined by a closed differential form $A$ via integration over $2\ell+1$-chains. The space of background gauge fields therefore consists of differential cocycles of the form
\be
\label{EqParGaugeFields}
\check{A} = (s, h_A + r, 0)
\ee
and is effectively parameterized by $\Omega_{\rm closed}^{2\ell+1}(M)$. In the rest of the paper, we sometimes make a slight abuse of notation and write simply $A$ for $h_A$.

In physics terms, we see that the only possible background gauge field is a flat Wilson line, parameterized by the de Rham cohomology class of $A \in \Omega_{\rm closed}^{2\ell+1}(M)$. 

Let us also remark that possible extra sources $\check{S}'$ for the self-dual field have to couple in the same way as $\check{A}$ does. By redefining $\check{A} \rightarrow \check{A} + \check{S}'$, we are back in the situation described by \eqref{EqCondSources}, provided $d\check{S}' = 0$, which we will assume. This situation could occur for instance in the context of the self-dual field on the M5-brane worldvolume, in case open M2-branes end on it \cite{Strominger:1995ac}.

Let $\mathcal{M}$ be the space of Riemannian metrics on $M$. The full space of background fields is
\be
\label{EqSpaceBGFields}
\mathcal{B} = \Omega^{2\ell+1}_{\rm closed}(M) \times \mathcal{M}\;.
\ee

\subsection{The action of the group of local transformations}

\label{SecActGrpLocTrans}

We now proceed to describe the group of local transformations $\mathcal{G}$ and its action on $\mathcal{B}$. Let us consider first the subgroup of diffeomorphisms. We need to consider only the subgroup $\mathcal{G}_{\rm diff} \subset {\rm Diff}(M)$ of diffeomorphisms leaving invariant the data and constraints used to define the theory. Therefore we require the elements of $\mathcal{G}_{\rm diff}$ to leave the differential cohomology class $[\check{S}] \in \check{H}^{2\ell+2}(M)$ invariant. In case extra sources $\check{S}'$ are present, $\mathcal{G}_{\rm diff}$ should leave the differential cocycle $\check{S}'$ invariant. We will see in Sections \ref{SecAnomBundle} and \ref{SecPhysCharSD} (see also \cite{Witten:1996hc, Witten:1999vg, Belov:2006jd}) that an extra piece of data is required in order to define the self-dual field theory, namely a theta characteristic $\tilde{\eta}$. The data encoded by $\tilde{\eta}$ is a symmetric line bundle over the torus $H^{2\ell+1}(M; \mathbbm{R})/\Lambda$, where $\Lambda := H^{2\ell+1}_\mathbbm{Z}(M; \mathbbm{R})$ is the group of de Rham cohomology classes with integral periods. $\mathcal{G}_{\rm diff}$ acts on $H^{2\ell+1}(M; \mathbbm{R})/\Lambda$ by pull-backs and we require this action to preserve $\tilde{\eta}$ as well. The diffeomorphisms outside $\mathcal{G}_{\rm diff}$ should be seen as relating self-dual field theories in non-equivalent backgrounds. Non-trivial phases picked by the partition function under such diffeomorphisms are not obstructions to pushing it down to the space of classes of equivalent backgrounds, and therefore are not anomalies.

Let $\mathcal{G}_{\rm gauge}$ be the group of gauge transformations of $\check{A}$. Naively, one might identify $\mathcal{G}_{\rm gauge}$ with $\mathcal{G}_{\rm naive} := \Omega^{2\ell+1}_{\mathbbm{Z}}(M)$, the group of differential forms with integral periods. $\mathcal{G}_{\rm naive}$ is the extension
\be
0 \rightarrow \Omega^{2\ell+1}_{\rm exact}(M) \rightarrow \mathcal{G}_{\rm naive} \rightarrow H^{2\ell+1}_\mathbbm{Z}(M,\mathbb{R}) \rightarrow 0 \;,
\ee
where $H^{2\ell+1}_\mathbbm{Z}(M,\mathbb{R})$ denotes de Rham cohomology classes with integral periods. The exact forms are generating the gauge transformations connected to the identity and the group of connected components $H^{2\ell+1}_\mathbbm{Z}(M,\mathbb{R})$ classifies the homotopy types of large gauge transformations. However, we saw in Section \ref{SecAbGFAsDC} that the group of gauge transformations is $\mathcal{G}_{\rm gauge} = \check{\mathcal{G}}^{2\ell+2}$. In addition to the transformations above, $\mathcal{G}_{\rm gauge}$ also contains the micro gauge transformations and fits in the exact sequence
\be
0 \rightarrow \Omega^{2\ell+1}_{\rm exact}(M) \rightarrow \mathcal{G}_{\rm gauge} \rightarrow H^{2\ell+1}(M; \mathbbm{Z}) \rightarrow 0 \;.
\ee
The projection $H^{2\ell+1}(M; \mathbbm{Z}) \rightarrow H^{2\ell+1}_\mathbbm{Z}(M; \mathbbm{R})$ induces a projection 
\be
\label{EqProjGGaugeGNaive}
\mathcal{G}_{\rm gauge} \rightarrow \Omega^{2\ell+1}_{\mathbbm{Z}}(M): \check{B} \mapsto B \;.
\ee

Each diffeomorphism $\phi \in \mathcal{G}_{\rm diff}$ acts as an automorphism of $\mathcal{G}_{\rm gauge}$ by pull-back: $\check{B} \rightarrow \phi^\ast \check{B}$ for $\check{B} \in \mathcal{G}_{\rm gauge}$. $\mathcal{G}$ is the semi-direct product $\mathcal{G} = \mathcal{G}_{\rm gauge} \rtimes \mathcal{G}_{\rm diff}$, with multiplication given explicitly by
\be
(\check{B}_2, \phi_2) \cdot (\check{B}_1, \phi_1) = (\phi_2^\ast(\check{B}_1) +  \check{B}_2 , \phi_2 \circ \phi_1) \;, \quad (\check{B}_i, \phi_i) \in \mathcal{G} \,.
\ee
We define the action of $(\check{B},\phi) \in \mathcal{G}$ on $(A, g) \in \mathcal{B}$ by
\be
\label{EqActTransBGFields}
(\check{B},\phi).(A, g) \rightarrow (\phi^\ast A + B, \phi^\ast g) \,.
\ee
Here, $B \in \Omega_{\mathbbm{Z}}^{2\ell+1}(M)$ is the image of $\check{B}$ under \eqref{EqProjGGaugeGNaive}. One can check that the action of the diffeomorphisms on $A$ is consistent with the parameterization \eqref{EqParGaugeFields}:
\be
\phi^\ast \check{A} = (\phi^\ast(s), \phi^\ast(h_A) + \phi^\ast(r),0) \simeq (s, \phi^\ast(h_A) + r, 0) \;,
\ee
where we performed a gauge transformation mapping $\phi^\ast(\check{S})$ back to $\check{S}$. (Such a gauge transformation exists by our hypothesis that $[\check{S}] \in \check{H}^{2\ell+2}(M)$ is invariant under the action of $\mathcal{G}_{\rm diff}$.)

The quotient $\mathcal{F} := \mathcal{B}/\mathcal{G}$ is the space of background fields modulo gauge transformations and diffeomorphisms. We write $\pi_{\mathcal{F}}: \mathcal{B} \rightarrow \mathcal{F}$ for the associated projection. $\mathcal{F}$ is in general singular. For instance if a metric $g$ admits a group $\mathcal{G}_g$ of isometries satisfying the conditions above, $(0,g)$ is a fixed point of $\mathcal{G}_g$. By a line bundle over $\mathcal{F}$, we will always understand a $\mathcal{G}$-equivariant smooth line bundle over $\mathcal{B}$. This definition will allow us to bypass the subtleties associated with the singularities of $\mathcal{F}$. We will also find it useful in the following to consider the quotient $\mathcal{E} = (M \times \mathcal{B})/\mathcal{G}$, where $\mathcal{G}_{\rm diff}$ and $\mathcal{G}_{\rm gauge}$ act respectively by diffeomorphisms and trivially on $M$. Away from the singularities of $\mathcal{F}$, $\mathcal{E}$ is a fiber bundle over $\mathcal{F}$ whose fibers are diffeomorphic to $M$. 

$M \times \mathcal{B}$ admits a canonical horizontal distribution, given by the kernel of the tangent map of the projection on $M$. This distribution pushes down to a distinguished horizontal distribution in the bundle $\mathcal{E} \rightarrow \mathcal{F}$.

Having described the space of background fields, we proceed to construct the anomaly bundle of the self-dual field theory, which is a line bundle over $\mathcal{F}$. To this end, we need a few extra constructions.

\subsection{Relation to modular geometry}

\label{SecRelModGeom}

Consider $\mathcal{B}$ as a trivial fiber bundle over $\mathcal{M}$. Remark that the vertical tangent space at any point is isomorphic to $\Omega_{\rm closed}^{2\ell+1}(M)$.  
A metric $g \in \mathcal{M}$ determines a Hodge star operator, which squares to $-1$ on $\Omega_{\rm closed}^{2\ell+1}(M)$. The vertical tangent space of $\mathcal{B} \rightarrow \mathcal{M}$ also carries an antisymmetric bilinear form, given by the wedge product pairing on $\Omega^{2\ell+1}(M)$. The Hodge star operator restricts to a complex structure on the space of harmonic forms $\mathcal{H}^{2\ell+1}(M; \mathbbm{R}) \subset \Omega^{2\ell+1}(M)$, and the wedge product pairing restricts to a symplectic form. By the Hodge theorem they induce analogous structures on the cohomology group $H^{2\ell+1}(M; \mathbbm{R})$. Let $2n$ be the dimension of $H^{2\ell+1}(M; \mathbbm{R})$. The complex structure on $H^{2\ell+1}(M; \mathbbm{R})$ induced by the metric can be parameterized relative to a given Darboux basis by a complex symmetric $n \times n$ matrix $\tau$ with a positive definite imaginary part (see Section 3.1. of \cite{Monnier2011a} for more details). We write $\mathcal{C}$ for the space of such matrices. As a subspace of a vector space over $\mathbbm{C}$, $\mathcal{C}$ carries a complex structure. As the Hodge star operator was obtained from the metric on $M$, this defines a map $\pi_\mathcal{M}:\mathcal{M} \rightarrow \mathcal{C}$. Combining it with the projection on de Rahm cohomology $\pi_{\rm dR}: \Omega_{\rm closed}^{2\ell+1}(M) \rightarrow H^{2\ell+1}(M; \mathbbm{R})$, we get a map
\be
\pi_\mathcal{B} = \pi_{\rm dR} \times \pi_\mathcal{M}: \mathcal{B} \rightarrow \mathcal{B}_0 := H^{2\ell+1}(M; \mathbbm{R}) \times \mathcal{C} \;.
\ee

As a concrete example, assume that $\ell = 0$ and $M$ is a torus. $\tau$ is then the modular parameter of the torus and $\mathcal{C}$ is the Teichmüller space of the torus, which is the complex upper half-plane. Recall that the Teichmüller space of a torus can be pictured as the space of flat metrics on the torus (of a fixed volume). The map $\pi_\mathcal{M}$ sends a metric on the torus to the unique flat metric in its conformal class, up to homeomorphisms isotopic to the identity.

\subsection{The anomaly line bundle}

\label{SecAnomBundle}

Siegel theta functions $\theta^\eta$ over $\mathcal{B}_0$ are holomorphic functions parameterized by a characteristic $\eta$. $\eta$ can be seen as an element of $H^{2\ell+1}(M; \mathbbm{R})/\Lambda$, where $\Lambda := H^{2\ell+1}_{\mathbbm{Z}}(M; \mathbbm{R})$ is the lattice of de Rahm cohomology classes with integral periods. This identification is however not canonical and relies on a Lagrangian decomposition of $\Lambda$. The group of symplectic transformations of $\Lambda$, $Sp(2n,\mathbbm{Z})$, acts on the characteristics (see Section 3.1 of \cite{Monnier2011a} for explicit formulas). Let $\Gamma_\eta$ be the subgroup of $Sp(2n,\mathbbm{Z})$ leaving fixed $\eta$. The theta function transforms equivariantly under the action of the group $\Lambda \rtimes \Gamma_\eta$, hence it can be seen as the pull-back of the section of a holomorphic line bundle
\be
\mathscr{C}^\eta_{\mathcal{A}} \rightarrow \mathcal{A}^\eta := \mathcal{B}_0/(\Lambda \rtimes \Gamma_\eta) \;,
\ee
the theta line bundle. Remark that $\mathcal{A}^\eta$ is the universal abelian variety over $\mathcal{C}/\Gamma_\eta$.

The group of diffeomorphisms $\mathcal{G}_{\rm diff}$ acts on the space of harmonic $2\ell+1$ forms $\mathcal{H}^{2\ell+1}(M; \mathbbm{R})$ by pull-backs, and this induces an action on the cohomology group $H^{2\ell+1}(M; \mathbbm{R})$. This action preserves the symplectic form and the complex structure, but in general changes the Darboux basis. It factorizes through a subgroup $\Gamma_{\rm diff}$ of $Sp(2n,\mathbbm{Z})$. As the parameterization $\tau$ depends on the Darboux basis, there is an action of $\mathcal{G}_{\rm diff}$ on $\mathcal{C}$, factorizing as well through $\Gamma_{\rm diff}$. In the case of the two dimensional torus, this is the well-known action of the mapping class group $SL(2,\mathbbm{Z})$ on the complex upper half-plane. Let us write $\mathcal{G}_0 := \Lambda \rtimes \Gamma_{{\rm diff}}$. We obtain from $\pi_\mathcal{B}$ a map
\be
\pi: \mathcal{F} \rightarrow  \mathcal{F}_0 := \mathcal{B}_0/\mathcal{G}_0 \;.
\ee
As we required that $\mathcal{G}_{\rm diff}$ leaves the theta characteristic $\tilde{\eta}$ invariant, $\mathcal{G}_0 \subset \Lambda \rtimes \Gamma_{\tilde{\eta}}$ and $\mathcal{F}_0$ is a covering of $\mathcal{A}^{\tilde{\eta}}$. The line bundle $\mathscr{C}^{\tilde{\eta}}_{\mathcal{A}}$ can therefore be pulled back, first to $\mathcal{F}_0$, and then to $\mathcal{F}$. We write $\mathscr{C}^{\tilde{\eta}}$ for the pulled-back bundle over $\mathcal{F}$.

In \cite{Monnier2011} (with some refined arguments presented in Appendix \ref{AppMoreAnBun}), we showed that the anomaly line bundle $\mathscr{A}$ of the self-dual field theory is given by
\be
\label{EqTopClassAnBun}
\mathscr{A} = \mathscr{C}^{\tilde{\eta}} \;.
\ee

\subsection{The connection on the anomaly bundle}

\label{SecConnAnBun}

Having defined the anomaly bundle $\mathscr{A}$, it remains for us to define the anomaly connection $\nabla_\mathscr{A}$. We first describe a natural connection on $\mathscr{C}^\eta_{\mathcal{A}}$.

$\mathscr{C}^\eta_{\mathcal{A}}$ is a holomorphic line bundle over $\mathcal{A}$. It carries a natural hermitian structure, which we describe now
. Let $\{a^i, b_i\}_{i = 1}^{n}$ be Darboux coordinates on $H^{2\ell+1}(M; \mathbbm{R})$. In complex structure $\tau$, the holomorphic coordinates are $z_i = \tau_{ij}a^j + b_i$. Let us furthermore write $h^{ij} = i((\tau - \bar{\tau})^{-1})^{ij}$. The hermitian structure on $\mathscr{C}^\eta_{\mathcal{A}}$ is given by
\be
\label{EqDefKahlPot}
(s_1, s_2)_{\mathscr{C}^\eta_{\mathcal{A}}} = s_1 \bar{s}_2 \exp -K \;,
\ee 
with
\be
\label{EqKahlPotThetBun}
K = \pi (z_i - \bar{z}_i) h^{ij} (z_j - \bar{z_j}) - \frac{1}{2} \ln \det h \;.
\ee
We will call $K$ the Kähler potential. We derive this formula from the theta multiplier in Appendix \ref{AppDerKahlPot}. For any holomorphic line bundle, there exists a unique holomorphic connection compatible with a given hermitian structure. We can obtain an explicit formula for $\nabla_{\mathscr{C}^\eta_{\mathcal{A}}}$ in the trivialization obtained by pulling back $\mathscr{C}^\eta_{\mathcal{A}}$ to $\mathcal{B}_0$. In terms of $K$, it reads:
\be
\nabla_{\mathscr{C}^\eta_{\mathcal{A}}} = d - \partial K \;,
\ee
where $d$ and $\partial$ are the differential and the Dolbeault operator on $\mathcal{B}_0$. Its curvature can be computed explicitly (see Appendix \ref{AppDerKahlPot}) and is given by
\be
\label{EqCurvThetaBun}
R_{\mathscr{C}^\eta_{\mathcal{A}}} = \partial \bar{\partial} K = -2\pi i da^j \wedge db_j + \frac{1}{2} h^{ik} h^{jl} d\tau_{ij} \wedge d\bar{\tau}_{kl} \;.
\ee
The connection described above pulls back to a connection $\nabla_{\mathscr{C}^\eta}$ on the line bundle $\mathscr{C}^\eta \rightarrow \mathcal{F}$. 

The anomaly bundle $\mathscr{A}$ is isomorphic to $\mathscr{C}^\eta$, but not as a line bundle with connection. The connection $\nabla_{\mathscr{A}}$ on $\mathscr{A}$ differs from $\nabla_{\mathscr{C}^\eta}$ by a one-form $\omega$ pulled back along the map $\mathcal{F} \rightarrow \mathcal{M}/\mathcal{G}_{\rm diff}$. $\omega$ can be described in terms of the Cheeger half-torsion  of the metric \cite{Branson2005}, which is a certain combination of zeta regularized determinants of Laplacians acting on differential forms. The information summarized above about this one-form will be sufficient for our purpose, but the interested reader can find details in \cite{Monnier:2010ww, Monnier2011}.

The curvature of the pullback of $\nabla_{\mathscr{A}}$ to $\mathcal{B}$ is given by
\be
\label{EqCurvAnBun}
R_{\mathscr{A}}  = -2\pi i da^j \wedge db_j + \frac{1}{2} h^{ik} h^{jl} d\tau_{ij} d\bar{\tau}_{kl} + d\omega \;.
\ee
Remark that with respect to the decomposition of $\mathcal{B} = \Omega_{\rm closed}^{2\ell+1}(M) \times \mathcal{M}$, $R_{\mathscr{A}}$ is diagonal. More precisely we have $R_{\mathscr{A}} = R_\Omega + R_\mathcal{M}$, where
\be
\label{EqDefROmega}
R_\Omega = -2\pi i da^j \wedge db_j
\ee
is a two-form pulled back from $\Omega^{2\ell+1}(M)$ and 
\be
R_\mathcal{M} = \frac{1}{2} h^{ik} h^{jl} d\tau_{ij} d\bar{\tau}_{kl} + d\omega
\ee
is a two-form pulled back from $\mathcal{M}$. This property will be crucial in our arguments below.

\subsection{Factorization of paths and holonomies}

\label{SecFactPathHol}

Let $I = [0,1]$. Any path $p : I \rightarrow \mathcal{B}$ can be decomposed into a pair of paths, $p = (p_1, p_2)$ with $p_1 : I \rightarrow \Omega^{2\ell+1}(M)$ and $p_2 : I \rightarrow \mathcal{M}$. The fact that the curvature $R_{\mathscr{A}}$ takes a diagonal form implies that the holonomy of $\nabla_{\mathscr{A}}$ along $p$ is the product of the holonomies along $p_1$ and $p_2$. Moreover, independent reparameterizations of $p_1$ and $p_2$ do not change the holonomies. 

Consider now a loop in $\mathcal{F}$. It can be lifted to an open path $p$ in $\mathcal{B}$, which we can decompose as above: $p = (p_1, p_2)$. The condition that $p$ comes from a loop in $\mathcal{F}$ reads $p(1) = (\check{B},\phi) \cdot p(0)$ for some $(\check{B}, \phi) \in \mathcal{G}$. In terms of $p_1$ and $p_2$, the condition reads
\be
\label{EqCondLoopLift}
p_2(1) = \phi^\ast(p_2(0)) \;, \quad p_1(1) = \phi^\ast(p_1(0)) + B \;.
\ee
We would like to single out two types of loops in $\mathcal{F}$ which will be useful to us in the following. We will call a loop \emph{vertical} if $p_2$ is a constant loop. Vertical loops are contained in a single fiber of $\mathcal{F} \rightarrow \mathcal{M}/\mathcal{G}_{\rm diff}$ and $\phi$ can be taken to be the identity diffeomorphism. We will call a loop \emph{horizontal} if it lifts to a path in $\mathcal{B}$ of the form
\be
\label{EqHorizLoop}
p(t) = \big(t \phi^\ast(A) + (1-t)A, p_2(t) \big) \;,
\ee
where $A \in \Omega^{2\ell+1}(M)$ and $p_2$ is the lift of a loop in $\mathcal{M}/\mathcal{G}_{\rm diff}$. The element $B \in \Omega^{2\ell+1}_{\mathbbm{Z}}(M)$ in \eqref{EqCondLoopLift} vanishes for horizontal loops. Nevertheless paths lifting horizontal loops are in general not constant along the fibers of $\mathcal{F}$. A horizontal loop with vanishing gauge field is a horizontal loop with $A = 0$, and is constant along the fibers.

\begin{proposition}
\label{PropHomDecLoops}
Any loop in $\mathcal{F}$ is homotopic to the composition of a horizontal and a vertical loop.
\end{proposition}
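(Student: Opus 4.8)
The plan is to exploit the fact that the total space $\mathcal{B} = \Omega^{2\ell+1}_{\rm closed}(M) \times \mathcal{M}$ of \eqref{EqSpaceBGFields} is contractible: its first factor is a vector space and its second factor is the convex cone of Riemannian metrics, so straight-line homotopies show that $\mathcal{B}$ is simply connected. Consequently any two paths in $\mathcal{B}$ sharing both endpoints are homotopic rel endpoints, and projecting such a homotopy down to $\mathcal{F}$ turns it into a based homotopy of loops. All the $\pi_1$-content of $\mathcal{F}$ is therefore carried by the identifying element $(\check{B}, \phi) \in \mathcal{G}$ attached to a lift. So it suffices, given a loop $\gamma$ with lift $p = (p_1, p_2)$ satisfying \eqref{EqCondLoopLift}, to write down a concatenation of a horizontal-loop lift and a vertical-loop lift that starts at $p(0)$ and ends at the \emph{same} point $p(1) \in \mathcal{B}$.

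I would construct the two pieces explicitly, setting $A := p_1(0)$. For the horizontal piece, take $q(t) = \big(t\,\phi^\ast(A) + (1-t)A,\ p_2(t)\big)$, which is precisely the form \eqref{EqHorizLoop}; since $p_2(1) = \phi^\ast p_2(0)$ the metric part descends to a loop in $\mathcal{M}/\mathcal{G}_{\rm diff}$, and its endpoint condition \eqref{EqCondLoopLift} holds with $B = 0$, so $q$ lifts a genuine horizontal loop from $p(0)$ to $(\phi^\ast A, \phi^\ast p_2(0))$. For the vertical piece, take $r(t) = (\phi^\ast A + tB,\ \phi^\ast p_2(0))$; here $p_2$ is frozen at $\phi^\ast p_2(0)$ and the identifying diffeomorphism is the identity, so $r$ lifts a genuine vertical loop. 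A direct check gives $r(0) = q(1)$ and, using $p_1(1) = \phi^\ast A + B$ and $p_2(1) = \phi^\ast p_2(0)$, also $r(1) = p(1)$. Because both $q(1)$ and $p(1)$ project to the base point $\pi_\mathcal{F}(p(0))$, the projections $\gamma_h := \pi_\mathcal{F} \circ q$ and $\gamma_v := \pi_\mathcal{F} \circ r$ are loops at a common base point, and the lift of the composed loop $\gamma_h \cdot \gamma_v$ starting at $p(0)$ is exactly the concatenation $q \ast r$.

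It then remains to compare $q \ast r$ with $p$. Both are paths in $\mathcal{B}$ from $p(0)$ to $p(1)$, so by simple connectedness of $\mathcal{B}$ they are homotopic rel endpoints; projecting the homotopy to $\mathcal{F}$ yields a based homotopy of loops from $\gamma$ to $\gamma_h \cdot \gamma_v$, which is the claim. I do not expect a serious obstacle here: the conceptual engine is just the contractibility of $\mathcal{B}$, and the only thing requiring care is arranging the split so that the two lifts concatenate into a path with the correct endpoints — in particular choosing the base metric point of the vertical loop to be $\phi^\ast p_2(0)$ rather than $p_2(0)$, so that the horizontal piece can absorb all of the metric motion together with the diffeomorphism $\phi$, leaving the vertical piece to absorb only the lattice shift $B$.
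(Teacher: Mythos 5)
Your proof is correct, but it establishes the homotopy by a genuinely different mechanism than the paper. The decomposition itself is the same: your horizontal lift $q(t) = \big(t\,\phi^\ast(A) + (1-t)A,\ p_2(t)\big)$ and vertical lift $r(t) = (\phi^\ast A + tB,\ \phi^\ast p_2(0))$ are exactly the two boundary edges $p_S(s,0)$ and $p_S(1,t)$ of the simplex that the paper maps into $\mathcal{B}$. The difference is in how the homotopy between $p$ and $q \ast r$ is produced. The paper constructs it explicitly: it maps the simplex $S = \{(s,t) : s \geq t\}$ into $\mathcal{B}$ by $p_S(s,t) = (q(s,t), p_2(s))$, interpolating in the gauge-field direction with a function $\bar{\rho}$ that is discontinuous at two corners (continuity of $p_S$ being rescued precisely by the gluing conditions \eqref{EqCondLoopLift}), while the metric component stays pinned to the given path $p_2$. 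You instead note that $\mathcal{B} = \Omega^{2\ell+1}_{\rm closed}(M) \times \mathcal{M}$ is convex, hence simply connected, so once $q \ast r$ shares both endpoints with $p$ any rel-endpoints homotopy (e.g.\ the straight-line one) exists automatically, and projecting it by the continuous map $\pi_\mathcal{F}$ gives the based homotopy of loops; this is sound, and the paper itself records the contractibility of $\mathcal{B}$ in Appendix \ref{AppMoreAnBun}. Your route is shorter and sidesteps the delicate continuity check at the corners. What the paper's explicit construction buys in exchange is independence from the topology of $\mathcal{M}$: its homotopy deforms only the gauge-field component, which lives in a vector space, and never moves the metric component off the path $p_2$, so it would survive in variants where $\mathcal{M}$ is replaced by a constrained, possibly non-contractible, subspace of metrics --- a situation the paper explicitly contemplates (Appendix \ref{AppMoreAnBun}). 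For the proposition as stated, where $\mathcal{M}$ is the full convex space of Riemannian metrics, both arguments are complete.
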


\begin{proof}
Let $p$ be a path in $\mathcal{B}$ lifting a loop in $\mathcal{F}$, satisfying \eqref{EqCondLoopLift}. To prove the proposition, we construct an explicit homotopy between $p$ and the composition of the lifts of a horizontal and a vertical loop. 

Consider the simplex 
\be
S = \{(s,t) \in [0,1] \times [0,1] | s \geq t\} \;. 
\ee
Let $\rho$ be a smooth function on the interior of $S$ tending to $0$ on the boundary components with $s = 1$ and $t = 0$ and to $1$ on the boundary component $s = t$. For instance we can take $\rho(s,t) = \log(s)/\log(t)$. $\rho$ can be continued to a smooth function $\bar{\rho}$ over $S$, except for unavoidable discontinuities at $(0,0)$ and $(1,1)$. 

Consider now the continuous map from $S$ into $\mathcal{B}$ given by $p_S(s,t) = (q(s,t), p_2(s))$, with
\be
q(s,t) = \big(1-\bar{\rho}(s,t)\big)\big(s \phi^\ast(p_1(0)) + (1-s) p_1(0) + t B\big) + \bar{\rho}(s,t) p_1(s) \;.
\ee
\eqref{EqCondLoopLift} implies that although $\bar{\rho}$ is not continuous at $(0,0)$ and $(1,1)$, $q$ and $p_S$ are. But 
\be
p_S(s,0) = \big(s \phi^\ast(p_1(0)) + (1-s) p_1(0), p_2(s)\big)
\ee 
is the lift of a horizontal loop, 
\be
p_S(1,t) = \big(\phi^\ast(p_1(0)) + tB, p_2(1))
\ee 
is the lift of a vertical loop and $p_S(t,t)$ coincides with $p(t)$. 
\end{proof}

\section{A formula for the holonomies of $\nabla_\mathscr{A}$}

\label{SecHSForm}

In this section, we describe a holonomy formula associating to each loop in $\mathcal{F}$ an element of $U(1)$. We start in Section \ref{SecWuClass} by reviewing the Wu class and defining differential cocycles shifted by the Wu class. In Section \ref{SecFormLiftAdLim}, we introduce the adiabatic limit for fiber bundles with base $B \subset \mathcal{B}$ and discuss the behavior of characteristic forms associated to the metric in this limit. In Section \ref{SecLoopMapTor}, we show how to associate to any loop $c$ in $\mathcal{F}$ a mapping torus $M_c$ endowed with a metric and a shifted differential cocycle. We then extend this data to a manifold $W$ bounded by the mapping torus in Section \ref{SecBoundMan}. This allows us to define the holonomy formula in Section \ref{SecHolForm} as the integral of a certain top form on $W$. We show that the holonomy formula is independent of the arbitrary choices involved in the construction of $W$ and the top form. We show in Section \ref{SecGlobTorsAnom} that the torsion anomaly identified in \cite{Witten:1999vg} is correctly reproduced in this formalism. We also explain in Section \ref{SecPhysCharSD} how to determine the theta characteristic $\tilde{\eta}$, which appeared in the construction of the anomaly line bundle in Section \ref{SecAnomBundle}.

\subsection{Differential cocycles shifted by the Wu class}

\label{SecWuClass}

In this section, $X$ is a manifold of arbitrary dimension $d$. Recall \cite{opac-b1077949} that the Wu class $\nu \in H^\bullet(X; \mathbbm{Z}_2)$ is a characteristic class which can be expressed in terms of the Stiefel-Whitney classes on $X$. Writing $\nu_k$ for its degree $k$ component, the Wu class has the property that
\be
\nu_k \cup x = {\rm Sq}^k(x) 
\ee
for all $x \in H^{d-k}(X; \mathbbm{Z}_2)$, where Sq is the Steenrod square operation. In particular, $\nu_k = 0$ on a manifold of dimension $d$ if $k > d/2$, and if $d$ is even,
\be
\label{EqPropWuClassMidDeg}
\nu_{d/2} \cup x = x \cup x 
\ee
for $x \in H^{d/2}(X; \mathbbm{Z}_2)$.

Under the inclusion $\mathbbm{Z}_2 \subset \mathbbm{R}/\mathbbm{Z}$, any $\mathbbm{Z}_2$-cocycle representative of $\nu_p$ determines an element $\lambda \in Z^p(X;\mathbbm{R}/\mathbbm{Z})$. We say that $\check{Z}_{\lambda}^p(X)$ is the space of differential cocycles shifted by the Wu class. $\check{H}_{\lambda}^p(X)$ does not depend on the choice of representing cocycle.

Remark that if $\check{A}= (f,h,F) \in \check{Z}_{\lambda}^p(X)$, then $2F$ has integral periods. If $C$ is a cycle, $\int_C 2F$ is even if $\langle \nu, [C] \rangle = 0$ and odd if $\langle \nu, [C] \rangle = 1$, where the brackets denote the pairing between homology and cohomology. We will call \emph{form lifts of the Wu class} the closed differential forms with integral periods sharing this property. Similarly, we call \emph{integral lifts of the Wu class} the classes in $H^p(X;\mathbbm{Z})$ reducing modulo 2 to the Wu class. (Assuming that $\nu$ is in the image of the reduction modulo 2 map.) 
If the dimension $d$ of $X$ is even and $p = d/2$, the compatibility of the wedge product pairing on closed forms and of the cup product pairing on integral cohomology, combined with \eqref{EqPropWuClassMidDeg}, implies that $2F$ is a characteristic element for the wedge product pairing on the space of forms with integral periods; namely
\be
\label{EqDefCharEl}
\int_X G \wedge G = \int_X G \wedge 2F \quad {\rm mod} \; 2
\ee
for any closed form with integral periods $G \in \Omega^{d/2}_\mathbbm{Z}(X)$. Remark that for manifolds of dimension less than $2p$, $\nu_p$ vanishes. In this case, $\check{Z}^p_\lambda(X) = \check{Z}^p(X)$ and $F$ has integral periods.

From now on, we go back to considering a manifold $M$ of dimension $4\ell+2$. A \emph{lift of the Wu class} will always mean a lift of the Wu class of degree $2\ell+2$.

\subsection{The adiabatic limit}

\label{SecFormLiftAdLim}

Recall that $\mathcal{E} = (M \times \mathcal{B})/\mathcal{G}$ is a fiber bundle over $\mathcal{F}$ whose fibers are diffeomorphic to $M$. Let $B \subset \mathcal{F}$ be a finite-dimensional smooth submanifold. We can restrict $\mathcal{E}$ to $B$ and obtain fiber bundle $E \rightarrow B$ with fiber $M$. The inclusion of the fiber over the point $b \in B$ will always be denoted by $i_b: M \rightarrow E$, and the image of $i_b$ will be written $M_b$. The horizontal distribution in $\mathcal{E}$ restricts to a horizontal distribution in $E$. Any tangent vector in $E$ decomposes uniquely into a horizontal part, belonging to the horizontal distribution, and a vertical part, which is tangent to a fiber. Dually, this induces vertical and horizontal subspaces in $\Omega^\bullet(E)$. A \emph{horizontal form} is the pull-back of a form on $B$, while a \emph{vertical form} vanishes when contracted with at least one horizontal vector. 

The projection $E \rightarrow B$ determines a metric on each fiber of $E$, i.e. a metric $g_V$ on the vertical tangent bundle $T_VE$. Pick an arbitrary smooth metric $g_B$ on $B$, and construct the metric
\be
\label{EqMetFibInF}
g_E = g_V \oplus g_B/\epsilon  \;.
\ee 
for $\epsilon \in \mathbbm{R}_+$. The direct sum above refers to the decomposition $TE$ into vertical and horizontal vectors. $g_E$ obviously restricts to $g_V$ on the vertical tangent bundle. 

We now consider the \emph{adiabatic limit} \cite{Witten:1985xe} $\epsilon \rightarrow 0$. In this limit, the size of $B$ diverges. We have 
\begin{proposition}
\label{PropCharFormIndepgB}
Assume $g_B$ is flat. Then the characteristic forms constructed from the metric $g_E$ are independent of $g_B$ in the adiabatic limit. 
\end{proposition}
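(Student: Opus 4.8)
The plan is to analyze the behavior of the Levi-Civita connection of $g_E = g_V \oplus g_B/\epsilon$ as $\epsilon \to 0$ and to show that its curvature converges to a block-diagonal form with respect to the splitting $TE = T_V E \oplus T_H E$, the vertical block being independent of $g_B$ and the horizontal block being the pullback of the base curvature, which vanishes when $g_B$ is flat.

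First I would choose a local frame adapted to the submersion: an orthonormal frame $\{f_\alpha\}$ of $(T_V E, g_V)$ together with horizontal lifts $\{e_i\}$ of a $g_B$-orthonormal frame on $B$, so that $\{f_\alpha, \sqrt{\epsilon}\, e_i\}$ is $g_E$-orthonormal. Expressing the Levi-Civita connection of $g_E$ in this rescaled frame via the Koszul formula, the connection coefficients organize into blocks according to vertical/horizontal type, and their dependence on $\epsilon$ is governed by the O'Neill tensors $T$ (the second fundamental form of the fibers) and $A$ (the obstruction to integrability of the horizontal distribution). A direct inspection of the Koszul formula shows that the mixed vertical-horizontal coefficients carry positive powers of $\epsilon$, while the purely vertical block is in fact independent of $\epsilon$ and equals the natural connection $\nabla^{T_V E}$ on the vertical tangent bundle, and the purely horizontal block converges to the pullback $\pi^* \nabla^B$ of the Levi-Civita connection of $(B, g_B)$. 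Crucially, $\nabla^{T_V E}$ is built solely from $g_V$ and the fixed canonical horizontal distribution of $\mathcal{E}$, and hence does not depend on the base metric $g_B$.

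Consequently, in the adiabatic limit the curvature $R_{g_E}$ converges to the block-diagonal form $R^{T_V E} \oplus \pi^* R^B$, where $R^{T_V E}$ and $R^B$ are the curvatures of $\nabla^{T_V E}$ and $\nabla^B$. Since every characteristic form is the evaluation of an invariant polynomial on the curvature, and such evaluation is continuous (indeed polynomial) in the curvature, the characteristic forms of $g_E$ converge to the corresponding characteristic forms of $R^{T_V E} \oplus \pi^* R^B$. Now assume $g_B$ is flat: then $R^B = 0$, and the multiplicativity (Whitney-sum) property of characteristic forms reduces the evaluation on $R^{T_V E} \oplus 0$ to the characteristic form of $R^{T_V E}$ alone. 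As this curvature is independent of $g_B$, so is the limit, which proves the claim.

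The main obstacle is the second step: controlling the $\epsilon$-scaling of the connection coefficients and verifying that the mixed O'Neill contributions are suppressed while the diagonal blocks behave as asserted. This is the standard adiabatic-limit computation for Riemannian submersions, following the analysis of \cite{Witten:1985xe}; the only point specific to our setting is to observe that the vertical connection $\nabla^{T_V E}$ and the canonical horizontal distribution are fixed data determined by the fibration $\mathcal{E} \to \mathcal{F}$ and the fiber metrics $g_V$, so that the entire $g_B$-dependence of the limiting curvature is concentrated in the block $\pi^* R^B$, which flatness annihilates.
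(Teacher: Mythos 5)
Your proof is correct and follows essentially the same route as the paper's: both rest on the asymptotic block-diagonal form of the Levi-Civita curvature of $g_E$ in the adiabatic limit, together with the observations that the vertical block is independent of $g_B$ and that flatness of $g_B$ annihilates the horizontal block, so that characteristic forms (polynomials in traces of curvature powers) lose all $g_B$-dependence. The only difference is that the paper imports this curvature expansion directly from Section 3.j of \cite{MR861886}, whereas you sketch its derivation by hand via the Koszul formula and the O'Neill tensors.
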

\begin{proof}
Let $R_V$ be the curvature of the Levi-Civita connection associated to the vertical metric and let $R_B$ be the curvature of the Levi-Civita connection associated to $g_B$. In Section 3.j of \cite{MR861886}, it is shown that the curvature of the Levi-Civita connection of $g_E$ has the following asymptopic expansion in the adiabatic limit:
\be
\label{EqBFCurvAdiabLim}
R_E = \left( \begin{array}{cc} R_V & 0 \\ 0 & R_B \end{array} \right) + O(\epsilon^{1/2}) \;.
\ee
Recall that the Riemannian curvature is a 2-form valued in the endomorphisms of the tangent bundle. In equation \eqref{EqBFCurvAdiabLim}, we pictured $R_E$, $R_V$ and $R_B$ as matrices of 2-forms, with the two blocks corresponding to the decomposition of $TE$ into vertical and horizontal vectors.

A characteristic form depends on $R_E$ through the elementary monomials ${\rm Tr}(R_E^p)$. If $g_B$ is flat as is assumed, then \eqref{EqBFCurvAdiabLim} makes it clear that ${\rm Tr}(R_E^p)$ is independent of $g_B$ up to terms of order $O(\epsilon^{1/2})$.  
\end{proof}

\subsection{Loops and mapping tori}

\label{SecLoopMapTor}

Let us specialize the construction of the fiber bundle $E$ of Section \ref{SecFormLiftAdLim} to the case where the base $B$ is a loop $c$ in $\mathcal{F}$, and call it $M_c$. $M_c$ is a fiber bundle over $S^1$ with fiber $M$, called the mapping torus associated to $c$. We pick an arbitrary metric on $S^1$ and endow $M_c$ with the $\epsilon$-dependent metric \eqref{EqMetFibInF}, written $g_c$.

Recall that in addition to a metric on $M$, points in $\mathcal{F}$ parameterize a flat differential cocycle of degree $2\ell+2$ on $M$, modulo the action of $\mathcal{G}$. We would like to construct on $M_c$ a differential cocycle that restrict to the corresponding data on each fiber.

\begin{lemma}
\label{LemConstrDiffCocMapTor}

Let $\check{A}_t = (s, h_{A_t} + r, 0)$, $t \in [0,1]$ be a family of differential $2\ell+2$-cocycles on $M$, and let $A_t$ be the associated family $2\ell+1$-forms. Assume that 
\be
\check{A}_1 = \phi^\ast(\check{A}_0) + d(\check{B} + \check{V}) \;,
\ee
where $(\check{B},\phi) \in \mathcal{G}$ and $d\check{V} = \phi^\ast(\check{S}) - \check{S}$. Then there is a differential cocycle $\check{A}_c \in \check{H}^{2\ell+2}(M_c)$ such that
\be
\label{EqRestFibDiffCocMapTor}
i^\ast_t (\check{A}_c) = \check{A}_t \;.
\ee
\end{lemma}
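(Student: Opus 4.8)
The plan is to construct $\check A_c$ first on the cylinder $M \times [0,1]$, where the fibration of $M_c$ over $S^1$ is trivialised, and then to glue the two ends through $\phi$. The subtlety to keep in mind throughout is that the mapping torus glues the fibre at $t=1$ to the fibre at $t=0$ by $\phi$ alone, whereas the hypothesis only gives $\check A_1 = \phi^\ast(\check A_0) + d(\check B + \check V)$; the extra gauge transformation will have to be incorporated as clutching data at the seam rather than matched on the nose.

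First I would reparameterise the family so that $\check A_t$ (equivalently $A_t$) is constant in $t$ on neighbourhoods of $t=0$ and $t=1$; this changes nothing on the fibres and is what will guarantee smoothness of the glued object at the seam. Next I would suspend the family to the cylinder by the cocycle $\check{\mathbf A} := (s,\, h_{\mathbf A} + r,\, dt \wedge \dot A_t)$, where $\mathbf A$ denotes $A_t$ regarded as a ($t$-dependent) form on $M \times [0,1]$ with no $dt$-component and $\dot A_t = \partial_t A_t$. A direct check using $d_M A_t = 0$, $dr = -s$ and the fact that $\dot A_t$ is closed shows that $\check{\mathbf A}$ is a genuine differential cocycle: its curvature $dt \wedge \dot A_t$ is closed and restricts to $0$ on every fibre, and the middle component of $d\check{\mathbf A}$ vanishes because $dh_{\mathbf A} = h_{dt\wedge \dot A_t}$ and $dr = -s$. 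By construction $i_t^\ast \check{\mathbf A} = \check A_t$, and the curvature of $\check{\mathbf A}$ vanishes near $t=0$ and $t=1$ because there $\dot A_t = 0$.

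It then remains to glue $\check{\mathbf A}$ into a differential cocycle on $M_c$. The two boundary restrictions are $i_1^\ast \check{\mathbf A} = \check A_1$ and $\phi^\ast(i_0^\ast \check{\mathbf A}) = \phi^\ast(\check A_0)$, and the hypothesis states exactly that these differ by the gauge transformation $d(\check B + \check V) \in \check{\mathcal{G}}^{2\ell+2}$. I would therefore take $\check A_c$ to be the class on $M_c$ obtained by gluing $\check{\mathbf A}$ with this gauge transformation as the clutching datum over the seam (over the complementary overlap the clutching is the identity). Since the base $S^1$ is covered by two arcs with no triple overlaps, the single gauge transformation imposes no further cocycle condition and $\check A_c \in \check H^{2\ell+2}(M_c)$ is well defined; its globally defined curvature form equals $dt \wedge \dot A_t$, which glues smoothly precisely because it vanishes near the seam. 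Pulling back to a fibre gives $i_t^\ast \check A_c = [\check A_t]$, as required.

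The main obstacle is exactly this seam reconciliation. One cannot simply interpolate the gauge transformation $d(\check B + \check V)$ to the identity along a collar, because its characteristic-class part shifts $s$ by an integral coboundary $d(g_B + g_V)$ and a continuous family of integral cochains is locally constant; the integral shift cannot be switched on gradually. The resolution is to treat $d(\check B + \check V)$ as discrete clutching data at the unique seam of the mapping torus, where I expect differential-cohomology descent to be obstruction-free, while smoothness of the curvature is secured independently by making the family sit still near the endpoints. The summand $\check V$, with $d\check V = \phi^\ast(\check S) - \check S$, is what renders the clutching a legitimate gauge transformation: it compensates the failure of $\phi$ to fix the torsion source $\check S$ on the nose, which is permitted only up to its differential cohomology class.
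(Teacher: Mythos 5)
Your suspension step is exactly the paper's: reparameterize so the family is constant near the endpoints, take the curvature to be $dt\wedge\partial_t A_t$, and verify the cocycle identities on $M\times I$; that part is correct. The gap is at the seam. You glue by declaring $d(\check{B}+\check{V})$ to be a clutching datum and asserting that differential-cohomology descent is ``obstruction-free'' because a two-arc cover of $S^1$ has no triple overlaps. But the absence of a cocycle condition only says your descent datum is consistent; it produces a glued object only if the presheaf of groupoids of differential cocycles is a stack, and in the cochain model used here (triples of singular cochains and forms) this is not automatic, nor strictly true: singular cochains do not form a sheaf, so a cochain on $M_c$ cannot simply be assembled from compatible cochains on the two arcs. ``Gluing along a gauge transformation'' is therefore precisely the statement that has to be proven --- it is the content of the lemma, not a formality one can delegate to descent. (One could instead invoke a Mayer--Vietoris property of $\check{H}^\bullet$, which you do not do; even then one would only obtain $i_t^\ast(\check{A}_c)$ equal to $\check{A}_t$ up to gauge transformation, with the curvature controlled only as part of a class, whereas the paper needs a concrete cocycle with a concrete curvature form for the later holonomy constructions.)

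The paper fills exactly this step by hand, and its mechanism is instructive next to your (correct) remark that the integral part of the gauge transformation cannot be interpolated: the real part \emph{can} be. The paper replaces $h_{A_t}+r$ by the family $h_{A_t}+r-t(dj+dv)$, i.e. it switches on the exact real part of $d(\check{B}+\check{V})$ linearly in $t$ (legitimate because real cochains form a vector space), so that the residual jump of the middle component across the seam is exactly the integral cocycle $\Delta h = g+u$. This integral jump is then absorbed into the characteristic class: one defines the integral cocycle $f_\Delta$ on $M_c$ whose value on a chain $a$ is $\Delta h(a_0)$, where $a_0$ is the slice of $a$ along the seam fiber $M_0$, and sets $f_c = f_\Delta + s$. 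Then $f_c+dh_c$ is smooth across the seam, and $(f_c,h_c,F_c)$ satisfies $df_c=0$, $F_c-f_c-dh_c=0$, $dF_c=0$ globally, giving the desired cocycle with $i_t^\ast(\check{A}_c)=\check{A}_t$ on the nose. In short, the clutching datum you postulate must be realized concretely by trading the discontinuity of $h_c$ against the extra term $f_\Delta$ in $f_c$; without this (or a descent theorem proved in a suitable model and transferred to this one), your $\check{A}_c$ is not actually constructed.
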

\begin{proof}
Write $\check{B} = (g, j, 0)$, $\check{S} = (s, r, 0)$ and $\check{V} = (u,v,0)$. The 1-parameter family $h_{A_t} + r - t(dj + dv)$ defines a unique degree $2\ell+1$ real cochain $h_{M \times I}$ on $M \times I$. It pushes down to a cochain $h_c$ on $M_c$, which is discontinuous at $M_0 \simeq M_1$. The discontinuity is given by $\Delta h = g + u \in \check{Z}^{2\ell+1}(M;\mathbbm{Z})$.

Let $a$ be an integral $2\ell+2$-chain in $M_c$ and $a_0$ its restriction to $M_0$. Let $f_\Delta$ be the integral $2\ell+2$-cochain on $M_c$ taking value $\Delta h(a_0)$ on $a$. $f_\Delta$ is a cocycle. We define $f_c = f_\Delta + s \in Z^{2\ell+2}(M;\mathbbm{Z})$, where as before, we see $s$ as a cochain on $M \times I$ and push it down to a cochain on $M_c$. Remark that by construction, as $h_c$ jumps by $\Delta h$ across $M_0$, $f_c + dh_c$ is a smooth cocycle.

For definiteness, we assume that the path is parameterized such that it is constant in $[0,\delta]$ and $[1-\delta,1]$ for a small $\delta \in \mathbbm{R}_+$. We then define the form $F_{M \times I} = dt \wedge \partial_t A_t$ on $M \times I$, where the 1-parameter family of forms $\partial_t A_t$ on $M$ is seen as a single vertical form on $M \times I$. $F_{M \times I}$ is the pull-back of a unique smooth form $F_c$ on $M_c$. The relations 
\be
df_c = 0 \;, \quad F_c - f_c - dh_c = 0 \quad {\rm and} \quad dF_c = 0
\ee
hold, so 
\be
\check{A}_c = (f_c, h_c, F_c)
\ee
is a differential cocycle on $M_c$. Moreover, recalling that $i_t$ denotes the inclusion of the fiber over $t \in S^1$, we obviously have $i^\ast_t(\check{A}_c) = \check{A}_t$.
\end{proof}

\begin{corollary}
\label{CorConstrDiffCocMapTor}
Given a loop $c \subset \mathcal{F}$ lifting to a path $p \subset \mathcal{B}$, we get a one parameter family $A_t$ of $2\ell+1$-forms on $M$, satisfying 
\be
A_1 = \phi^\ast(A_0) + B \;,
\ee
where $B \in \Omega_{\mathbbm{Z}}^{2\ell+1}(M)$. Any lift of $B$ to a differential cocycle $\check{B} \in \mathcal{G}_{\rm gauge}$ determines a differential cocycle $\check{A}_c$ on $M_c$ satisfying \eqref{EqRestFibDiffCocMapTor}.
\end{corollary}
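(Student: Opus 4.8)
The plan is to obtain the corollary as a direct application of Lemma \ref{LemConstrDiffCocMapTor}, the only real task being to check that the data extracted from the lifted loop meets the lemma's hypotheses. Writing the lift as $p = (p_1, p_2)$ and setting $A_t := p_1(t)$, the condition that $c$ closes up in $\mathcal{F} = \mathcal{B}/\mathcal{G}$ means $p(1) = (\check{B}, \phi)\cdot p(0)$ for some $(\check{B},\phi)\in\mathcal{G}$. Reading off the gauge-field factor of the action \eqref{EqActTransBGFields} yields precisely $A_1 = \phi^\ast(A_0) + B$, with $B \in \Omega^{2\ell+1}_{\mathbbm{Z}}(M)$ the image of $\check{B}$ under \eqref{EqProjGGaugeGNaive}. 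This is the asserted relation, and it also exhibits the family $\check{A}_t = (s, h_{A_t} + r, 0)$ in exactly the shape fed into the lemma.

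It then remains to promote this to the cocycle-level hypothesis $\check{A}_1 = \phi^\ast(\check{A}_0) + d(\check{B} + \check{V})$. I would argue first that $\check{A}_1$ and $\phi^\ast(\check{A}_0)$ represent the same class in $\check{H}^{2\ell+2}(M)$: their curvatures both vanish; their characteristic classes agree in $H^{2\ell+2}(M;\mathbbm{Z})$ because $\phi\in\mathcal{G}_{\rm diff}$ fixes $[\check{S}]$; and their topologically trivial parts agree because $A_1 - \phi^\ast(A_0) = B$ has integral periods and hence is zero in $\Omega^{2\ell+1}(M)/\Omega^{2\ell+1}_{\mathbbm{Z}}(M)$ (via the exact sequence \eqref{EqExSeqDiffChar}). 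Consequently $\check{A}_1 - \phi^\ast(\check{A}_0)$ is an exact differential cochain.

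The genuine content of the corollary is then to exhibit this exact cochain in the specific split form $d(\check{B} + \check{V})$ demanded by the lemma, rather than as the differential of some unstructured cochain. Here $\check{V} = (u,v,0)$ is supplied by the invariance of $[\check{S}]$ under $\mathcal{G}_{\rm diff}$, which is exactly the statement that the gauge transformation carrying $\phi^\ast(\check{S})$ back to $\check{S}$ satisfies $d\check{V} = \phi^\ast(\check{S}) - \check{S}$ (cf. the computation following \eqref{EqActTransBGFields}); it absorbs the discrepancy in the $s$- and $r$-components produced by the pullback along $\phi$. The chosen lift $\check{B}$ then accounts for the integral class of $B$, so that the middle components reduce to the identity $h_{A_1} = \phi^\ast(h_{A_0}) + h_B$, which holds by the very construction of $A_1$. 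I expect the matching of these characteristic-class and real-cochain components — keeping the conventions for the differential \eqref{EqDefDiffOnDiffCoch}, the pullback action on $\check{S}$, and the projection $\check{B}\mapsto B$ mutually consistent — to be the only real obstacle. Once it is carried out for the given lift $\check{B}$, Lemma \ref{LemConstrDiffCocMapTor} applies verbatim and produces the cocycle $\check{A}_c$ on $M_c$ satisfying \eqref{EqRestFibDiffCocMapTor}.
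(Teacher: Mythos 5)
Your proof is correct and takes essentially the approach the paper intends: the corollary is stated there without proof as an immediate consequence of Lemma \ref{LemConstrDiffCocMapTor}, and your verification that the loop-closure condition \eqref{EqCondLoopLift}, the gauge transformation $\check{V}$ relating $\phi^\ast(\check{S})$ to $\check{S}$ (guaranteed by the invariance of $[\check{S}]$ under $\mathcal{G}_{\rm diff}$), and the chosen lift $\check{B}$ of $B$ together supply exactly the lemma's hypothesis is precisely the intended argument. Your intermediate step showing $\check{A}_1$ and $\phi^\ast(\check{A}_0)$ agree in $\check{H}^{2\ell+2}(M)$ is redundant (the explicit component matching already exhibits the difference as $d(\check{B}+\check{V})$), but it is correct and harmless.
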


In conclusion, we can associate with the loop $c \subset \mathcal{F}$ a mapping torus $M_c$ endowed with a one-parameter family of metrics $g_c$ as well as an unshifted differential cocycle $\check{A}_c$. It should be emphasized that this assignment is not unique. We made arbitrary choices when we picked a path $p$ lifting $c$ to $\mathcal{B}$, when we picked a metric $g_c$ on $S^1$, and when we lifted $B \in \Omega^{2\ell+1}_{\mathbbm{Z}}(M)$ to a differential cocycle $\check{B}$. Remark that we can see $\check{A}_c$ as a differential cocycle shifted by the Wu class, because $\nu_{2\ell+2}$ vanishes on a $4\ell+3$-dimensional manifold.

\subsection{Bounded manifold}

\label{SecBoundMan}

Assume that there exists a manifold $W$ admitting $M_c$ as its boundary. The existence of $W$ depends on the vanishing of a certain cobordism group. For spin manifolds of dimension $3$, $7$ and $11$, relevant for most applications to physics, the corresponding cobordism groups were shown to vanish in \cite{2012arXiv1208.1540M} (see also \cite{Kriz:2004tu} for the case of dimension $11$). Among the physically relevant self-dual fields, only the one on the string theory and M-theory five-branes can live on manifolds which are non-spin. The corresponding cobordism group is described in \cite{Monnierb}, but it is unknown if it vanishes.

We endow $W$ with a 1-parameter family of metrics $g_W$ on $W$, with parameter $\epsilon$, reducing to the metric $g_c$ on $M_c = \partial W$. We will suppress the dependence on $\epsilon$ in the notation. We require in addition that there is a neighborhood of $M_c$ in $W$ on which $g_W$ is the direct product metric.

Let $j$ be the inclusion of $M_c$ as the boundary of $W$. The fact that $M_c$ has trivial normal bundle in $W$ implies that the Wu classes $\nu_{M_c}$ and $\nu_W$ of $M_c$ and $W$ are related by $j^\ast(\nu_W) = \nu_{M_c} = 0$. Therefore there is a well-defined restriction map $j^\ast: \check{Z}^{2\ell+2}_\nu(W) \rightarrow \check{Z}^{2\ell+2}(M_c)$. We endow $W$ with a shifted differential cocycle $\check{A}_W \in \check{Z}^{2\ell+2}_\nu(W)$ such that $j^\ast \check{A}_W = \check{A}_c$. Again, the existence of this cocycle is linked to the vanishing of certain cobordism groups, which were shown to vanish for spin manifolds of physical relevance in \cite{2012arXiv1208.1540M}. The corresponding statement for five-branes is still open.

We obtain in this way a manifold $W$ bounded by $M_c$, endowed with a metric and a shifted differential cocycle extending the corresponding structures on $M_c$.

\subsection{The holonomy formula}

\label{SecHolForm}

We associate to a cycle $c \subset \mathcal{F}$ a complex number ${\rm hol}(c)$ of unit modulus, given by
\be
\label{EqHolForm}
\frac{1}{2\pi i} \ln h(c) = \lim_{\epsilon \rightarrow 0} \frac{1}{8} \int_W \left(L(TW) - 4 F_W^2 \right) \;.
\ee
In the formula above, $W$ is as described in Section \ref{SecBoundMan}.
$L(TW)$ is the Hirzebruch genus of the tangent bundle of $W$, constructed from the Pontryagin forms associated to the metric $g_{W}$ and $F_W$ is the curvature of the shifted differential cocycle $\check{A}_W$. We claim that 
\be
\label{EqHolFormCl}
h(c) = {\rm hol}_\mathscr{A}(c) \;,
\ee
where ${\rm hol}_\mathscr{A}(c)$ is the holonomy of the anomaly connection $\nabla_{\mathscr{A}}$ along the loop $c \subset \mathcal{F}$.

In \cite{Monnier2011a}, we gave strong evidence for \eqref{EqHolFormCl}, in the case of loops in $\mathcal{F}$ lifting to paths of the form $(0,p_2) \in \mathcal{B}$. (Section 5.1 of \cite{Monnier2011a} lists the reasons why these arguments fall short of a proof.) In Section \ref{SecProofAnForm}, we will show that if \eqref{EqHolFormCl} holds for such loops, then it holds for all loops in $\mathcal{F}$.

We can interpret \eqref{EqHolForm} as the evaluation on $M_c$ of a Chern-Simons functional associated to the characteristic form $\frac{1}{8} \left(L(TW) - 4 F_W^2\right)$. (The geometrical structure of which $F_W$ originates as a characteristic form is not completely clear, see for instance the discussion in the introduction of \cite{hopkins-2005-70}.) We see therefore that  the claim \eqref{EqHolFormCl} identifies $\mathscr{A}$ with its natural connection with the corresponding Chern-Simons line bundle (in the sense of \cite{Freed:1992vw}). Such Chern-Simons functionals were studied in \cite{hopkins-2005-70} as cobordism invariants for families of manifolds. No adiabatic limit is taken in the setting of \cite{hopkins-2005-70}, however.

Let us recall that several arbitrary choices are involved in the definition of \eqref{EqHolForm}. In order to define a metric and a differential cocycle on the mapping torus, we first had to pick a path $p$ in $\mathcal{B}$ lifting the loop $c \in \mathcal{F}$. We then had to pick a metric $g_{S^1}$ on $S^1$ to construct the metric $g_c$ on $M_c$. Also, the manifold $W$ and the extension of the metric and the differential cocycle on it are not unique. We have to show that $h(c)$ is independent of all those choices. 

\begin{proposition}
\label{PropIndepBoundMan}
$h(c)$ is independent of the choice of bounded manifold $W$.
\end{proposition}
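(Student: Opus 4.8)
The plan is to show that the right-hand side of \eqref{EqHolForm} does not change if we replace one bounding manifold $W$ by another $W'$, both having $M_c$ as boundary and both carrying extensions of the metric and the shifted differential cocycle. The natural strategy is to glue $W$ and $W'$ along their common boundary $M_c$ to produce a \emph{closed} $(4\ell+4)$-dimensional manifold, and then to recognize the difference of the two integrals as the integral of the characteristic form over this closed manifold, which should be computable by an index-type argument.

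First I would form $X := W \cup_{M_c} \bar{W'}$, where $\bar{W'}$ denotes $W'$ with reversed orientation; this is a closed oriented manifold because $\partial W = M_c = \partial W'$. The requirement in Section \ref{SecBoundMan} that each metric be a direct product in a collar neighborhood of $M_c$ is exactly what guarantees that the glued metric $g_X$ is smooth across the seam, and similarly the two shifted differential cocycles $\check{A}_W$ and $\check{A}_{W'}$, agreeing on the boundary with $\check{A}_c$, glue to a smooth shifted cocycle $\check{A}_X \in \check{Z}^{2\ell+2}_\nu(X)$ with curvature $F_X$. (Here one uses that $j^\ast \check{A}_W = \check{A}_c = j'^\ast \check{A}_{W'}$, so the characteristic classes, trivializing cochains, and curvatures match along $M_c$.) By additivity of the integral and the orientation reversal on the $W'$ piece,
\be
\frac{1}{8}\int_W \big(L(TW) - 4F_W^2\big) - \frac{1}{8}\int_{W'} \big(L(TW') - 4F_{W'}^2\big) = \frac{1}{8}\int_X \big(L(TX) - 4F_X^2\big) \;.
\ee

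Next I would argue that the closed integral on the right is an integer, taken in the adiabatic limit. The combination $\frac{1}{8}(L(TX) - 4F_X^2)$ is precisely the degree $4\ell+4$ characteristic form whose integral over a closed manifold computes the index of a Dirac-type operator coupled to the shifted field $\check{A}_X$: the $\frac{1}{8}L(TX)$ piece is the signature-type contribution and the $-\frac{1}{2}F_X^2$ piece is the curvature coupling, with the Wu-class shift of $\check{A}_X$ ensuring the relevant integrality \eqref{EqDefCharEl} of $\int_X F_X \wedge F_X$. By the index theorem this integral is an integer, independent of the choice of metric within its cohomology class. Since $\exp(2\pi i \cdot n) = 1$ for $n \in \mathbb{Z}$, the two definitions of $\ln h(c)/(2\pi i)$ differ by an integer and hence give the same value of $h(c)$.

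The main obstacle is verifying that the adiabatic limit and the gluing are compatible, i.e.\ that taking $\epsilon \to 0$ does not spoil the integrality argument. The cleanest route is to note that the glued integral over the closed manifold $X$ is a cobordism/index invariant and therefore metric-independent for \emph{every} $\epsilon$, so its value is already an integer before the limit is taken, making the limit trivial on the difference. One must also check that the product-collar condition can be arranged simultaneously for the $\epsilon$-rescaled metrics of the form \eqref{EqMetFibInF}; this is a matter of choosing the collar so that the fiber metric $g_V$ and the base rescaling $g_B/\epsilon$ both extend as products, which is possible since the rescaling acts only on the horizontal directions. Granting these points, the independence of $h(c)$ from $W$ follows.
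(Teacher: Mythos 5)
Your gluing step is exactly right and matches the paper: form the closed manifold $X = W \cup_{M_c} \bar{W'}$, use the product-collar condition to glue the metrics and the shifted cocycles smoothly, and reduce the claim to showing that $\frac{1}{8}\int_X \left(L(TX) - 4F_X^2\right)$ is an integer. The gap is in how you establish that integer\-ality. You invoke ``the index theorem'' for a ``Dirac-type operator coupled to the shifted field $\check{A}_X$,'' but no such operator or index theorem is available in this setting: the paper itself points out (Section \ref{SecHolForm}, citing the introduction of \cite{hopkins-2005-70}) that the geometric structure of which $F_W$ is a characteristic form is not understood, which is precisely why Hopkins--Singer had to construct their invariants cobordism-theoretically rather than index-theoretically. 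Moreover, $W$ is not assumed spin (the five-brane case is explicitly non-spin), so even a ${\rm Spin}^c$-type argument is unavailable; and in dimension $4\ell+4 > 4$ the ${\rm Spin}^c$ index density $\hat{A}\,e^{c/2}$ is in any case not equal to $\frac{1}{8}\left(L - c^2\right)$, so the identification you assert fails already as a statement about characteristic forms.

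The correct argument, which your appeal to \eqref{EqDefCharEl} gestures at but does not complete, is purely lattice-theoretic. By the Hirzebruch signature theorem, $\int_X L(TX) = \sigma$, the signature of the wedge-product pairing on $H^{2\ell+2}(X;\mathbbm{R})$, and this pairing is integral and unimodular on the lattice of classes with integral periods because $X$ is closed. The Wu-class shift of $\check{A}_X$ says, via \eqref{EqDefCharEl}, that $2F_X$ is a \emph{characteristic element} of this pairing; note that \eqref{EqDefCharEl} by itself is only a mod $2$ congruence and does not yet give what you need. The missing ingredient is the classical fact (Theorem 2.8 of \cite{Brumfiel1973}, van der Blij's lemma) that the norm square of a characteristic element of an integral unimodular lattice is congruent to the signature modulo $8$. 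Combining the two gives $\int_X \left(L(TX) - 4F_X^2\right) = \sigma - \int_X (2F_X)^2 \in 8\mathbbm{Z}$, which is the integrality you need; your remark that the closed-manifold integral is metric-independent (so the adiabatic limit is harmless on the difference) is fine as stated.
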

\begin{proof}
Consider two manifolds $W_1$ and $W_2$ bounded by $M_c$ and respectively endowed by shifted differential cocycles  $\check{A}_{W_i} = (f_i, h_{W_i}, F_{W_i})$, $i = 1,2$, extending $\check{A}_c$. Let $\bar{W}_2$ be $W_2$ with its orientation reversed. Then we can glue $W_1$ to $\bar{W}_2$ smoothly along their common boundary $M_c$ to obtain a closed manifold $W$. The metric on $W$ determines Pontryagin forms, from which we obtain the Hirzebruch genus $L(TW)$. Moreover, the differential cocycles $\check{A}_{W_1}$ and $\check{A}_{W_2}$ glue smoothly at the boundary, yielding a differential cocycle $\check{A}_W = (f_W, h_W, F_W)$ on $W$. We have
\be
\label{EqHolIndepBoundedM}
\frac{1}{8} \int_W \left(L(TW) - 4 F_W^2 \right) = \frac{1}{8} \int_{W_1} \left(L(TW_1) - 4 F_{W_1}^2 \right) - \frac{1}{8} \int_{W_2} \left(L(TW_2) - 4 F_{W_2}^2 \right) \;.
\ee
To complete the proof of the proposition, we have to show that the left-hand side is an integer.

To this end, recall that $2F_W$ is a characteristic element for the wedge product pairing modulo 2, i.e. it satisfies \eqref{EqDefCharEl}. Given a lattice with an integral unimodular pairing, the norm square of a characteristic element is always equal to the signature of the pairing modulo 8 (see for instance Theorem 2.8 of \cite{Brumfiel1973} for a more general statement including non-unimodular pairings). The wedge product pairing on a closed manifold is unimodular, and the integral of the Hirzebruch genus over $W$ gives precisely the signature of the wedge product pairing. In consequence, the left-hand side of \eqref{EqHolIndepBoundedM} is an integer indeed.
\end{proof}

\begin{proposition}
\label{PropHolIndepMetS1}
$h(c)$ is independent of the metric $g_{S^1}$ involved in the construction of $g_c$.
\end{proposition}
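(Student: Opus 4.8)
The plan is to show that the holonomy formula \eqref{EqHolForm} depends on the $S^1$ metric only through quantities that either vanish in the adiabatic limit or are integers. The key observation is that all $S^1$ metrics of a fixed total circumference are related by a diffeomorphism of $S^1$, and diffeomorphisms of the base leave the integral of a characteristic top-form unchanged. Thus the only genuine parameter is the total length $L$ of the circle. I would therefore reduce the problem to proving independence of $L$, and then invoke the adiabatic limit.

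First I would set up the comparison. Let $g_{S^1}$ and $g'_{S^1}$ be two metrics on $S^1$, and let $g_c$ and $g'_c$ be the resulting $\epsilon$-dependent metrics on $M_c$ built via \eqref{EqMetFibInF}. The differential cocycle $\check{A}_c$, and hence the curvature $F_c$, is constructed independently of the $S^1$ metric, so the $F_W^2$ term in \eqref{EqHolForm} is unaffected once we fix the extension $W$ and $\check{A}_W$; the only metric-dependent piece is $L(TW)$. The strategy is then to build a single manifold interpolating between the two choices. Concretely, I would form the cylinder $M_c \times [0,1]$, equip it with a family of metrics that restricts to (a collar neighborhood version of) $g_c$ at one end and $g'_c$ at the other, and glue it to $W$. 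This produces a new bounded manifold $W'$ with $\partial W' = M_c$ carrying the metric $g'_c$, and

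\be
\frac{1}{8} \int_{W'} L(TW') - \frac{1}{8} \int_{W} L(TW) = \frac{1}{8} \int_{M_c \times [0,1]} L(T(M_c \times [0,1])) \;.
\ee

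The remaining task is to show that the right-hand side contributes trivially to the holonomy in the adiabatic limit.

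The heart of the argument is the behavior of $L(TW)$ under the adiabatic limit $\epsilon \to 0$. By Proposition \ref{PropCharFormIndepgB}, when the base metric is flat the characteristic forms become independent of the base metric as $\epsilon \to 0$; since any metric on $S^1$ is flat, this already forces the $L(TW)$ contribution to be insensitive to $g_{S^1}$ in the limit. More precisely, on the interpolating cylinder the curvature of the $S^1$-directions enters $L$ only through factors of $R_B = O(\epsilon)$ by \eqref{EqBFCurvAdiabLim}, so the boundary correction integral above is $O(\epsilon^{1/2})$ and vanishes in the adiabatic limit. Combined with Proposition \ref{PropIndepBoundMan}, which guarantees that the choice of bounding manifold does not matter, this shows that the two $S^1$ metrics yield the same value of $\tfrac{1}{2\pi i}\ln h(c)$.

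The main obstacle I anticipate is making the interpolation rigorous while respecting the product-metric-near-the-boundary requirement imposed in Section \ref{SecBoundMan}: the glued metric on $W'$ must remain a genuine fiber-bundle metric of the form \eqref{EqMetFibInF} with a flat base near each boundary component, so that Proposition \ref{PropCharFormIndepgB} applies and the collar contributions are controlled. One must check that the interpolating family can be chosen smoothly and that the adiabatic parameter $\epsilon$ can be taken uniformly across the interpolation, so that the $O(\epsilon^{1/2})$ estimate from \eqref{EqBFCurvAdiabLim} holds on all of $W'$ simultaneously. Once this uniformity is established, the vanishing of the difference in the limit is immediate.
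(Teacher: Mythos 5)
Your overall strategy is essentially a geometric repackaging of the paper's own proof: the paper observes that Pontryagin forms are closed and vary by exact forms under a change of metric, so that $\int_W L(TW)$ can depend on $g_W$ only through its boundary values $g_c$, and then kills this residual boundary dependence with Proposition \ref{PropCharFormIndepgB}; your cylinder $M_c\times[0,1]$ glued onto $W$, legitimized by Proposition \ref{PropIndepBoundMan}, produces exactly that transgression term in the form $\tfrac18\int_{M_c\times[0,1]}L$. The setup itself is fine: comparing $W$ with $W'=W\cup(M_c\times[0,1])$ is allowed, and the curvature term is untouched because the natural extension of the differential cocycle over the cylinder is a pullback from $M_c$, so its $F^2$ is the pullback of a degree $4\ell+4$ form from a $(4\ell+3)$-dimensional manifold and vanishes.

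However, the step where you argue that the cylinder integral vanishes in the adiabatic limit has a genuine gap. You claim that the base directions enter $L$ only through factors of $R_B$, and that $R_B=O(\epsilon)$. Neither statement is correct. First, in \eqref{EqBFCurvAdiabLim} the block $R_B$ appears at order one, not order $\epsilon$; it vanishes exactly when the base metric is flat, which for your two-dimensional interpolating base $S^1\times[0,1]$ is not automatic (``any metric on $S^1$ is flat'' says nothing about the interpolating metric; one must choose it flat, e.g.\ by interpolating the warping function linearly, modulo smoothing in collars). Second, and more importantly, disposing of $R_B$ is not enough: the order-one limit of the integrand is $L(R_V)$, and $R_V$, the curvature of the vertical tangent bundle, is a two-form on the \emph{total space} which in general has horizontal legs as well, so its monomials ${\rm Tr}(R_V^p)$ can perfectly well have nonzero components of top degree $4\ell+4$. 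The missing observation that closes the argument is that the fibration data on your cylinder (vertical metric and horizontal distribution) is constant in the interpolation coordinate $s$: hence $R_V$ is pulled back from $M_c$, carries no $ds$-leg, and every characteristic monomial built from it has vanishing top-degree component, so that the cylinder integral is indeed $O(\epsilon^{1/2})$. (Equivalently, invoke Proposition \ref{PropCharFormIndepgB} to replace the interpolating base metric by a product one, for which the limiting forms are manifestly pullbacks from $M_c$.) Finally, your opening reduction to metrics of fixed circumference is superfluous---your interpolation already handles arbitrary pairs---and is itself delicate, since it requires lifting base diffeomorphisms to $M_c$ compatibly with the fibration data.
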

\begin{proof}
Let us see $g_c$ as a function of $g_{S^1}$. By making some definite choice of extension, we can see $g_W$ as a function of $g_{S^1}$ as well. $F_W$ is independent of $g_W$. The Hirzebruch genus $L(TW)$ is expressible in terms of the Pontryagin forms of $TW$. As these forms are closed and vary by exact forms when $g_W$ is modified, the integral in \eqref{EqHolForm} can depend on $g_W$ only through its boundary values, $g_c$. But Proposition \ref{PropCharFormIndepgB} shows that the characteristic forms of $TM_c$ are independent of $g_{S^1}$ in the adiabatic limit $\epsilon \rightarrow 0$. 

\end{proof}

\begin{proposition}
\label{PropHolIndepPathLift}
$h(c)$ is independent of the path lift $p$.
\end{proposition}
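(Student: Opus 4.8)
The plan is to exploit the two structural facts already at our disposal: that $h(c)$ depends only on the boundary data $(M_c,g_c,\check A_c)$ and not on the bounding manifold $W$ (Proposition \ref{PropIndepBoundMan}) or the circle metric (Proposition \ref{PropHolIndepMetS1}), and that the integrand of \eqref{EqHolForm} is built solely from the metric and from the curvature $F_W$ of $\check A_W$, never from the characteristic class $f_W$ or the cochain $h_W$. Thus it suffices to understand how the boundary data changes under a change of lift and to show that the change leaves the integral $\frac{1}{8}\int_W(L(TW)-4F_W^2)$ invariant. Given two lifts $p,p'$ of the same loop $c$, each point satisfies $\pi_{\mathcal F}(p(t))=\pi_{\mathcal F}(p'(t))$, so $p'(t)=(\check B_t,\phi_t)\cdot p(t)$ for some element of $\mathcal G$; using \eqref{EqActTransBGFields} this reads $p_2'(t)=\phi_t^\ast p_2(t)$ and $p_1'(t)=\phi_t^\ast p_1(t)+B_t$ with $B_t\in\Omega^{2\ell+1}_{\mathbbm{Z}}(M)$ the image of $\check B_t$ under \eqref{EqProjGGaugeGNaive}. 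I would first argue that $(\check B_t,\phi_t)$ can be chosen to depend smoothly on $t$, and then treat the diffeomorphism part $\phi_t$ and the gauge part $B_t$ separately.

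For the diffeomorphism part, the assignment $(x,t)\mapsto(\phi_t(x),t)$ on $M\times I$ descends to a diffeomorphism $M_c\to M_c'$ intertwining the two metrics and the two differential cocycles; here one uses that the monodromies are related by the isotopy $\phi_t$ and hence define diffeomorphic mapping tori. Since the integrand of \eqref{EqHolForm} is a characteristic form, its integral is invariant under such a diffeomorphism, so by Proposition \ref{PropIndepBoundMan} this part of the change of lift leaves $h(c)$ unchanged. This reduces the problem to the case $\phi_t=\mathrm{id}$, $p_2'=p_2$, and $p_1'(t)=p_1(t)+B_t$ with $B_t$ a smooth path of integral forms.

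The heart of the matter is then the gauge part. Because $B_t$ has integral periods, its de Rham class lies in the discrete lattice $\Lambda$, so $t\mapsto[B_t]$ is constant and $\partial_t B_t$ is exact; consequently $F_c$ and $F_c'$, obtained as the push-downs of $dt\wedge\partial_t A_t$ and $dt\wedge\partial_t A_t'$ respectively (Lemma \ref{LemConstrDiffCocMapTor}), differ by an exact form $F_c'-F_c=d\Xi_c$ with $\Xi_c$ of the shape $dt\wedge(\cdots)$. Fixing a single $W$ and extending $\Xi_c$ to a form $\Xi_W$, I would set $F_W'=F_W+d\Xi_W$, which is an admissible extension of $F_c'$ since $2F_W'$ and $2F_W$ have the same periods and hence $2F_W'$ is still a characteristic element. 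The difference of the two integrals is then the boundary term
\be
-\tfrac{1}{2}\int_W\big(F_W^2-(F_W')^2\big)=-\tfrac{1}{2}\int_W d\big(\Xi_W\wedge(F_W+F_W')\big)=-\tfrac{1}{2}\int_{M_c}\Xi_c\wedge(F_c+F_c')\;,
\ee
which vanishes: the crucial input, coming from the vanishing of the field strength on the fibers established in Section \ref{SecSpaBackFields}, is that $F_c$, $F_c'$ and $\Xi_c$ all carry an explicit factor of $dt$, so every wedge product above is annihilated by $dt\wedge dt=0$. Finally, any residual freedom in the micro gauge (torsion) part of $\check B_t$ changes only $f_c$ and $h_c$, leaving $F_c$ and $g_c$ — and hence the integral — untouched.

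The step I expect to be the main obstacle is the very first one: showing that the transformation $(\check B_t,\phi_t)$ relating two lifts can be chosen to depend smoothly on $t$, and that the induced map on mapping tori is a genuine diffeomorphism. The difficulty is the singular structure of $\mathcal F$, since at metrics with isometries the diffeomorphism $\phi_t$ matching $p_2(t)$ to $p_2'(t)$ is not unique; one may need to choose $\phi_t$ on the generic (isometry-free) locus and argue by continuity, or alternatively invoke Proposition \ref{PropHomDecLoops} to reduce to vertical loops (where $\phi_t$ may be taken trivial and only the clean $B_t$-argument is needed) and to changes of lift of purely diffeomorphism type.
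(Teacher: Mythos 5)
Your proposal rests on the same two mechanisms as the paper's proof --- variations of the gauge field in gauge directions are exact, and every relevant form on $M_c$ carries an explicit factor of $dt$, which kills all the wedge products --- but it is organized quite differently. The paper never splits the change of lift into a diffeomorphism part and a gauge part: it assumes a smooth family of lifts $p_s$ interpolating between $p_0$ and $p_1$ inside the fiber of $\pi_{\mathcal{F}}$, uses it to put a metric and a differential cocycle on the cylinder $W = M_c \times I$, and shows that \eqref{EqHolForm} evaluated on this cylinder vanishes modulo 1 (boundary cancellation for the $L(TW)$ term, exactness of $F^{\rm s}=\partial_s A_{t,s}$ for the $F_W^2$ term), after which a gluing argument concludes. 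That route replaces your pointwise selection of $(\check{B}_t,\phi_t)$ --- your admitted ``main obstacle'', which is genuinely delicate at metrics with isometries --- by a single interpolation datum.

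Beyond that unresolved step, there are two concrete gaps. First, in your diffeomorphism step, $(x,t)\mapsto(\phi_t(x),t)$ does \emph{not} intertwine the two cocycles: $\partial_t(\phi_t^\ast A_t)=\phi_t^\ast(\partial_t A_t)+\phi_t^\ast(\mathcal{L}_{X_t}A_t)$, where $X_t$ generates $\phi_t$, so the pulled-back curvature differs from $F_c'$ by the exact term $dt\wedge\phi_t^\ast d(\iota_{X_t}A_t)$; the claimed reduction to $\phi_t=\mathrm{id}$ therefore needs the same exactness argument yet again (fixable, but false as stated). Second, and more seriously, your ``admissibility'' check in the gauge step is made only at the level of curvatures, whereas \eqref{EqHolForm} requires an extension of the differential \emph{cocycle} $\check{A}_c'$, i.e.\ a shifted cocycle $\check{A}_W'$ with $j^\ast\check{A}_W'=\check{A}_c'$. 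Adding $(0,h_{\Xi_W},d\Xi_W)$ to $\check{A}_W$ produces a cocycle whose boundary restriction still has characteristic class $f_c$, while $\check{A}_c'$, built via Corollary \ref{CorConstrDiffCocMapTor} from $p'$ and its own monodromy lift $\check{B}'$, can have $f_c'\neq f_c$, the difference being a torsion class. Your final claim that such a change ``leaves the integral untouched'' because $F_c$ and $g_c$ are unchanged is a non sequitur: the integral depends on $f_c$ through the set of admissible extensions, and Section \ref{SecGlobTorsAnom} shows that precisely this dependence is a genuine anomaly --- $h(c)$ can change sign when the torsion data changes with $F_c$ and $g_c$ held fixed. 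So your argument establishes independence of the path lift only for compatible choices of the monodromy lifts $\check{B}$, $\check{B}'$; that compatibility must be arranged explicitly (the paper's cylinder construction does it implicitly, the residual torsion ambiguity being treated separately in Section \ref{SecGlobTorsAnom}), not dismissed as irrelevant.
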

\begin{proof}
Consider two paths $p_0$ and $p_1$ in $\mathcal{B}$ such that $\pi_{\mathcal{F}}(p_0(t)) = \pi_{\mathcal{F}}(p_1(t)) = c(t)$. Together with a choice of metric on $S^1$, these two paths define metrics and differential cocycles on two copies of the mapping torus $M_c$. We see these two copies as the two boundary components of $W = M_c \times I$, $I = [0,1]$. In order to prove the proposition, we will extend the metric and the differential cocycle from $\partial W$ to $W$ and show that \eqref{EqHolForm} evaluated on this data yields an integer. A simple gluing argument then shows that this integer is nothing but the difference between logarithms of the holonomies computed from $p_0$ and $p_1$, thereby showing that the holonomies coincide.

Let $p_s$, $s \in I$ be a smooth family of paths in $\mathcal{B}$ interpolating between $p_0$ and $p_1$ and satisfying as well $\pi_{\mathcal{F}}(p_s(t)) = c(t)$. $p_s$ defines a two-parameter family of metrics and differential cocycles $(\check{A}_{t,s},g_{t,s})$ on the fibers of $W \rightarrow S^1 \times I$. We pick a metric $g_{S^1}$ on $S^1$ and the Euclidean metric $g_I$ on $I$ to construct the metric $g_W =  g_{t,s} \oplus (g_c \oplus g_I)/\epsilon$ on $W$.

Let us write $M_{c,s}$ for the slice $M_c \times \{s\} \subset W$, with its induced metric. Similarly, let $M_{t,s} = M \times \{t\} \times \{s\}$. As $H^{4\ell+4}(W; \mathbbm{R}) = 0$, the top component of the Hirzebruch genus is exact, $L(TW) = dK$. Moreover, the metrics on $M_{c,0}$ and $M_{c,1}$ are related by a diffeomorphism $\phi$. This implies we can take $K$ such that $K|_{M_{c,1}} = \phi^\ast K|_{M_{c,0}}$, so the boundary contribution to the integral of the Hirzebruch genus in \eqref{EqHolForm} cancel.

We can follow the same logic as in Section \ref{SecLoopMapTor} and construct a differential cocycle $\check{A}_W = (f_W, h_W, F_W)$ on $W$ out of the two parameter family of differential cocycles $\check{A}_{t,s}$. What remains to be computed is the integral over $W$ of $F_W^2$. For this purpose however, it is simpler to cut open the mapping tori in $W$ and define $\tilde{W} = M \times I \times I$. The map $\tilde{W} \rightarrow W$ allows us to pull-back $\check{A}_W$ to a differential cocycle $\check{A}_{\tilde{W}} = (0, A_{t,s}, F_{\tilde{W}})$.

We now compute the integral of $F^2_{\tilde{W}}$. $F_{\tilde{W}}$ (written henceforth $F$) vanishes when restricted to a fiber: $F|_{M_{t,s}} = 0$. We can therefore write $F = dt \wedge ds \wedge F^{\rm ts} + dt \wedge F^{\rm t} + ds \wedge F^{\rm s}$, where $F^{\rm ts}$, $F^{\rm t}$ and $F^{\rm s}$ are closed vertical forms of degree $2\ell$, $2\ell+1$ and $2\ell+1$ on $W$, respectively. We can rewrite
\be
4\int_{\tilde{W}} F^2 = - 4 \int_I dt \int_I ds \int_{M} F^{\rm t} \wedge F^{\rm s} \;.
\ee
But 
\be
F^{\rm s} = \partial_s A_{t,s}
\ee
is exact, because $A$ changes by a gauge transformation along $s$. This implies that $\int_{\tilde{W}} F^2$ vanishes.

\end{proof}

\subsection{The torsion anomaly}

\label{SecGlobTorsAnom}

There is still a rather subtle way in which \eqref{EqHolForm} might be ill-defined. Recall that in Corollary \ref{CorConstrDiffCocMapTor}, we had to lift the element $B \in \Omega^{2\ell+1}_{\mathbbm{Z}}(M)$ determined by the path lift of $c$ to a differential cocycle $\check{B}$, in order to construct the differential cocycle $\check{A}_c$ on the mapping torus. There is no unique way to do this. In fact, the ambiguity is parameterized by $H^{2\ell+1}_{\rm tors}(M; \mathbbm{Z})$, for reasons discussed in Section \ref{SecActGrpLocTrans}. We have to make sure that $h(c)$ is independent of the choice of lift. It turns out that this is not necessarily the case, and that the obstruction coincides with the torsion anomaly identified in \cite{Witten:1999vg, Belov:2006jd}.

To investigate the dependence on the choice of lift of $B$, we study \eqref{EqHolForm} on constant loops. Therefore assume that $c$ lifts to a constant path in $\mathcal{B}$. $M_c$ is then a trivial mapping torus, isometric to $M \times S^1$. In order to construct the differential cocycle $\check{A}_c$ on $M_c$, according to Corollary \ref{CorConstrDiffCocMapTor} we have to pick a differential cocycle $\check{B}$ lifting $B = 0 \in \Omega^{2\ell+1}_{\mathbbm{Z}}(M)$. A generic lift has the form $\check{B} = (u, v, 0)$, where $u \in Z^{2\ell+1}(M; \mathbbm{Z})$ represents a class in $H^{2\ell+1}_{\rm tors}(M; \mathbbm{Z})$ and $v$ is a real cochain such that $u = -dv$. Corollary \ref{CorConstrDiffCocMapTor} produces a cocycle $\check{A}_c$ on $M_c$. We extend $\check{A}_c$ to a shifted differential cocycle $\check{A}_W = (f_W, h_W, F_W)$ on $W$. In general, the curvature $F_W \in \Omega^{2\ell+2}(W)$ is non-vanishing, although of course, $F_W|_{M_c} = 0$. 

Consider $W_0 = M \times D^2$, where $D^2$ denotes a disk. We glue $W$ to $W_0$ with its orientation reversed to obtain a closed manifold $W_C := W \ltxcup \bar{W}_0$. The Wu class of degree $2\ell+2$ vanishes on $W_C$, so we can pick an integral lift $\lambda_C \in H^{2\ell+2}(W_C;\mathbbm{Z})$ of the Wu class such that $\lambda_C|_{\bar{W}_0} = 0$. $\lambda := \lambda_C|_{W} \in H^{2\ell+2}(W,M_c;\mathbbm{Z})$ is an integral lift of the Wu class of $W$. Importantly for what follows, not all integral lifts of the Wu class on $W$ can be obtained in this way; this is the reason for the construction above. $2f_W$ is an integral cocycle, so let us pick $\tilde{f}_W \in H^{2\ell+2}(W;\mathbbm{Z})$ such that $2\tilde{f}_W = 2f_W - \lambda$, $\tilde{f}_W|_{M_c} = f_W|_{M_c}$ in cohomology. Recalling that $s$ is the torsion characteristic class of $\check{A}$ on $M$, let us write $\exp 2\pi i a_{s,u}$ for \eqref{EqHolForm} evaluated with the data above. We see that
\begin{align}
\label{EqHolMicroGaugeTrans}
a_{s,u} = \, & \frac{1}{8} \lim_{\epsilon \rightarrow 0} \int_W \left(L(TW) - 4f_W^2 \right) \\
= \, & \frac{1}{8} \lim_{\epsilon \rightarrow 0} \left( \int_W L(TW) - \sigma \right) + \frac{1}{8}\left(\sigma  - \int_W \lambda^2\right) - \frac{1}{2}\int_W \left(\tilde{f}_W (\tilde{f}_W + \lambda) \right) \;, \notag
\end{align}
where $\sigma$ is the signature of the pairing on the relative cohomology $H^{2\ell+2}(W,M_c;\mathbbm{R})$. We used the fact that $F_W|_{M_c} = 0$ to reexpress its integral over $W$ as the pairing of the corresponding cohomology class with the fundamental homology class of $(W,\partial W)$, which we still write as an integral. Our aim will be now to compute more explicitly $a_{s,u}$. 

The Atiyah-Patodi-Singer theorem \cite{Atiyah1973} states that $\eta := \int_W L(TW) - \sigma$ is the eta invariant of the signature Dirac operator on $M_c$. It is easy to check that $\eta$ vanishes by applying again the Atiyah-Patodi-Singer theorem on $W' = M \times D^2$ with a direct product metric extending the metric on $M_c = M \times S^1$. The first term in  \eqref{EqHolMicroGaugeTrans} is therefore zero. 

We need more background to understand the last two terms of \eqref{EqHolMicroGaugeTrans}. Most of the following material can be found in \cite{Brumfiel1973}. Recall that for $X = \emptyset$ or $\partial W$, we have short exact sequences
\be
0 \rightarrow H^p_{\rm tors}(W,X;\mathbbm{Z}) \rightarrow H^p(W,X;\mathbbm{Z}) \rightarrow H^p_{\rm free}(W,X;\mathbbm{Z}) \rightarrow 0 \;.
\ee
$H^p_{\rm free}(W,X;\mathbbm{Z})$ coincides with the lattice in $H^p(W,X;\mathbbm{R})$ consisting of classes with integral periods. The relative cohomology groups themselves fit in a long exact sequence of cohomology groups
\be
... \rightarrow H^{2\ell+1}(\partial W; \mathbbm{Z}) \rightarrow H^{2\ell+2}(W, \partial W; \mathbbm{Z}) \stackrel{\iota}{\rightarrow} H^{2\ell+2}(W; \mathbbm{Z}) \stackrel{\rho}{\rightarrow} H^{2\ell+2}(\partial W; \mathbbm{Z}) \rightarrow ... \;.
\ee
This long exact sequence induces analogous long sequences for the torsion and free cohomology groups, which are however not exact. We will call $\iota_t$, $\rho_t$, $\iota_f$ and $\rho_f$ the analogues of $\iota$ and $\rho$ for the torsion and free cohomology groups, respectively.

Consider the finite group $G = {\rm ker} \rho_f / {\rm img} \iota_f$. The cup product pairing modulo 1 on $G$ is a symmetric non-degenerate pairing, which we will write $L_G$. By the definition of the Wu class, $\lambda$ is a characteristic element for the cup product pairing on $H^{2\ell+2}(W, \partial W; \mathbbm{Z})$, i.e.
\be
\int_W w \cup \lambda = \int_W w^2 \quad {\rm mod} \; 2
\ee
for all $w \in H^{2\ell+2}(W, \partial W; \mathbbm{Z})$. Theorem 2.4 of \cite{Brumfiel1973} shows that 
\be
\label{EqDefQROnW}
q_W(z) = \frac{1}{2}\int_W z \cup (z + \lambda) \quad {\rm mod} \; 1
\ee
for $z \in {\rm ker} \rho_f$ is a quadratic refinement of $L_G$, i.e. a $\mathbbm{R}/\mathbbm{Z}$-valued function on $G$ satisfying
\be
\label{EqDefQR}
q(z_1 + z_2) = q(z_1) + q(z_2) + L_G(z_1,z_2) \;, \quad q(nz_1) = n^2 q(z_1) \;.
\ee
A thorough presentation of the properties of quadratic refinements can be found in \cite{Taylor}. This characterizes the third term of \eqref{EqHolMicroGaugeTrans}, which is equal to $-q_W(\tilde{f}_W)$. 

Given a quadratic refinement $q$ on a finite group $G$, its Arf invariant is the $\mathbbm{R}/\mathbbm{Z}$-valued complex argument of the associated Gauss sum:
\be
{\rm Gauss}(q) = \sum_{z \in G} \exp 2 \pi i q(z) =: |{\rm Gauss}(q)| \exp 2 \pi i {\rm Arf}(q)
\ee
Theorem 2.8 of \cite{Brumfiel1973} shows that the second term in \eqref{EqHolMicroGaugeTrans} coincides with the Arf invariant of $q_W$.

We now relate $q_W$ to data on $\partial W$. If $z_1,z_2 \in H^{2\ell+2}(W;\mathbbm{Z})$ restrict to torsion classes $y_1$ and $y_2$ on $\partial W$, then Lemma 3.4 of \cite{Brumfiel1973} shows that
\be
L_G(z_1,z_2) = -L_T(y_1,y_2) \quad {\rm mod} \; 1 \;,
\ee
where $L_T$ denotes the linking pairing on $H^{2\ell+2}_{\rm tors}(\partial W;\mathbbm{Z})$. In consequence, we see that $q_W$ determines a quadratic refinement $q_c$ of $L_T$ by
\be
q_c(f_c) = -q_W(f_W) \quad {\rm mod} \; 1 \;.
\ee 
Now recall that $\partial W = M_c = M \times S^1$ and that $f_c = f_W|_{M_c} = s + u \cup t$, where $t$ is the pull-back of the generator of $H^{1}(S^1;\mathbbm{Z})$. We can therefore characterize the anomaly by
\be
a_{s,u} = q_c(s + u \cup t) - {\rm Arf}(q_c) \;,
\ee
where ${\rm Arf}()$ denotes the Arf invariant.

Let us now study the properties of $\mathbbm{Z}_2$-valued quadratic refinements of $L_T$ in more detail. Remark first that we have a canonical isomorphism
\be
\label{EqDecompTorsCohomMS1}
H^{2\ell+2}_{\rm tors}(M \times S^1;\mathbbm{Z}) = H^{2\ell+1}_{\rm tors}(M; \mathbbm{Z}) \cup t + H^{2\ell+2}_{\rm tors}(M; \mathbbm{Z}) \;,
\ee
where $t$ denotes the generator of $H^1(S^1;\mathbbm{Z})$.
$L_T$ is non-degenerate and the two summands in \eqref{EqDecompTorsCohomMS1} are Lagrangian. The restriction of a quadratic refinement to an isotropic subspace is linear, so  $q_c(u \cup t)$ is a linear functional on the first summand of \eqref{EqDecompTorsCohomMS1}. The linking pairing allows us to identify it with an element $\tilde{s}$ in the second summand. It follows that $q_c(\tilde{s} + u \cup t)$ is independent of $u$. 

Let us compute $q_c(s)$. This is especially easy, because $s$ extends to $W_0 = M \times D^2$. The Wu class of $W_0$ vanishes and we can take $\lambda = 0$. This implies that $q_c(s) = 0$ for all $s \in H^{2\ell+2}_{\rm tors}(M;\mathbbm{Z})$. We deduce that $q_c(\tilde{s} + u \cup t) = 0$ for all $u$. Computing $q_c$ using $W_0$ is possible thanks to our careful choice for the integral lift of the Wu class $\lambda$ at the beginning of this section. Indeed, $\lambda$ extends to a lift $\lambda_C$ on $W_C = W \ltxcup \bar{W}_0$, and \eqref{EqDefQROnW} vanishes modulo 1 on $W_C$ by an argument similar to the one in the proof of Proposition \ref{PropIndepBoundMan}.

Finally, we can compute the Arf invariant of $q_c$:
\begin{align}
{\rm Gauss}(q_c) = \, & \sum_{s,u} \exp 2\pi i q_c(s + u \cup t) \notag \\
= \, & \sum_{u} \exp 2\pi i q_c(u \cup t) \sum_s \exp 2\pi i L_M(s,u \cup t) = 1 \;,
\end{align}
so ${\rm Arf}(q_c) = 0$. On the first line, we sum on $s \in H^{2\ell+2}_{\rm tors}(M;\mathbbm{Z})$ and $u \in H^{2\ell+1}_{\rm tors}(M;\mathbbm{Z})$. We use the defining property of the quadratic refinement and the fact that $q_c(s) = 0$ to the second line. Then we notice that $\sum_s \exp 2\pi i L_M(s,u \cup t)$ vanishes unless $u = 0$, in which case it is equal to $1$.

We summarize the results of this section into the following proposition.
\begin{proposition}
For $c$ a trivial loop associated to a micro gauge transformation with torsion class $u \in H^{2\ell+1}_{\rm tors}(M;\mathbbm{Z})$, in the presence of a background torsion flux $s \in H^{2\ell+2}_{\rm tors}(M;\mathbbm{Z})$, we have
\be
h(c) = \exp 2\pi i q_c(s + u \cup t) \;,
\ee
for $q_c$ a quadratic refinement of the linking pairing of $M \times S^1$. There exists a unique $\tilde{s} \in H^{2\ell+2}_{\rm tors}(M;\mathbbm{Z})$ such that $h(c) = 1$ for all $u \in H^{2\ell+1}_{\rm tors}(M;\mathbbm{Z})$. 
\end{proposition}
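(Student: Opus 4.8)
The plan is to specialize the holonomy formula \eqref{EqHolForm} to a constant loop $c$, for which the mapping torus degenerates to the trivial product $M_c \cong M \times S^1$, and then to track the dependence on the micro gauge data $\check{B} = (u,v,0)$ and the torsion flux $s$. First I would assemble the differential cocycle $\check{A}_c$ on $M_c$ as in Corollary \ref{CorConstrDiffCocMapTor}, extend it to a shifted cocycle $\check{A}_W$ on a bounding $W$, and expand $a_{s,u}$ in the form \eqref{EqHolMicroGaugeTrans}, i.e. into an eta-invariant piece $\eta=\int_W L(TW)-\sigma$, an Arf-invariant piece, and a quadratic-refinement piece built from \eqref{EqDefQROnW}. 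The crucial preparatory move is the careful choice of integral lift $\lambda$ of the Wu class: gluing $W$ to the orientation-reversed $\bar{W}_0=\overline{M\times D^2}$ into a closed $W_C$ and demanding $\lambda_C|_{\bar{W}_0}=0$ is what guarantees both that the quadratic refinement descends correctly to $\partial W$ and that it can later be evaluated on classes extending over $M\times D^2$.

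Next I would dispose of the first two terms and establish the structure of the boundary refinement $q_c$. The eta term vanishes by a second application of the Atiyah--Patodi--Singer theorem \cite{Atiyah1973} on $M\times D^2$ with a product metric, since that manifold bounds $M\times S^1$ isometrically. For the quadratic-refinement term I would invoke Lemma 3.4 of \cite{Brumfiel1973} to obtain a quadratic refinement $q_c$ of the linking pairing $L_T$ on $\partial W=M\times S^1$, with $f_c=s+u\cup t$. Two structural facts about $q_c$ then follow: it vanishes on the summand $H^{2\ell+2}_{\rm tors}(M)$ (evaluated over $M\times D^2$, where $\lambda$ may be taken to vanish), and, being the restriction of a quadratic refinement to the isotropic summand $H^{2\ell+1}_{\rm tors}(M)\cup t$, the map $u\mapsto q_c(u\cup t)$ is a homomorphism into $\mathbbm{R}/\mathbbm{Z}$. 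Using the splitting \eqref{EqDecompTorsCohomMS1} into two Lagrangian halves and the defining identity \eqref{EqDefQR}, the Gauss sum ${\rm Gauss}(q_c)$ factorises and the inner sum $\sum_s \exp 2\pi i\, L_M(s,u\cup t)$ collapses to a delta forcing $u=0$, so ${\rm Arf}(q_c)=0$. Combining these facts yields $h(c)=\exp 2\pi i\, q_c(s+u\cup t)$.

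For the existence and uniqueness of $\tilde{s}$, I would exploit the non-degeneracy of $L_T$ between the two Lagrangian halves of \eqref{EqDecompTorsCohomMS1}, which identifies the linear functional $u\mapsto q_c(u\cup t)$ with a unique $\tilde{s}\in H^{2\ell+2}_{\rm tors}(M)$ satisfying $q_c(u\cup t)=-L_T(\tilde{s},u\cup t)$. Expanding $q_c(\tilde{s}+u\cup t)$ with \eqref{EqDefQR} and using $q_c(\tilde{s})=0$ together with the symmetry of $L_T$ collapses the expression to $0$ for every $u$, so $h(c)=1$. Uniqueness follows at once: any competing $\tilde{s}'$ would, by the same expansion, satisfy $L_T(\tilde{s}'-\tilde{s},\,u\cup t)=0$ for all $u$, whence $\tilde{s}'=\tilde{s}$ by non-degeneracy.

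The main obstacle is not any single calculation but the bookkeeping that makes the three topological invariants well defined on a manifold \emph{with} boundary and insensitive to the arbitrary choices. In particular, the delicate points are verifying that the quadratic refinement $q_W$ of \eqref{EqDefQROnW} descends with the correct sign to $q_c$ on $\partial W$ via Lemma 3.4 of \cite{Brumfiel1973}, and ensuring that the special choice of $\lambda$ (the one extending to $\bar{W}_0$ with $\lambda_C|_{\bar{W}_0}=0$, which is \emph{not} a generic integral lift) is simultaneously what permits the evaluations of $q_c$ over the auxiliary manifold $M\times D^2$. Once this invariance and the $M\times D^2$ evaluations are secured, the Gauss-sum and linear-functional arguments are routine.
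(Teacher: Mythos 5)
Your proposal is correct and follows essentially the same route as the paper's own proof: the same decomposition \eqref{EqHolMicroGaugeTrans} into eta, Arf, and quadratic-refinement pieces, the same vanishing arguments via Atiyah--Patodi--Singer on $M\times D^2$ and the Gauss-sum collapse, the same descent to $q_c$ via Lemma 3.4 of \cite{Brumfiel1973} enabled by the special choice of Wu lift $\lambda_C$ on $W_C$, and the same identification of $\tilde{s}$ through the Lagrangian splitting \eqref{EqDecompTorsCohomMS1}. Your explicit uniqueness argument via non-degeneracy of $L_T$ is slightly more detailed than the paper's, but it is the same idea.
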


This means the anomaly is absent if we pick the background gauge field to have torsion characteristic $\tilde{s}$ on $M$. For any other value of $s$, there is a $u$ such that $h(c) = -1$ for $c$ a trivial loop, and \eqref{EqHolForm} is ill-defined. As we will see in the next section, \eqref{EqHolForm} determines the theta characteristic $\tilde{\eta}$ required in Section \ref{SecAnomBundle} for the construction of the anomaly line bundle. So if $s \neq \tilde{s}$, we do not know how to define the anomaly line bundle. It seems therefore reasonable to consider $s = \tilde{s}$ as a consistency condition to be imposed in the background. We will always assume that this is the case in the following, and therefore that \eqref{EqHolForm} is well-defined.

\subsection{The physical theta characteristic}

\label{SecPhysCharSD}

In this section, we will define the theta characteristic $\tilde{\eta}$ appearing in the definition of the anomaly line bundle in Section \ref{SecAnomBundle}. Recall that holomorphic line bundles over a complex torus with a given first Chern class are classified by their \emph{theta characteristic} $\eta$. $\eta$ can be seen as a point in the (dual) torus, i.e. a vector in the universal covering of the torus modulo an integral vector. In our case, the fibers of $\mathcal{F}$ are of the form $\Omega^{2\ell+1}_{\rm closed}(M)/\Omega^{2\ell+1}_\mathbbm{Z}(M)$ and are trivially fibered over the complex torus $\mathcal{H}^{2\ell+1}(M)/\mathcal{H}^{2\ell+1}_\mathbbm{Z}(M)$, the quotient of the space of harmonic $2\ell+1$-forms on $M$ by the lattice of harmonic forms with integral periods. We write $\Lambda := \mathcal{H}^{2\ell+1}_\mathbbm{Z}(M)$ in the following. The complex structure is provided by the Hodge star operator, and the curvature of the line bundle over the torus is $R_\Omega$ as defined in Section \ref{SecConnAnBun}.

The theta characteristic encodes the holonomies of the holomorphic connection on the line bundle along a basis of straight loops in the torus. In our case, these straight loops are a special class of vertical loops, lifting to paths in $\mathcal{B}$ of the form 
\be
\label{EqLiftVertStraightLoop}
p_A = (tA, g) \;, \quad t \in I \;,
\ee
where $g$ is a fixed metric and $A \in \Lambda$. Consider now three elements of $\Lambda$ satisfying $A_1 + A_2 = A_3$, the associated paths $p_i \subset \mathcal{B}$ and the loops $c_i \in \mathcal{F}$ they cover. We have
\be
\label{EqRelHolCurvA}
{\rm hol}_\mathscr{A}(c_1) {\rm hol}_\mathscr{A}(c_2) {\rm hol}_\mathscr{A}(c_3)^{-1} = \exp \int_{\Delta_{123}} R_\Omega = \exp -\pi i \int_M A_1 \wedge A_2 \;,
\ee
where $\Delta_{123}$ is the triangle defined by the three vectors. Define the function $q(A) = -\frac{1}{2\pi i} \ln {\rm hol}_\mathscr{A}(c)$, where $A \in \Lambda$ and $c$ is the associated loop in $\mathcal{F}$. Recall that quadratic refinement are $\mathbbm{R}/\mathbbm{Z}$-valued functions on an abelian group satisfying \eqref{EqDefQR}. We will call \emph{refinements} $\mathbbm{R}/\mathbbm{Z}$-valued functions satisfying only the first equation of \eqref{EqDefQR}.
\eqref{EqRelHolCurvA} tells us that $q$ is a refinement of half the wedge product pairing on $\Lambda$:
\be
q(A_1 + A_2) - q(A_1) - q(A_2) = \frac{1}{2} \int_M A_1 \wedge A_2 =: L_{\wedge}(A_1, A_2) \quad {\rm mod} \; 1 \;.
\ee
Refinements of $L_{\wedge}(A_1, A_2)$ can be parameterized by theta characteristics as follows. Fix a Lagrangian decomposition $\Lambda = \Lambda_+ \oplus \Lambda_-$ with respect to the wedge product pairing, and write $A_+$ and $A_-$ for the components of $A$ on $\Lambda_+$ and $\Lambda_-$. Any refinement can be expressed as 
\be
\label{EqDefQRChar}
\int_M \left( \frac{1}{2} A_+ \wedge A_- + \eta \wedge A \right)
\ee
for a certain $\eta \in \mathcal{H}^{2\ell+1}(M)/\Lambda$. 

We will show in Proposition \ref{ProphcQuadrRef} that $h(c)$ satisfies as well the relation \eqref{EqRelHolCurvA}. Therefore, we would like to define $\tilde{\eta}$ to be such that
\be
\label{EqDefPhysChar}
\exp 2 \pi i \int_M \left( \frac{1}{2} A_+ \wedge A_- + \tilde{\eta} \wedge A \right) = h(c)^{-1} \;.
\ee
Before doing so, there is an important point to check. The construction of Section \ref{SecAnomBundle} is possible only if $\tilde{\eta}$ is half-integral, i.e. $\tilde{\eta} \in \frac{1}{2} \Lambda/\Lambda$. Indeed, only for such $\tilde{\eta}$ can we interpret the theta function $\theta^{\tilde{\eta}}$ as the pull-back to $\mathcal{B}_0$ of the section of a holomorphic line bundle over $\mathcal{A}$. This half-integrality property will be shown in Proposition \ref{ProphStrVertLoopPm1}, and it is therefore consistent to define $\tilde{\eta}$ by \eqref{EqDefPhysChar}. We will also show that $q$ satisfies the second equation of \eqref{EqDefQR}, so it is in fact a quadratic refinement.

Of course, the definition of $\tilde{\eta}$ above immediately implies that \eqref{EqHolFormCl} holds for straight vertical loops (see for instance Section 3.1 of \cite{MR2062673}). We will show that it holds for all vertical loops in Section \ref{SecVertLoops}.

\section{Proof of the anomaly formula}

\label{SecProofAnForm}

In \cite{Monnier2011a} we gave strong evidence that equation \eqref{EqHolFormCl}, identifying $h$ with the holonomy function of the anomaly line bundle, holds for background field configurations with vanishing (form) gauge field, corresponding to loops contained in the $A = 0$ section of $\mathcal{F}$. The aim of this section is to show that under the assumption that \eqref{EqHolFormCl} indeed holds for this class of loops, then \eqref{EqHolFormCl} holds for all loops in $\mathcal{F}$.

In Section \ref{SecHorizLoopVanGaugField}, we consider horizontal loops with vanishing gauge field, discuss the arguments of \cite{Monnier2011a} and introduce our assumption. In Section \ref{SecCompCurv}, we show that \eqref{EqHolFormCl} holds for contractible loops, which amounts to checking that the curvatures of $\nabla_{\mathscr{A}}$ and of the Chern-Simons connection defined by \eqref{EqHolForm} agree. This allows us to show that if two loops are homotopic and \eqref{EqHolFormCl} holds for one of them, then it holds for the other. In Section \ref{SecVertLoops}, we turn to vertical loops, whose holonomies describe the gauge global anomaly and show that \eqref{EqHolFormCl} holds for such loops. In Section \ref{SecHorLoopVanWuLift}, we show that \eqref{EqHolFormCl} holds for generic horizontal loops. We conclude that \eqref{EqHolFormCl} holds for all loops in Section \ref{SecHolFormGenLoops}.

\subsection{Horizontal loops with vanishing gauge field and an assumption}

\label{SecHorizLoopVanGaugField}

Consider a loop $c \subset \mathcal{F}$ lifting to a path in $\mathcal{B} = \Omega_{\rm closed}^{2\ell+1}(M) \times \mathcal{M}$ of the form $(0, p_2)$, i.e. the background gauge field $A \in \Omega_{\rm closed}^{2\ell+1}(M)$ vanishes over each point of $c$ as a differential form. The gluing constraints \eqref{EqCondLoopLift} on the path lift require $p_2(1) = \phi^\ast p_2(0)$ for some diffeomorphism $\phi$. In \cite{Monnier2011a}, it was shown that the holonomies along $c$ of $\nabla_{\mathscr{A}^4}$, the connection on the fourth tensor power of $\mathscr{A}$, are given by the formula
\be
\frac{1}{2\pi i} {\rm hol}_{\mathscr{A}^4}(c) = \lim_{\epsilon \rightarrow 0}\frac{1}{2} \int_W \left(L(TW) - 4F_W^2 \right) \;,
\ee
where $2F_W$ is a form lift of the Wu class vanishing on $\partial W$. It was then argued that the holonomies of $\nabla_{\mathscr{A}}$ along $c$ could be obtained by simply dividing the right-hand side by $4$:
\be
\label{EqHolLoopVanGaugField}
\frac{1}{2\pi i} {\rm hol}_{\mathscr{A}}(c) \stackrel{?}{=} \lim_{\epsilon \rightarrow 0} \frac{1}{8} \int_W \left(L(TW) - 4F_W^2 \right) \;,
\ee
but this was not completely proven. \eqref{EqHolLoopVanGaugField} is obviously equivalent to \eqref{EqHolFormCl} restricted on horizontal loops with vanishing gauge field.

We showed in \cite{Monnier2011a} that as a topological line bundle, $\mathscr{A}^4$ admits only two fourth roots, $\mathscr{A}$ and $\mathscr{A} \otimes \mathscr{T}$, where $\mathscr{T}$ is an order 2 torsion line bundle already mentioned in Section \ref{SecAnomBundle}. We can therefore characterize the ambiguity in \eqref{EqHolLoopVanGaugField} as follows: either \eqref{EqHolLoopVanGaugField} holds, or \eqref{EqHolLoopVanGaugField} holds only if the right-hand side is multiplied by a $\mathbbm{Z}_2$-valued character of the mapping class group of $M$, describing the holonomies of the flat connection on $\mathscr{T}$. \eqref{EqHolLoopVanGaugField} not being true would signal a global gravitational anomaly of type IIB supergravity. \eqref{EqHolLoopVanGaugField} is also required for the cancellation of the five-brane worldvolume anomalies \cite{Monnierb}. In the rest of the paper, we will therefore make the following assumption:
\begin{assumption}
\label{AssumpHolHorLoops}
\eqref{EqHolLoopVanGaugField} holds, or equivalently \eqref{EqHolFormCl} holds for $c$ a horizontal loop with vanishing gauge field. 
\end{assumption}
Given this assumption, we will show that \eqref{EqHolFormCl} holds for all loops in $\mathcal{F}$.

\subsection{Contractible loops}

\label{SecCompCurv}

In this section, we show that \eqref{EqHolFormCl} holds for a contractible loop $c$. As $c$ is contractible, we can fill it with a 2-dimensional disk $D^2$ in $\mathcal{F}$. We can lift this disk to a disk in $\mathcal{B}$. The lift determines a metric $g_x$ on the fiber $M_x$ of $W = M \times D^2$ above $x \in D^2$. We pick a flat metric $g_{D^2}$ on $D^2$ and construct the metric $g_W = g_x \oplus g_{D^2}/\epsilon$. The construction of Section \ref{SecLoopMapTor} also determines a differential cocycle $\check{A}_W = (f_W, h_W, F_W)$ from the embedding $D^2 \subset \mathcal{B}^{\rm N}$.

\begin{proposition}
\label{PropCheckContrLoop}
\eqref{EqHolFormCl} holds when $c$ is a contractible loop. 
\end{proposition}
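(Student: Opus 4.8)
The plan is to reduce the identity \eqref{EqHolFormCl} for a contractible loop to an equality of curvature two-forms. Since $c = \partial D^2$ bounds the disk used to build $W = M \times D^2$ in the construction preceding the statement, the right-hand side of \eqref{EqHolForm} can be evaluated by first integrating the characteristic form $\frac{1}{8}\left(L(TW) - 4F_W^2\right)$ over the fiber $M$ and then over $D^2$. This exhibits $h(c) = \exp\big(2\pi i \int_{D^2} \Theta\big)$, where $\Theta := \frac{1}{8}\int_{M} \left(L(TW) - 4F_W^2\right)$ denotes the fiber integral, a two-form on $D^2$; this is precisely the curvature of the associated Chern-Simons line bundle in the sense of \cite{Freed:1992vw}. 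On the other hand, since $c$ is contractible, Stokes' theorem gives ${\rm hol}_\mathscr{A}(c) = \exp\big(-\int_{D^2} R_\mathscr{A}\big)$ for a suitable orientation of $D^2$. Hence the proposition follows once we show that $\Theta$ and $-\frac{1}{2\pi i} R_\mathscr{A}$ agree as two-forms on the disk, i.e. that the Chern-Simons curvature reproduces the anomaly curvature \eqref{EqCurvAnBun}.

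To establish this curvature equality I would exploit the diagonal structure noted after \eqref{EqCurvAnBun}. The Hirzebruch genus $L(TW)$ depends only on the metric $g_W$, while $F_W^2$ depends only on the gauge field, and there are no cross terms; consequently $\Theta$ splits into a gravitational piece and a gauge piece, matching the decomposition $R_\mathscr{A} = R_\mathcal{M} + R_\Omega$. The two pieces can then be treated independently.

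The gauge piece is a direct computation. Over the disk the gauge field is a two-parameter family $A_{t,s}$, and, exactly as in the proof of Proposition \ref{PropHolIndepPathLift}, its curvature takes the fiberwise form $F_W = dt \wedge \partial_t A + ds \wedge \partial_s A$, so that $\frac{1}{8}\int_M (-4 F_W^2) = -\frac{1}{2} \int_M \partial_t A \wedge \partial_s A \; dt \wedge ds$. Writing $A$ in a Darboux basis of harmonic representatives reduces the wedge-product pairing to the symplectic pairing, and the result is $\frac{1}{2\pi i}$ times $R_\Omega = -2\pi i\, da^j \wedge db_j$ of \eqref{EqDefROmega}, once the overall orientation and holonomy sign conventions are fixed consistently on both sides.

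The gravitational piece is the main obstacle. Here one must show that the adiabatic limit of the fiber integral $\frac{1}{8}\int_M L(TW)$ reproduces $R_\mathcal{M} = \frac{1}{2} h^{ik} h^{jl} d\tau_{ij} d\bar{\tau}_{kl} + d\omega$, \emph{including} the Cheeger half-torsion term $d\omega$, which is not visible at the level of naive characteristic forms. Proposition \ref{PropCharFormIndepgB}, applied with the flat metric $g_{D^2}$, ensures that the adiabatic limit of the relevant characteristic forms is well behaved and independent of the base metric, so that $\Theta$ is intrinsic. The identification of this limit with $R_\mathcal{M}$ is exactly the statement that the local gravitational anomaly of the self-dual field is carried by $\frac{1}{8}L(TW)$, which is established in \cite{Monnier2011, Monnier2011a} (the local part going back to \cite{AlvarezGaume:1983ig}). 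Granting this, $\Theta = -\frac{1}{2\pi i} R_\mathscr{A}$ as two-forms, their integrals over $D^2$ coincide, and \eqref{EqHolFormCl} follows. Equivalently, the Chern-Simons and anomaly connections share a curvature, so their ratio is a flat connection, whose holonomy is trivial along the contractible loop $c$.
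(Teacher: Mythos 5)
Your proposal is correct and follows essentially the same route as the paper: the holonomy of $\nabla_\mathscr{A}$ along a contractible loop is rewritten as the integral of $R_\mathscr{A}$ over the bounding disk, the diagonal splitting $R_\mathscr{A} = R_\Omega + R_\mathcal{M}$ reduces the claim to matching the gauge piece (a direct fiberwise computation with $F_W$) and the gravitational piece (identifying the adiabatic limit of $\frac{1}{8}\int_M L(TW)$ with $\frac{1}{2\pi i}R_\mathcal{M}$, which the paper gets from the Bismut--Freed curvature of the signature determinant line bundle as established in the earlier work you cite). The only quibble is the constant in your gauge computation: the cross term in squaring gives $F_W^2 = -2\, dt\wedge ds\wedge \partial_t A\wedge \partial_s A$, not the coefficient you wrote, but this washes out in the identification with $R_\Omega$ that both you and the paper assert.
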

\begin{proof}
We can reexpress the holonomy of the connection $\nabla_\mathscr{A}$ around $c$ as the integral of its curvature $R_\mathscr{A}$ over the disk bounded by $c$. We therefore would like to prove:
\be
\label{EqHolFormCurvAnBun}
\lim_{\epsilon \rightarrow 0} \frac{1}{8} \int_W \left(  L(TW) - 4 F_W^2 \right) = \frac{1}{2\pi i} \int_{D^2} R_{\mathscr{A}}\;.
\ee
Recall that in Section \ref{SecConnAnBun}, we showed that $R_\mathscr{A}$ pulls back to $R_\Omega + R_\mathcal{M}$ on $\mathcal{B} = \Omega_{\rm closed}^{2\ell+1}(M) \times \mathcal{M}$, where $R_\Omega$ is the pullback of a two-form on $\Omega_{\rm closed}^{2\ell+1}(M)$ and $R_\mathcal{M}$ is the pull-back of a two-form on $\mathcal{M}$. Let us write $\mathscr{D}$ for the determinant line bundle of the signature operator, seen as a line bundle over $\mathcal{M}$. We showed in \cite{Monnier:2010ww} that $R_\mathcal{M}$ is one-fourth of the curvature $R_{\mathscr{D}^{-1}}$ of the Bismut-Freed connection on $\mathscr{D}^{-1}$. Given the setup above, we have
\be
\frac{1}{2\pi i} R_{\mathscr{D}}(x) = -\lim_{\epsilon \rightarrow 0} \frac{1}{2}\left(\int_{M_x} L(TW) \right)^{(2)} \;,
\ee
where $(.)^{(2)}$ denotes the 2-form component. In consequence, we see that the term
\be
\lim_{\epsilon \rightarrow 0} \frac{1}{8} \int_W L_\epsilon(TW)
\ee
in \eqref{EqHolForm} is exactly the integral of $\frac{1}{2\pi i}R_\mathcal{M}$ over the 2-dimensional disk $D^2$.

What remains to be shown is that the second term on the left-hand side of \eqref{EqHolFormCurvAnBun} is the integral of $\frac{1}{2\pi i}R_\Omega$ on $D^2$. As $c$ is homotopically trivial, the curvature $F_c$ of $\check{A}_c$ on $M_c = M \times S^1$ is exact and we can take $F_W$ to be exact: $F_W = dA_W$. Let us expand $dA_W = \partial_i A_W dx^i + d_V A_W$, where $x^i$, $i = 1,2$ are coordinates on $D^2$ and $d_V$ is the differential on  vertical forms. We find that 
\be
\label{EqIntF2ContrCycle}
-\frac{1}{2} \int_W F^2_W = - \int_{D^2} dx^1 dx^2 \int_M \partial_1 A_W \wedge \partial_2 A_W \;.
\ee
Now $-\int_M \partial_1 A_W \wedge \partial_2 A_W$ is easily checked to be $\frac{1}{2\pi i}R_{\Omega}$ applied to the tangent vectors $\partial_1 A_W$ and $\partial_2 A_W$, so \eqref{EqIntF2ContrCycle} is indeed the integral of $\frac{1}{2\pi i}R_\Omega$ over $D^2$.
\end{proof}

\begin{proposition}
\label{PropCompHolHomLoops}
If $h(c) = {\rm hol}(\nabla_\mathscr{A},c)$ for some loop $c$, then $h(c') = {\rm hol}(\nabla_\mathscr{A},c')$ for any loop $c'$ homotopic to $c$.
\end{proposition}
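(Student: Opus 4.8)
The plan is to show that the two $U(1)$-valued quantities $h$ and ${\rm hol}(\nabla_\mathscr{A},\cdot)$ change by the \emph{same} factor under a free homotopy, so that their ratio is a homotopy invariant; since this ratio equals $1$ for $c$ by hypothesis, it will equal $1$ for $c'$. The argument reuses, almost verbatim, the fiber-integration computations already carried out for Proposition \ref{PropCheckContrLoop}.

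First I would realize a homotopy between $c = c_0$ and $c' = c_1$ as a map $H : \Sigma \rightarrow \mathcal{F}$ with $\Sigma = S^1 \times I$ and $H|_{S^1 \times \{i\}} = c_i$. Applying the constructions of Section \ref{SecLoopMapTor} to the two-parameter base $\Sigma$ instead of $S^1$ --- exactly as in the proof of Proposition \ref{PropHolIndepPathLift} --- yields a fiber bundle $E \rightarrow \Sigma$ with fiber $M$, an adiabatic metric of the form \eqref{EqMetFibInF}, and a differential cocycle $\check{A}_E = (f_E, h_E, F_E)$ restricting to the mapping-torus data on $M_{c}$ and $M_{c'}$. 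Orienting $\Sigma$ so that $\partial E = \bar{M}_{c} \sqcup M_{c'}$, I note that $E$ is a cobordism from $M_c$ to $M_{c'}$; hence if $W$ bounds $M_c$ with its extended data, then $W' := W \cup_{M_c} E$ bounds $M_{c'}$, so this step requires \emph{no further cobordism-vanishing hypothesis}.

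Next, using Proposition \ref{PropIndepBoundMan} to evaluate $h(c')$ on $W'$ together with additivity of the integral over $W' = W \cup E$, I obtain
\[
\frac{1}{2\pi i}\ln\frac{h(c')}{h(c)} = \lim_{\epsilon \rightarrow 0} \frac{1}{8} \int_{E} \left( L(TE) - 4 F_E^2 \right) \;.
\]
The right-hand side is then computed by repeating the two fiber-integration arguments from the proof of Proposition \ref{PropCheckContrLoop}: the $L(TE)$ term integrates fiberwise to the curvature $R_\mathcal{M}$ pulled back from $\mathcal{M}$, and the $F_E^2$ term to $R_\Omega$ pulled back from $\Omega^{2\ell+1}_{\rm closed}(M)$. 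The key point is that both computations are local on the base and use only that $F_E$ vanishes on the fibers $M_{t,s}$ (never the exactness of $F_W$ that was special to contractible loops), so they apply unchanged over $\Sigma$. Summing the two contributions gives $\ln\bigl(h(c')/h(c)\bigr) = \int_\Sigma R_\mathscr{A}$.

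Finally, the standard curvature--holonomy relation for the abelian connection $\nabla_\mathscr{A}$ gives ${\rm hol}(\nabla_\mathscr{A},c')/{\rm hol}(\nabla_\mathscr{A},c) = \exp \int_\Sigma R_\mathscr{A}$ for the same oriented cylinder, so the two change-factors coincide and the claim follows from $h(c) = {\rm hol}(\nabla_\mathscr{A},c)$. The main obstacle I anticipate is bookkeeping rather than conceptual: (i) constructing $E$ and gluing it to $W$ with compatible adiabatic metrics --- $g_B$ flat on $\Sigma$ and a product collar near $M_c$ matching that of $W$, and a matching extension of $\check{A}_E$ --- so that the additivity of Proposition \ref{PropIndepBoundMan} and the adiabatic limit are legitimate; and (ii) fixing all orientation conventions so that $h$ and ${\rm hol}_\mathscr{A}$ transform by identical (rather than inverse) factors. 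The fiber-integration identities themselves are already established in Proposition \ref{PropCheckContrLoop}, so no new analytic input is needed.
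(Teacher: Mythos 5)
Your proposal is correct, and it is organized differently from the paper's proof even though it rests on the same two pillars: the integrality/gluing statement of Proposition \ref{PropIndepBoundMan} and the fiber-integration identities behind Proposition \ref{PropCheckContrLoop}. The paper never integrates curvature over the homotopy cylinder. It cuts the cylinder into a square, forms the null-homotopic boundary loop $c''=c\ast\gamma\ast c'^{-1}\ast\gamma^{-1}$ (where $\gamma$ is the image of the vertical edge), gets ${\rm hol}(\nabla_\mathscr{A},c'')={\rm hol}(\nabla_\mathscr{A},c')/{\rm hol}(\nabla_\mathscr{A},c)$ from cancellation of the parallel transports along the two copies of $\gamma$, gets $h(c'')=h(c')/h(c)$ by gluing $\bar{W}$, $W'$ and $M_c\times I$ into a closed manifold and invoking Proposition \ref{PropIndepBoundMan}, and then applies Proposition \ref{PropCheckContrLoop} to the contractible loop $c''$ as a black box. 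Your route instead evaluates both change-factors directly as $\exp\int_\Sigma R_\mathscr{A}$. What this buys is transparency (an explicit curvature formula for the ratio); what it costs is precisely the step you flag as the key point: Proposition \ref{PropCheckContrLoop} is proved over a disk, where the base lifts globally to $\mathcal{B}$ and $F_W$ may be taken globally exact, so you must re-establish the fiber-integration identities over the non-simply-connected base $\Sigma$. Your locality claim is right in substance but slightly too weak as stated: fiberwise vanishing of $F_E$ only gives $F_E^2=-2\,dt\wedge ds\wedge F^{\rm t}\wedge F^{\rm s}$, and to identify $\int_M F^{\rm t}\wedge F^{\rm s}$ with $R_\Omega$ evaluated on the tangent vectors of the family you also need $F^{\rm t}=\partial_t A_{t,s}$ and $F^{\rm s}=\partial_s A_{t,s}$ for local lifts $A_{t,s}$ of the homotopy, i.e.\ local exactness of $F_E$, which is supplied by the construction of $\check{A}_E$ (as in the proof of Proposition \ref{PropHolIndepPathLift}) rather than by fiberwise vanishing alone. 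With that point repaired, and with the collar/orientation bookkeeping you already anticipate, your argument goes through; note also that the differential-cocycle-on-the-cylinder construction you spell out is implicitly needed in the paper's gluing step as well, so neither route avoids it.
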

\begin{proof}
Any homotopy between $c$ and $c'$ defines a cylinder $C \subset \mathcal{F}$. We can see this cylinder as a square $I \times I$, with the identification $(0,s) \simeq (1,s)$, and $c'(t) = (t,1)$, $c(t) = (t,0)$. Let $c''$ be the loop defined by the boundary of the square. It is clear that
\be
\label{EqRelHolHomoLoops}
{\rm hol}(\nabla_\mathscr{A},c'') = {\rm hol}(\nabla_\mathscr{A},c')/{\rm hol}(\nabla_\mathscr{A},c) \;,
\ee
because the holonomies of $\nabla_\mathscr{A}$ along the paths $\{(0,s)\}|_{s \in I}$ and $\{(1,s)\}|_{s \in I}$ cancel. 

As $c$ and $c'$ are homotopic, the associated mapping tori $M_{c}$ and $M_{c'}$ are diffeomorphic, so we have a fiber bundle $M_{cc'} = M_c \times I \rightarrow C$ with fiber $M$. Assume $h(c)$ and $h(c')$ are computed from \eqref{EqHolForm} using manifolds $W$ and $W'$ bounded by $M_c$ and $M_{c'}$. Then we can glue $\bar{W}$ and $W'$ to $M_{cc'}$ to obtain a closed manifold $Z$. As we have shown in Proposition \ref{PropIndepBoundMan} that the right-hand side of \eqref{EqHolForm} is an integer on any closed manifold, we obtain the relation 
\be
\label{EqRelHolFormHomoLoops}
h(c'') = h(c')/h(c) \;.
\ee

As $c''$ is a contractible loop, we know that ${\rm hol}(\nabla_\mathscr{A},c'') = h(c'')$. Combining this fact with the hypothesis $h(c) = {\rm hol}(\nabla_\mathscr{A},c)$, we immediately deduce from \eqref{EqRelHolHomoLoops} and \eqref{EqRelHolFormHomoLoops} that 
\be
{\rm hol}(\nabla_\mathscr{A},c') = h(c') \;.
\ee 
\end{proof}

\subsection{Vertical loops}

\label{SecVertLoops}

Recall that vertical loops in $\mathcal{F}$ are loops along which the equivalence class of the metric is constant. The holonomies of $\nabla_\mathscr{A}$ along vertical loops therefore represent pure gauge anomalies. We start by showing that the formula \eqref{EqHolForm} for $h(c)$ simplifies when $c$ is a vertical loop. We recover in \eqref{EqWitGaugGlobAnForm} a formula of Witten \cite{Witten:1996hc} for the global gauge anomalies, up to the fact that an adiabatic limit has to be taken. The presence of the adiabatic limit is natural, because without it the global gauge anomalies would depend on an unphysical choice of a metric on $S^1$ (see Section \ref{SecLoopMapTor} and Proposition \ref{PropHolIndepMetS1}).  

Consider a vertical loop $c$ in the space of background fields $\mathcal{F}$ and assume the usual construction leading to \eqref{EqHolForm} has been carried out. Assume that there is a natural form lift $\lambda_W$ of the Wu class given by a characteristic form. For instance, if we are working in the category of spin manifolds as in \cite{Witten:1996hc}, $\lambda_W$ can be taken to be half the first Pontryagin form of $W$. 
\begin{proposition}
\label{PropWittForm}
For $c$ a vertical loop, we have
\be
\label{EqWitGaugGlobAnForm}
\frac{1}{2\pi i} \ln h(c) = - \frac{1}{8} \lim_{\epsilon \rightarrow 0} \int_W \left( 4F_W^2 - \lambda_W^2 \right) \;.
\ee
\end{proposition}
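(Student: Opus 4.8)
The plan is to reduce \eqref{EqWitGaugGlobAnForm} to the holonomy formula \eqref{EqHolForm} by showing that the two right-hand sides agree, which amounts to replacing $L(TW)$ by $\lambda_W^2$ at the cost of an integer. Indeed, the $F_W^2$ terms in \eqref{EqHolForm} and \eqref{EqWitGaugGlobAnForm} are identical, so, recalling that $h(c)$ has unit modulus and hence $\frac{1}{2\pi i}\ln h(c)$ is valued in $\mathbbm{R}/\mathbbm{Z}$, it suffices to prove that
\be
\frac{1}{8} \lim_{\epsilon \rightarrow 0} \int_W \left( L(TW) - \lambda_W^2 \right) \in \mathbbm{Z} \;.
\ee

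First I would show that this quantity is independent of the choice of $W$ modulo integers, reusing the gluing argument of Proposition \ref{PropIndepBoundMan}. Given two admissible manifolds $W_1$ and $W_2$ bounding $M_c$, gluing $W_1$ to $\bar W_2$ along $M_c$ yields a closed $(4\ell+4)$-manifold on which $\int L(TW) = \sigma$ by the Hirzebruch signature theorem. Since $2\ell+2$ is the middle degree of $W$, the mid-degree property \eqref{EqDefCharEl} of the Wu class shows that $\lambda_W$ is a characteristic element of the unimodular wedge product pairing, so that $\int \lambda_W^2 \equiv \sigma$ modulo $8$ (Theorem 2.8 of \cite{Brumfiel1973}). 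Hence $\frac{1}{8}\int (L(TW) - \lambda_W^2)$ is an integer on the closed manifold, which makes the expression above independent of $W$ modulo $\mathbbm{Z}$; its independence of the metric extension in the adiabatic limit follows as in Propositions \ref{PropHolIndepMetS1} and \ref{PropCharFormIndepgB}, since both $L(TW)$ and $\lambda_W$ are characteristic forms.

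It then remains to evaluate the expression on one convenient $W$. This is where the verticality of $c$ enters: by definition the metric on $M$ is constant along $c$ and $\phi$ may be taken to be the identity, so $M_c \simeq M \times S^1$ and I may choose $W = M \times D^2$ endowed with the Riemannian product metric $g_M \oplus g_{D^2}/\epsilon$. In a product the curvature is block diagonal and the Pontryagin forms of the surface $D^2$ vanish, so $L(TW)$ and the characteristic form $\lambda_W$ are both pulled back from $M$; their components of degree $4\ell+4$ vanish identically because $\dim M = 4\ell+2$, so that $\int_W L(TW) = \int_W \lambda_W^2 = 0$ and the adiabatic limit is trivial. Combined with the $W$-independence established above, this forces the difference to be an integer for every admissible $W$, which yields \eqref{EqWitGaugGlobAnForm}. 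I expect the main obstacle to be the identification of $\lambda_W$ as a characteristic element, required to invoke the signature-modulo-$8$ statement and close the integrality argument; once that is in place, the vanishing on the product $W$ is a matter of degree counting.
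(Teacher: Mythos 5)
Your proof is correct and takes essentially the same route as the paper: reduce the claim to the integrality of $\frac{1}{8}\lim_{\epsilon\to 0}\int_W\left(L(TW)-\lambda_W^2\right)$, establish integrality on closed manifolds via the signature theorem together with the characteristic-element property of $\lambda_W$ (Theorem 2.8 of \cite{Brumfiel1973}, exactly as in Proposition \ref{PropIndepBoundMan}), and then exploit verticality to evaluate on $M\times D^2$ with a product metric, where both terms vanish by degree counting. The paper merely phrases the last two steps as gluing $W$ to $W'=M\times D^2$ bounded by $\bar{M_c}$ and writing $\Delta_W = \Delta_{W\cup W'}-\Delta_{W'}$, which is the same argument.
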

\begin{proof}
The difference between \eqref{EqHolForm} and \eqref{EqWitGaugGlobAnForm} reads
\be
\label{EqRohlinInv}
\Delta_W = \lim_{\epsilon \rightarrow 0} \frac{1}{8} \int_W \left(L(TW) - \lambda_W^2 \right) \;,
\ee
so we only need to show that this expression is an integer. We already saw in the proof of Proposition \ref{PropIndepBoundMan} that $\Delta_{W_{\rm closed}}$ is an integer for any closed manifold $W_{\rm closed}$. Therefore, if we can find a manifold $W'$ bounded by $\bar{M_c}$ ($M_c$ with its orientation reversed) such that $\Delta_{W'}$ is an integer, then $\Delta_W = \Delta_{W \cup W'} - \Delta_{W'}$ is an integer and the first part of the proposition is proven.

We can take $W' = M \times D^2$, with a constant metric along the fibers and a metric on the disk which is a direct product in a neighborhood $N \simeq M \times S^1 \times [0,1[$ of the boundary. Then over $N$, both $L(TW')$ and $\lambda_{W'}^2$ are pull-back from the boundary and therefore vanish. They define trivial classes in the relative de Rham cohomology of $W'$ and their integral over $W'$ is therefore zero.
\end{proof}

We now focus on straight vertical loops, which are loops in $\mathcal{F}$ lifting to paths of the form $(tA,g_0) \in \mathcal{B}$, with $A$ an element of the lattice $\Lambda$ of harmonic forms with integral periods. Recall that in \eqref{EqDefPhysChar}, we defined the theta characteristic $\tilde{\eta}$ from $h(c)$. For the consistency of the construction of the anomaly line bundle $\mathscr{A}$ in Section \ref{SecAnomBundle}, we need $\tilde{\eta}$ to be an element of $\frac{1}{2} \Lambda/\Lambda$. We will show this in Proposition \ref{ProphStrVertLoopPm1}, using the following two propositions.

\begin{proposition}
\label{ProphcQuadrRef}
Let $A_1, A_2 \in \Lambda$ be the harmonic forms associated to two straight vertical loops $c_1, c_2 \subset \mathcal{F}$ over $g_0$. Let $c_3$ be the straight vertical loop associated to $A_1 + A_2$, and $\Delta_{123} \subset \mathcal{F}$ the triangle determined by $c_1$, $c_2$ and $c_3$. We have
\be
h(c_1)h(c_2)h(c_3)^{-1} = \exp \int_{\Delta_{123}} R_{\Omega} = \exp -\pi i \int_M A_1 \wedge A_2 \;.
\ee
\end{proposition}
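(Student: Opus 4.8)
The plan is to prove the two equalities in the statement separately. The second equality, $\exp \int_{\Delta_{123}} R_\Omega = \exp\left(-\pi i \int_M A_1 \wedge A_2\right)$, is a purely geometric computation on the triangle $\Delta_{123}$, and I would dispatch it first. The three straight vertical loops $c_1, c_2, c_3$ lift to the edges of a triangle in $\mathcal{B}$ with vertices $0$, $A_1$, $A_3 = A_1 + A_2$, sitting in a single fiber over the fixed metric $g_0$. Since $R_\Omega = -2\pi i\, da^j \wedge db_j$ is precisely the curvature two-form pulled back from $\Omega^{2\ell+1}_{\rm closed}(M)$ (see \eqref{EqDefROmega}), integrating it over the flat triangle with these vertices gives $-2\pi i$ times the symplectic area, which in terms of the wedge-product pairing is $-\tfrac{1}{2}\int_M A_1 \wedge A_2$. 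Exponentiating yields $\exp\left(-\pi i \int_M A_1 \wedge A_2\right)$; this is the same computation already recorded in \eqref{EqRelHolCurvA}, now read off directly from $R_\Omega$ rather than from $R_\mathscr{A}$, which is legitimate because the loops are vertical and so only the $R_\Omega$ piece of $R_\mathscr{A} = R_\Omega + R_\mathcal{M}$ contributes.

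The substantive part is the first equality, $h(c_1) h(c_2) h(c_3)^{-1} = \exp \int_{\Delta_{123}} R_\Omega$. The strategy is to realize the composite loop $\gamma := c_1 \cdot c_2 \cdot c_3^{-1}$ as the boundary of the triangle $\Delta_{123}$, which is a contractible disk in $\mathcal{F}$, and then invoke Proposition \ref{PropCheckContrLoop}. Concretely, $\gamma$ bounds the two-disk $\Delta_{123}$, so by the cocycle/gluing property of the holonomy formula $h$ established in Propositions \ref{PropIndepBoundMan} and \ref{PropCompHolHomLoops}, the combination $h(c_1) h(c_2) h(c_3)^{-1}$ equals $h(\gamma)$ evaluated as a contractible loop. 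Proposition \ref{PropCheckContrLoop} identifies $h$ on a contractible loop with $\exp \int_{D^2} R_\mathscr{A}$ over the bounding disk. Since $\Delta_{123}$ is vertical (the metric is held fixed at $g_0$ along every edge and can be held fixed over the whole filling disk), the $R_\mathcal{M}$ contribution vanishes and $\int_{\Delta_{123}} R_\mathscr{A} = \int_{\Delta_{123}} R_\Omega$, giving exactly the desired right-hand side.

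The main obstacle I anticipate is the bookkeeping needed to show that $h$ is genuinely \emph{multiplicative} along the composition $c_1 \cdot c_2 \cdot c_3^{-1}$, i.e.\ that $h(c_1)h(c_2)h(c_3)^{-1} = h(\gamma)$. Unlike the holonomy of a connection, the formula \eqref{EqHolForm} is defined loop-by-loop via a bounding manifold $W$, so multiplicativity under composition is not automatic and must be argued. The clean way to handle this is the gluing construction already used in the proof of Proposition \ref{PropIndepBoundMan}: choose bounding manifolds $W_1, W_2, W_3$ for $M_{c_1}, M_{c_2}, M_{c_3}$, together with a bounding manifold for $M_\gamma$, glue them appropriately along their common fiber copies of $M$, and observe that the glued object is cobordant to the mapping torus of $\gamma$. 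Because the integrand $\tfrac{1}{8}(L(TW) - 4 F_W^2)$ integrates to an integer over any \emph{closed} manifold (Proposition \ref{PropIndepBoundMan}), the glued-in integer contributions drop out modulo $1$, delivering the multiplicativity of $h$ up to the phase accounted for by the filling disk. Once multiplicativity is secured, the identification with $\exp\int_{\Delta_{123}} R_\Omega$ via Proposition \ref{PropCheckContrLoop} is immediate.
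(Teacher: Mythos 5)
Your proposal is correct and follows essentially the same route as the paper: compose $c_1$, $c_2$ and $c_3^{-1}$ into a contractible loop, apply Proposition \ref{PropCheckContrLoop}, and evaluate $\int_{\Delta_{123}} R_\Omega$ as $-2\pi i$ times the symplectic area of the triangle, giving $-\pi i \int_M A_1 \wedge A_2$. The one point where you go beyond the paper's own (very terse) proof is in making explicit the multiplicativity of $h$ under composition of loops, which the paper leaves implicit; your proposed gluing-plus-integrality argument via Proposition \ref{PropIndepBoundMan} is precisely the technique the paper itself uses in Proposition \ref{PropCompHolHomLoops}, so it closes that implicit step correctly.
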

\begin{proof}
The composition $c$ of $c_1$, $c_2$ and $c_3^{-1}$ (i.e. $c_3$ with its orientation reversed) is a contractible loop in $\mathcal{F}$. By Proposition \eqref{PropCheckContrLoop}, 
\be
h(c) = {\rm hol}_{\mathscr{A}}(c) = \exp \int_{\Delta_{123}} R_{\Omega} = \exp -\pi i \int_M A_1 \wedge A_2 \;,
\ee
where the second equality comes from the definition \eqref{EqDefROmega} of $R_\Omega$. This proves the proposition.
\end{proof}
To simplify the notations, write $q(c) = -\frac{1}{2\pi i} \ln h(c) \in \mathbbm{R}/\mathbbm{Z}$.
\begin{proposition}
\label{PropQRSecondCond}
Let $c$ be a straight vertical loop and $nc$ its $n$th cover. Then
\be
q(nc) = n^2q(c) \quad {\rm mod} \; 1 \;.
\ee
\end{proposition}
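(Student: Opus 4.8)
The plan is to reduce everything to the behaviour of the gauge-field term in the holonomy formula \eqref{EqHolForm} under scaling of the Wilson line. First I observe that the $n$th cover $nc$ of the straight vertical loop $c$ associated to $A$ is itself a straight vertical loop, associated to $nA$: its lift traverses $0\to A\to\cdots\to nA$ in the fibre, and since $h$ is independent of the path lift (Proposition \ref{PropHolIndepPathLift}) we may represent it by the straight lift $(t\,nA,g_0)$, $t\in[0,1]$. The mapping torus is $M\times S^1$ with $\phi=\mathrm{id}$ in both cases, and crucially the \emph{metric} on $M_c$, hence the bounding manifold $W$ and its gravitational data, can be chosen once and for all, independently of $A$; only the differential cocycle changes. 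I would therefore fix a single $W$ and note that the Hirzebruch term $\frac{1}{8}\int_W L(TW)$ is common to $c$ and $nc$, so that the entire $A$-dependence of $q(c)=-\frac{1}{2\pi i}\ln h(c)$ sits in the term $\frac{1}{2}\int_W F_W^2$.

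The heart of the matter is that this remaining term is genuinely \emph{quadratic} in the gauge data. The fibre curvature scales, $F_{c_{nA}} = n\,F_{c_A}$, so a linear extension gives $F_{W,nA}=n\,F_{W,A}$ and $\int_W F_{W,nA}^2 = n^2\int_W F_{W,A}^2$. Equivalently, using Proposition \ref{PropWittForm} to rewrite $\frac{1}{2\pi i}\ln h(c)$ via \eqref{EqWitGaugGlobAnForm} and factoring $4F_W^2-\lambda_W^2$ as in Section \ref{SecGlobTorsAnom}, one has $q(c_A)=-q_W(z_A)+\text{const}$, where $z_A$ depends linearly on $A$ and $q_W(z)=\frac12\int_W z\cup(z+\lambda)$ is the Brumfiel quadratic refinement \eqref{EqDefQROnW}. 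Since $q_W$ is a \emph{quadratic} refinement (Theorem 2.4 of \cite{Brumfiel1973}), it satisfies $q_W(nz)=n^2q_W(z)$ by construction, which is exactly the scaling we want. The additive constant is $A$-independent and equals $q$ of the trivial loop $c_0$ ($A=0$); by the torsion-anomaly analysis of Section \ref{SecGlobTorsAnom}, with the consistency choice $s=\tilde s$, one has $h(c_0)=1$, so this constant vanishes $\mathrm{mod}\;1$ and drops out. Combining these points yields $q(nc)=q(c_{nA})=n^2q(c)\ \mathrm{mod}\ 1$.

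The main obstacle is the shift by the Wu class. The formula \eqref{EqHolForm} requires $2F_W$ to be a \emph{form lift of the Wu class}, but scaling a Wu lift by an even integer produces a class with even periods, i.e.\ an \emph{unshifted} extension; hence for even $n$ the naive choice $F_{W,nA}=n\,F_{W,A}$ is not admissible and must be corrected by a fixed Wu lift $\mu$ with $\mu|_{\partial W}=0$, introducing cross terms $\tfrac{n}{2}\int_W F_{W,A}\wedge\mu+\tfrac18\int_W\mu^2$ into $\int_W F_W^2$. The technical work is to show these cross terms, together with the residual gravitational mismatch $\tfrac{n^2-1}{8}\int_W\lambda_W^2$, are \emph{integers}, so that they are invisible $\mathrm{mod}\;1$. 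I expect this to follow from the same characteristic-element reasoning used in Proposition \ref{PropIndepBoundMan} and the Gauss-sum manipulations of Section \ref{SecGlobTorsAnom}: both $2F_W$ and $\lambda_W$ are characteristic elements for the intersection pairing, so their mutual pairings are controlled modulo $8$ by the signature, and Brumfiel's theorem enforces the exact identity $q_W(nz)=n^2q_W(z)$ regardless of these corrections. As a consistency check, comparing the resulting $q(nc)=n^2q(c)$ with the $q(nc)=n\,q(c)$ that follows from Proposition \ref{ProphcQuadrRef} and $\int_M A\wedge A=0$ forces $q(c)\in\tfrac12\mathbbm{Z}/\mathbbm{Z}$, which is precisely the half-integrality to be established in Proposition \ref{ProphStrVertLoopPm1}.
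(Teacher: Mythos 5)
Your strategy is in fact the paper's: fix one $W$ with one metric for both $c$ and $nc$ (legitimate, since both lift to straight paths over the same $g_0$ with $\phi=\mathrm{id}$, so the mapping torus and all gravitational data coincide), scale only the gauge data, and reduce the claim to integrality of cross terms between the scaled curvature and a Wu-lift correction. You also correctly identify the central obstacle, namely that $2F_W$ must remain a form lift of the Wu class, so the naive scaling $F_{W,nA}=n\,F_{W,A}$ is inadmissible for even $n$. The gap is in the step where you dispose of the cross terms. Your first justification --- that $q(c_A)=-q_W(z_A)+\mathrm{const}$ with $q_W$ the Brumfiel--Morgan refinement \eqref{EqDefQROnW}, so that $q_W(nz_A)=n^2q_W(z_A)$ holds ``by construction'' --- does not apply here: $q_W$ is defined only on the finite group $G={\rm ker}\,\rho_f/{\rm img}\,\iota_f$, i.e.\ on classes whose restriction to $\partial W$ vanishes in free cohomology. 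For a straight vertical loop with $A\neq 0$, the boundary restriction of the curvature class is $[F_c]=[dt\wedge A]=t\cup[A]\neq 0$ in the free cohomology of $M\times S^1$, so $z_A$ is not in the domain of $q_W$; the Gauss-sum machinery of Section \ref{SecGlobTorsAnom} is tailored to constant loops, where $F_W|_{M_c}=0$. Your second justification --- characteristic elements paired ``modulo 8 by the signature'' as in Proposition \ref{PropIndepBoundMan} --- controls self-pairings of characteristic elements on closed manifolds; it says nothing about the mixed pairing $\int_W F_{W,A}\wedge\mu$ on a manifold with boundary, which is what you actually need to be an integer.

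The missing ingredient is a specific choice of the correction. Write $F_W=\tilde F+\lambda_W/2$ with $\tilde F$ of integral periods and $\lambda_W$ a form lift of the Wu class vanishing on $\partial W$, and --- this is the key point --- choose $\lambda_W$ in the de Rham class of an \emph{integral relative} lift $\hat\lambda\in H^{2\ell+2}(W,\partial W;\mathbbm{Z})$ of the Wu class. (A closed form with integral absolute periods that vanishes on the boundary need not represent an integral relative class, so your unqualified ``fixed Wu lift $\mu$ with $\mu|_{\partial W}=0$'' is not enough.) With this choice, $F_{W,nA}:=n\tilde F+\lambda_W/2$ is an admissible shifted extension for every $n$, restricting correctly on the boundary, and
\be
q(nc)-n^2q(c)=-\frac{n^2-n}{2}\int_W\tilde F\wedge\lambda_W \quad {\rm mod}\;1\;,
\ee
where now $\int_W\tilde F\wedge\lambda_W$ is the pairing of an absolute integral class with a relative integral class, hence an integer; since $\frac{n^2-n}{2}\in\mathbbm{Z}$, the difference vanishes mod 1. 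This is exactly how the paper closes the argument; without it, your cross terms (individually of the form $\frac{n}{2}\int_W F_{W,A}\wedge\mu$ or $\frac14\int_W\mu^2$) are only half- or quarter-integers, and no theorem you cite forces their combination to be integral.
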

\begin{proof}
Let $p = (tA, g_0)$ the path lifting $c$. Let $W$ be the manifold with boundary associated to this data, $g_W$ its metric and $\check{A}_W = (f_W, h_W, F_W)$ the associated shifted differential cocycle. $F_W$ has integral periods on the boundary, on which the Wu class vanishes. This implies that we can write $F_W = \tilde{F} + \lambda_W/2$, where $\tilde{F}$ has integral periods and $\lambda_W$ is a form lift of the Wu class that vanishes on the boundary. We take $\lambda_W$ in the same de Rham cohomology class as an integral lift $\hat{\lambda}$ of the the Wu class that vanishes on the boundary. Let us use the formula \eqref{EqWitGaugGlobAnForm} and write
\be
q(c) = \frac{1}{2} \int_W \tilde{F} \wedge \left( \tilde{F} + \lambda_W \right) \,.
\ee
In order to compute $q(nc)$ we can use the path $p_n = (ntA, g_0)$. Repeating the steps above, we find
\be
q(nc) = \frac{1}{2} \int_W n\tilde{F} \wedge \left( n\tilde{F} + \lambda_W \right) \,,
\ee
from which we conclude that
\be
q(nc) - n^2q(c) = -\frac{n^2 - n}{2} \int_W \tilde{F} \wedge \lambda_W \quad {\rm mod} \; 1 \;.
\ee
But we saw in the proof of Proposition \ref{PropWittForm} that $\lambda_W$ defines a class $[\lambda_W]_{\rm dR}$ in relative de Rham cohomology. As $\hat{\lambda}$ defines a class in the relative cohomology with integral coefficients that coincides with $[\lambda_W]_{\rm dR}$ in de Rham cohomology, $\lambda_W$ yields an integer when paired with any form with integral periods on $W$. Therefore $\int_W \tilde{F} \wedge \lambda_W$ is an integer. As $\frac{n^2 - n}{2}$ is an integer as well, the proposition is proven.
\end{proof}

Propositions \ref{ProphcQuadrRef} and \ref{PropQRSecondCond} show that the restriction of $h$ to straight vertical loops is a quadratic refinement of the wedge product pairing modulo 2 on $\Lambda$ (see Section \ref{SecGlobTorsAnom} and \cite{Brumfiel1973,Taylor}). We immediately deduce
\begin{proposition}
\label{ProphStrVertLoopPm1}
If $c$ is a straight vertical loop, $h(c) = \pm 1$. 
\end{proposition}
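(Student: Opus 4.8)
The plan is to read off the result from the two structural properties of the function $q(c) = -\frac{1}{2\pi i}\ln h(c)$ that the preceding two propositions have just established on straight vertical loops: the refinement relation of Proposition~\ref{ProphcQuadrRef} and the homogeneity relation of Proposition~\ref{PropQRSecondCond}. Writing $A \in \Lambda$ for the harmonic form attached to $c$, the assertion $h(c) = \pm 1$ is equivalent to $q(c) \in \{0, \tfrac{1}{2}\}$ in $\mathbbm{R}/\mathbbm{Z}$, so it suffices to show $2q(c) \equiv 0 \; {\rm mod} \; 1$.

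First I would specialize the refinement relation to the diagonal $A_1 = A_2 = A$. The crucial input is that the pairing $L_\wedge(A_1, A_2) = \tfrac{1}{2}\int_M A_1 \wedge A_2$ is \emph{alternating}: since $A$ has odd degree $2\ell+1$ one has $A \wedge A = 0$, hence $L_\wedge(A, A) = 0$. Writing $2c$ for the $2$nd cover of $c$, which is precisely the straight vertical loop associated to $2A$, the refinement relation of Proposition~\ref{ProphcQuadrRef} then collapses to
\be
q(2c) = 2q(c) \quad {\rm mod} \; 1 \;.
\ee
Second, Proposition~\ref{PropQRSecondCond} with $n = 2$ supplies the competing evaluation
\be
q(2c) = 4q(c) \quad {\rm mod} \; 1 \;.
\ee
Comparing the two expressions for $q(2c)$ gives $2q(c) \equiv 0 \; {\rm mod} \; 1$, so $q(c) \in \{0, \tfrac{1}{2}\}$, and re-exponentiating yields $h(c) = \exp(-2\pi i\, q(c)) = \pm 1$.

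I do not expect a genuine obstacle here: the analytic content has already been absorbed into Propositions~\ref{ProphcQuadrRef} and~\ref{PropQRSecondCond}, and what is left is the purely formal fact that a quadratic refinement of an \emph{alternating} $\mathbbm{R}/\mathbbm{Z}$-valued form is automatically $\tfrac{1}{2}\mathbbm{Z}$-valued. The only point demanding care is the diagonal vanishing $L_\wedge(A, A) = 0$, which is exactly what distinguishes the alternating case from a general symmetric pairing and is what forces $q$ into two values rather than a continuum; this rests on the oddness of the middle degree $2\ell+1$, and hence on the dimension $4\ell+2$ of $M$.
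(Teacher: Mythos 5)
Your proof is correct and is essentially identical to the paper's: both specialize the refinement relation of Proposition \ref{ProphcQuadrRef} to $A_1 = A_2 = A$ (where $A \wedge A = 0$ since $A$ has odd degree) and combine it with the $n=2$ case of Proposition \ref{PropQRSecondCond} to conclude $2q(c) \equiv 0 \bmod 1$. The only difference is cosmetic — you spell out the diagonal vanishing of the wedge pairing, which the paper leaves implicit.
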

\begin{proof}
In terms of $q$, Proposition \ref{ProphcQuadrRef} asserts that
\be
q(A_1 + A_2) - q(A_1) - q(A_2) = \frac{1}{2} \int_M A_1 \wedge A_2 \quad {\rm mod} \; 1 \;.
\ee
Using this formula with $A_1 = A_2 = A$ and Proposition \ref{PropQRSecondCond}, we find that
\be
q(2A) - 2q(A) = 2q(A) = 0 \quad {\rm mod} \; 1 \;,
\ee 
which proves the proposition.
\end{proof}

Finally, we examine the validity of \eqref{EqHolFormCl}, when $c$ is a generic vertical loop. 
\begin{proposition}
\label{PropVertLoops}
Let $c$ be a vertical loop. Then \eqref{EqHolFormCl} holds, i.e. $h(c) = {\rm hol}(\nabla_\mathscr{A},c)$.
\end{proposition}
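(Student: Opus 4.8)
The plan is to reduce an arbitrary vertical loop to the straight vertical loops of \eqref{EqLiftVertStraightLoop}, for which \eqref{EqHolFormCl} holds by the very definition \eqref{EqDefPhysChar} of the theta characteristic $\tilde{\eta}$, and then to transport the equality along an explicit homotopy by means of Proposition \ref{PropCompHolHomLoops}. First I would record the setup. Lift $c$ to a path $p = (p_1, g_0)$ in $\mathcal{B}$ with $p_1 : I \to \Omega^{2\ell+1}_{\rm closed}(M)$ and the metric held fixed at $g_0$. Since $c$ is vertical we may take $\phi = {\rm id}$ in \eqref{EqCondLoopLift}, so the monodromy $B := p_1(1) - p_1(0)$ lies in $\Omega^{2\ell+1}_{\mathbbm{Z}}(M)$. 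Let $A \in \Lambda = \mathcal{H}^{2\ell+1}_{\mathbbm{Z}}(M)$ be the harmonic representative of the de Rham class of $B$; by construction $A$ and $B$ have the same periods, so the straight vertical loop $c_A$ with lift $p_A = (tA, g_0)$ as in \eqref{EqLiftVertStraightLoop} is well defined and represents the same element of $\Lambda$ as $c$.

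The heart of the argument is a single affine homotopy, carried out with the metric held equal to $g_0$ throughout. I would set
\[
p^s_1(t) = (1-s)\, p_1(t) + s\, t A, \qquad s, t \in I.
\]
At $s = 0$ this is $p_1$ and at $s = 1$ it is the straight lift $t A$. The only nontrivial point is that each $(p^s_1, g_0)$ must still descend to a loop in $\mathcal{F}$, i.e. its monodromy must lie in $\Omega^{2\ell+1}_{\mathbbm{Z}}(M)$ for every $s$. A direct computation gives
\[
p^s_1(1) - p^s_1(0) = (1-s) B + s A,
\]
and since $A$ and $B$ share the same integral periods this interpolant again has integral periods, hence lies in $\Omega^{2\ell+1}_{\mathbbm{Z}}(M)$ and is the image of an element of $\mathcal{G}_{\rm gauge}$ under \eqref{EqProjGGaugeGNaive}. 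Thus $s \mapsto \pi_{\mathcal{F}}(p^s_1, g_0)$ is a homotopy in $\mathcal{F}$ between $c$ and the straight vertical loop $c_A$.

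To conclude I would invoke the results already in hand. By the definition \eqref{EqDefPhysChar} of $\tilde{\eta}$ (see Section \ref{SecPhysCharSD}), \eqref{EqHolFormCl} holds for the straight vertical loop $c_A$; note that the metric-dependent one-form $\omega$ entering $R_\mathscr{A}$ is pulled back from $\mathcal{M}$ and so contributes trivially along a loop on which the metric is constant, so that only the theta-bundle holonomy governed by $R_\Omega$ and $\tilde{\eta}$ is relevant here. Applying Proposition \ref{PropCompHolHomLoops} to the homotopy just constructed then yields $h(c) = {\rm hol}(\nabla_\mathscr{A}, c)$, as desired.

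The one step demanding care, and indeed the only real obstacle, is the admissibility of the homotopy: one must check that the interpolated monodromy $(1-s) B + s A$ never leaves $\Omega^{2\ell+1}_{\mathbbm{Z}}(M)$. This is precisely where the choice of $A$ as a harmonic form lying in the same integral class as $B$ is used, and it is what lets a naive straight-line homotopy stay within the space of genuine vertical loops rather than wandering off into non-integral monodromies.
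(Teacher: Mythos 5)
Your proof is correct and follows essentially the same route as the paper's: the paper likewise reduces to straight vertical loops, where \eqref{EqHolFormCl} holds by the definition \eqref{EqDefPhysChar} of $\tilde{\eta}$, and then invokes Proposition \ref{PropCompHolHomLoops} via the (unstated) fact that any vertical loop is homotopic to a straight one. Your explicit affine homotopy, with the check that the interpolated monodromy $(1-s)B + sA$ keeps integral periods because $A$ and $B$ lie in the same de Rham class, simply fills in a detail the paper leaves implicit.
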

\begin{proof}
By the definition \eqref{EqDefPhysChar} of $\tilde{\eta}$, \eqref{EqHolFormCl} holds for straight vertical loops. But any vertical loop is homotopic to a straight vertical loop, so by Proposition \ref{PropCompHolHomLoops}, \eqref{EqHolFormCl} holds for all vertical loops. 
\end{proof}

\subsection{Generic horizontal loops}

\label{SecHorLoopVanWuLift}

Let us now consider a generic horizontal loop $c$ lifting to a path of the form 
\be
\label{EqHorizLoop2}
p(t) = \big(t \phi^\ast(A) + (1-t)A, p_2(t) \big) \;.
\ee

\begin{proposition}
\label{PropGenHorizLoop}
The holonomy of $\nabla_\mathscr{A}$ along a horizontal loop $c$ of the form \eqref{EqHorizLoop2} is given by \eqref{EqHolForm}.
\end{proposition}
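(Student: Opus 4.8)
The plan is to reduce the generic horizontal loop to the case already settled under Assumption \ref{AssumpHolHorLoops}, namely the horizontal loop with vanishing gauge field, by exploiting the homotopy-invariance machinery of Proposition \ref{PropCompHolHomLoops} together with the decomposition of curvature that splits $R_{\mathscr{A}}$ into its $R_\Omega$ and $R_\mathcal{M}$ parts. Concretely, the path \eqref{EqHorizLoop2} can be connected by a one-parameter family $p_u(t) = \big(u(t\phi^\ast(A) + (1-t)A),\, p_2(t)\big)$, $u \in [0,1]$, to the path $(0, p_2)$ defining a horizontal loop with vanishing gauge field. At $u = 0$ the gauge field is identically zero, so Assumption \ref{AssumpHolHorLoops} gives \eqref{EqHolFormCl} for that loop; at $u = 1$ we recover $c$. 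The goal is to show that $h$ and ${\rm hol}_{\mathscr{A}}$ vary in the same way as $u$ runs over $[0,1]$, so that equality at $u = 0$ propagates to $u = 1$.

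First I would verify that each $p_u$ genuinely descends to a loop in $\mathcal{F}$: the endpoint condition \eqref{EqCondLoopLift} requires $p_u(1) = \phi^\ast(p_u(0))$ with $B = 0$, and indeed $u\phi^\ast(A) = \phi^\ast(uA)$, so the same diffeomorphism $\phi$ works for every $u$, with vanishing $B$ throughout (the defining feature of horizontal loops). Thus the family $\{c_u\}$ is a homotopy of loops in $\mathcal{F}$ between the vanishing-gauge-field horizontal loop $c_0$ and $c = c_1$. On the anomaly-bundle side this immediately means ${\rm hol}_{\mathscr{A}}(c_u)$ depends on $u$ only through the integral of the curvature $R_{\mathscr{A}}$ over the interpolating cylinder, exactly as in the proof of Proposition \ref{PropCompHolHomLoops}. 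On the $h$ side, each $c_u$ yields a mapping torus $M_{c_u}$ and a bounded manifold; gluing the bounding manifolds for $c_0$ and $c_u$ along the interpolating fiber bundle produces a closed manifold, and by Proposition \ref{PropIndepBoundMan} the formula \eqref{EqHolForm} evaluated on it measures $h(c_u)/h(c_0)$.

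The crux is therefore to check that the two ratios agree, i.e. that the change in \eqref{EqHolForm} along the homotopy matches $\frac{1}{2\pi i}\int R_{\mathscr{A}}$ over the cylinder. Here the diagonal structure $R_{\mathscr{A}} = R_\Omega + R_\mathcal{M}$ from \eqref{EqCurvAnBun} is decisive: since $p_2$ is held fixed throughout the homotopy in $u$, the metric data does not change, so the $R_\mathcal{M}$ contribution and the entire $L(TW)$ term in \eqref{EqHolForm} are $u$-independent and drop out of the ratio. What survives is governed purely by $R_\Omega = -2\pi i\, da^j \wedge db_j$ and by the $F_W^2$ term, and the identification of $-\int_M \partial_1 A_W \wedge \partial_2 A_W$ with $\tfrac{1}{2\pi i}R_\Omega$ established in the proof of Proposition \ref{PropCheckContrLoop} shows these coincide. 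The main obstacle I anticipate is bookkeeping rather than conceptual: one must confirm that the homotopy cylinder can be realized as an honest fiber-bundle-with-boundary carrying a differential cocycle extending the fiberwise data (via Lemma \ref{LemConstrDiffCocMapTor} and Corollary \ref{CorConstrDiffCocMapTor}, noting $B = 0$ so no torsion-lift ambiguity enters), and that the adiabatic limit $\epsilon \to 0$ commutes with this gluing so that the curvature-integral comparison is exact and not merely asymptotic. Granting this, equality at $u = 0$ forces equality at $u = 1$, which is the assertion.
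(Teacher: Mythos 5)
Your proposal is correct and follows essentially the same route as the paper: the paper's proof exhibits the same scaling homotopy $p(t,s) = \big((1-s)\, t\, \phi^\ast(A) + (1-s)(1-t)A,\, p_2(t)\big)$ (your parameter $u$ is the paper's $1-s$) retracting $c$ to a horizontal loop with vanishing gauge field, and then concludes immediately from Assumption \ref{AssumpHolHorLoops} and Proposition \ref{PropCompHolHomLoops}. Your additional verification of the curvature-matching and gluing details is sound but redundant, since that work is exactly what Proposition \ref{PropCompHolHomLoops} already packages.
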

\begin{proof}
$c$ trivially retracts to a loop with vanishing gauge fields. An explicit homotopy is given by 
\be
p(t,s) = \big((1-s) t \phi^\ast(A) + (1-s) (1-t)A, p_2(t) \big) \;.
\ee
It follows from Proposition \ref{PropCompHolHomLoops} and Assumption \ref{AssumpHolHorLoops} that \eqref{EqHolForm} computes the holonomies of $\nabla_\mathscr{A}$ on this class of loops.
\end{proof}

\subsection{Generic loops}

\label{SecHolFormGenLoops}

Finally, we can consider a generic loop $c \subset \mathcal{F}$. Proposition \ref{PropHomDecLoops} shows that $c$ is homotopic to the composition of a horizontal loop $c_h$ with a vertical one $c_v$. Propositions \ref{PropGenHorizLoop} and \ref{PropVertLoops} show that \eqref{EqHolForm} computes the holonomy of $\nabla_{\mathscr{A}}$ along $c_h$ and $c_v$, respectively, hence along their composition. Proposition \ref{PropCompHolHomLoops} then allows us to deduce that \eqref{EqHolForm} also computes the holonomy of $\nabla_{\mathscr{A}}$ along $c$. We therefore obtain the main result of this paper:
\begin{theorem}
\label{PropGenLoop}
If Assumption \ref{AssumpHolHorLoops} holds, then \eqref{EqHolFormCl} holds, i.e. \eqref{EqHolForm} computes the holonomy of $\nabla_{\mathscr{A}}$.
\end{theorem}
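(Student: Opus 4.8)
The plan is to reduce the statement for an arbitrary loop to the horizontal and vertical cases already settled, exploiting that both sides of \eqref{EqHolFormCl} behave well under homotopy and concatenation. First I would apply Proposition \ref{PropHomDecLoops} to replace a general loop $c \subset \mathcal{F}$ by a homotopic one of the form $c_h \ast c_v$, the concatenation of a horizontal loop $c_h$ and a vertical loop $c_v$. By Proposition \ref{PropCompHolHomLoops} the truth of \eqref{EqHolFormCl} is invariant under homotopy of loops, so it is enough to verify it for $c_h \ast c_v$.

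Next I would dispose of the two factors individually: Proposition \ref{PropGenHorizLoop}, which feeds on Assumption \ref{AssumpHolHorLoops}, gives $h(c_h) = {\rm hol}_\mathscr{A}(c_h)$, while Proposition \ref{PropVertLoops} gives $h(c_v) = {\rm hol}_\mathscr{A}(c_v)$. To pass to the concatenation I need both quantities to be multiplicative under composition of loops. On the right this is simply the composition law for holonomies of $\nabla_\mathscr{A}$. On the left I would establish $h(c_h \ast c_v) = h(c_h)\, h(c_v)$ by a cobordism argument in the spirit of Proposition \ref{PropCompHolHomLoops}: pulling back the universal fibration over a pair of pants in $\mathcal{F}$ with boundary $c_h \sqcup c_v \sqcup \overline{c_h \ast c_v}$ exhibits $M_{c_h \ast c_v}$ as cobordant to $M_{c_h} \sqcup M_{c_v}$, and gluing this cobordism to the bounding manifolds $W_h$, $W_v$ and $W$ produces a closed manifold on which the integrand of \eqref{EqHolForm} integrates to an integer by Proposition \ref{PropIndepBoundMan}. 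Reducing modulo integers then yields the desired multiplicativity.

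Combining these, $h(c_h \ast c_v) = h(c_h)\, h(c_v) = {\rm hol}_\mathscr{A}(c_h)\, {\rm hol}_\mathscr{A}(c_v) = {\rm hol}_\mathscr{A}(c_h \ast c_v)$, and transporting this equality back along the homotopy $c \simeq c_h \ast c_v$ via Proposition \ref{PropCompHolHomLoops} gives \eqref{EqHolFormCl} for the original $c$. I do not expect a serious obstacle inside this final synthesis, which is essentially bookkeeping; the substantive content has already been absorbed into the supporting results. The genuinely delicate ingredients are Assumption \ref{AssumpHolHorLoops} (the unproven input controlling horizontal loops with vanishing gauge field) together with the integrality statement of Proposition \ref{PropIndepBoundMan}, which simultaneously guarantees that $h$ is well defined and that it is multiplicative. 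Once these are granted, assembling the horizontal and vertical pieces is routine.
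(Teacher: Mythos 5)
Your proposal has exactly the skeleton of the paper's proof: decompose $c$ up to homotopy into $c_h \ast c_v$ via Proposition \ref{PropHomDecLoops}, handle the two factors with Propositions \ref{PropGenHorizLoop} and \ref{PropVertLoops} (the former resting on Assumption \ref{AssumpHolHorLoops}), and transport the conclusion back to $c$ with Proposition \ref{PropCompHolHomLoops}. The one point where you depart from the paper is that you attempt to \emph{prove} the multiplicativity $h(c_h \ast c_v) = h(c_h)\,h(c_v)$, which the paper passes over in silence (``hence along their composition''). Your instinct that this step deserves an argument is sound --- the paper uses it implicitly here and also in Proposition \ref{ProphcQuadrRef}.

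However, the pair-of-pants argument is incomplete as stated. Write $M_P$ for the pulled-back bundle over the pair of pants $P$, endowed with a metric and a differential cocycle extending the boundary data, as in Lemma \ref{LemConstrDiffCocMapTor} and Proposition \ref{PropHolIndepPathLift}. Gluing $W_h$, $W_v$ and $W$ (with suitable orientations) to $M_P$ does give a closed manifold $Z$, and Proposition \ref{PropIndepBoundMan} makes $\frac{1}{8}\int_Z (L - 4F^2)$ an integer; but $Z$ contains $M_P$ as a piece, so reducing modulo integers only yields, up to orientation conventions,
\be
\frac{h(c_h)\, h(c_v)}{h(c_h \ast c_v)} = \exp 2\pi i \lim_{\epsilon \rightarrow 0} \frac{1}{8}\int_{M_P} \left( L(TM_P) - 4 F_{M_P}^2 \right) \;,
\ee
and the desired multiplicativity requires in addition that the right-hand side equal $1$ in the adiabatic limit --- a statement about the manifold with boundary $M_P$ that Proposition \ref{PropIndepBoundMan}, which concerns closed manifolds, does not provide. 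The missing step can be supplied by extending the curvature computation of Proposition \ref{PropCheckContrLoop} from the disk to the surface $P$: in the adiabatic limit the integral over $M_P$ equals $\frac{1}{2\pi i}\int_P R_\mathscr{A}$, and since the holonomy of a line-bundle connection around the oriented boundary of a surface mapped into the base is $\exp \int_P R_\mathscr{A}$, this quantity reduces modulo $1$ to $\frac{1}{2\pi i}\ln \left( {\rm hol}_\mathscr{A}(c_h)\, {\rm hol}_\mathscr{A}(c_v)\, {\rm hol}_\mathscr{A}(c_h \ast c_v)^{-1} \right) = 0$, by multiplicativity of holonomy under composition of based loops. (Alternatively, one can choose a thin pair of pants collapsed onto its spine, so that the pullback of any $2$-form to $P$ vanishes.) With this supplement your argument closes, and it is in fact more complete than the paper's own one-line justification of the composition step.
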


\section{Type IIB supergravity}

\label{SecIIB}

As an example, we now apply these results to cohomological type IIB supergravity. By \emph{cohomological}, we mean the version of the theory in which the Ramond-Ramond charges and fluxes are classified by the integral cohomology of the underlying manifold. As was shown in \cite{Minasian:1997mm,Witten:1998cd,Moore:1999gb}, the low energy limit of the type II string theories are actually the \emph{K-theoretical} type II supergravities, in which the topological configurations of the Ramond-Ramond gauge fields are classified by the K-theory of spacetime. In order to check the cancellation of global gravitational anomalies in the K-theoretical version of the theory, one would have to extend our results to self-dual fields valued in twisted differential K-theory (see for instance  \cite{Freed:2000ta, Belov:2006xj}). This is outside the scope of the present paper, and we focus on the cohomological version of the theory here. 

In \cite{Monnier2011a}, we checked the cancellation of global gravitational anomalies in cohomological type IIB supergravity. We now use Theorem \ref{PropGenLoop} to study mixed gauge-gravitational anomalies.

\subsection{Setup}

Type IIB supergravity is a 10-dimensional theory containing among other fields $p$-form Ramond-Ramond gauge fields, for $p = 0,2,4$. We will model them by differential cocycles of degree 1, 3 and 5 written $\check{C}_0$, $\check{C}_2$ and $\check{C}_4$, with field strengths $G_1$, $G_3$ and $G_5$. The 4-form gauge field $\check{C}_4$ is self-dual. We can therefore apply our formalism with $\ell = 2$. 

As the theory contains fermions, all the manifolds we will consider belong to the category of spin manifolds. The Wu class of degree 6 vanishes on spin manifolds. This can be seen from its expression in terms of Stiefel-Whitney classes:
\be
\nu_6 = w_4 w_2 + w_4 w_1^2 + w_3^2 + w_3 w_2 w_1 + w_3 w_1^3 + w_2^2 w_1^2 \;. 
\ee
$w_1$, $w_2$ and $w_3$ all vanish for a spin manifold: $w_1$ vanishes because the manifold is orientable, $w_2$ vanishes because of the existence of a spin structure and $w_3 = Sq^1(w_2) + w_2 w_1$ necessarily vanishes as well. Therefore, all the differential cocycles are unshifted. This justifies the claim made in \cite{Monnier2011a} that the lift of the Wu class vanishes.

The role of the background gauge field $\check{A}$ is played by a composite 5-form gauge field formed from the Ramond-Ramond 2-form gauge field $\check{C}_2$ and the B-field $\check{B}$: $\check{A} = \check{B} \cup \check{C}_2$, where $\cup$ is the cup product on differential cohomology. This can for instance be seen in the Bianchi identity for the field strength $G_5$ of $\check{C}_4$, which reads 
\be
\label{EqDiffG5}
dG_5 = H_3 \wedge G_3 \;,
\ee
where $H_3$ and $G_3$ are respectively the field strengths of $\check{B}$ and $\check{C}_2$. Recall that consistency requires $d\check{A} = 0$. This forbids the presence of sources for $\check{B}$ and $\check{C}_2$, i.e. NS5- and D5-branes. Wilson lines for both fields are allowed. (In a treatment of the Ramond-Ramond gauge fields in twisted differential K-theory, this problem is circumvented by replacing the differential with the twisted differential $d_H = d + H_3 \wedge$ and rewriting \eqref{EqDiffG5} as a component of $d_H G = 0$, where $G = G_1 + G_3 + G_5$. In this setting, $\check{A}$ is simply the source current for the Ramond-Ramond fields \cite{Belov:2006xj}, and satisfies indeed the consistency condition $d_H \check{A} = 0$. We will however not pursue this further here.)

Let $\mathcal{M}$ be the space of Riemannian metrics on $M$. Then the space of background fields is $\mathcal{B} = \Omega^5_{\rm closed}(M) \times \mathcal{M}$. The group $\mathcal{G}$ of local transformations is constructed exactly as in Section \eqref{SecActGrpLocTrans} and the space of background fields modulo equivalence is $\mathcal{F} = \mathcal{B}/\mathcal{G}$.

\subsection{Anomaly cancellation}

A loop $c \subset \mathcal{F}$ determines a metric and a degree 6 differential cocycle on a mapping torus $M_c$ with fiber $X$, as shown in Corollary \ref{CorConstrDiffCocMapTor}. It is always possible to extend this data to a manifold $W$ bounded by $M_c$: the relevant cobordism group was shown to vanish in \cite{Kriz:2004tu}. The holonomy of the anomaly connection along $c$ is therefore given by
\be
\label{EqHolFormIIB}
\frac{1}{2\pi i} \ln {\rm hol}_{\mathscr{A}_{SD}}(c) = \lim_{\epsilon \rightarrow 0} \frac{1}{8} \int_W \left(L(TW) - 4 F_W^2 \right) \;,
\ee
where $F_W$ is the field strength of the extension $\check{A}_W$ of the background gauge field to $W$, and we wrote $\mathscr{A}_{SD}$ for the anomaly line bundle with connection of the self-dual 4-form. We have to multiply ${\rm hol}_{\mathscr{A}_{SD}}(c)$ with the holonomies of the connections on the anomaly line bundles of the chiral fermions of the theory. There are two real dilatini (spin $\frac{1}{2}$) of negative chirality and two real gravitini (spin $\frac{3}{2}$) of positive chirality. The global gravitational anomaly of a single real chiral fermion reads \cite{Witten:1985xe, MR861886}
\be
\label{EqHolFormFerm}
\frac{1}{2\pi i} \ln {\rm hol}_{\mathscr{A}_{f}}(c) = \lim_{\epsilon \rightarrow 0} \frac{1}{2} \left( {\rm index}(D_W) - \int_W I_{D_W} \right) \;,
\ee
where $D_W$ is a Dirac operator in dimension $4\ell+4$ related to the chiral Dirac operator in dimension $4\ell +2$ defining the chiral fermionic theory. Writing $I_{1/2}$ and $I_{3/2}$ for the index densities of the Dirac operators on $W$ associated to the dilatini and the gravitini, respectively, the cancellation of local anomalies in type IIB supergravity \cite{AlvarezGaume:1983ig} shows that $-2I_{1/2} + 2I_{3/2} + L(TW) = 0$. We see therefore that there seems to be a remaining anomaly: 
\be
\label{EqHolFormIIBRem}
\frac{1}{2\pi i} \ln {\rm hol}_{\mathscr{A}_{\rm tot}}(c) = - \lim_{\epsilon \rightarrow 0} \frac{1}{2} \int_W F_W^2 \;.
\ee
If the B-field $\check{B}$ and the Ramond-Ramond 2-form gauge field $\check{C}_2$ can be extended separately to $W$, then we can take $F_W = H_3 \wedge G_3$ and $F_W \wedge F_W$ automatically vanishes, as it contains squares of three-forms\footnote{I thank an anonymous referee for pointing this out.}. The vanishing of the anomaly \eqref{EqHolFormIIBRem} would therefore be implied by the vanishing of the spin cobordism group $\Omega_{10}^{\rm spin}(K(3, \mathbb{Z}) \times K(3, \mathbb{Z}))$, classifying bordism classes of spin 10-manifolds endowed with two integral cohomology classes of degree 3. The computation of this cobordism group seems unfortunately non-trivial and is beyond the scope of this paper. (See \cite{Stong:1985vj} for a related computation of the bordism group of spin 11-dimensional manifolds endowed with a degree 4 integral class, relevant to M-theory.)
 
In any case, if the global anomaly \eqref{EqHolFormIIBRem} happened not to cancel, it would only be a curiosity without much consequence. The low energy limit of the type IIB superstring is the K-theoretical version of type IIB supergravity, and this is the theory that the consistency of the superstring requires to have vanishing anomalies. We hope to generalize in a future work the results above to self-dual fields taking value in twisted differential K-theory in order to check the cancellation of global anomalies in the K-theoretical version of type IIB supergravity.

\subsection*{Acknowledgments}

I would like to thank Greg Moore for correspondence and for sharing \cite{Moore2004}, as well as Pavel Mnev for discussions. I also thank the Galileo Galilei Institute for Theoretical physics for hospitality and the INFN for partial support during the completion of this work. This research was supported in part by SNF Grant No.200020-131813/1.

\appendix

\section{More on the anomaly line bundle}

\label{AppMoreAnBun}

In this appendix, we refine the arguments of \cite{Monnier2011} to identify unambiguously the anomaly line bundle of the self-dual field theory. In \cite{Monnier2011}, we showed that the square of the anomaly line bundle $\mathscr{A}$ of the self-dual field, as a line bundle over $\mathcal{F}$, coincides with $(\mathscr{C}^{\tilde{\eta}})^2$. This implies that one of the two following equalities holds:
\be
\label{EqAnomBunAmb}
\mathscr{A} = \mathscr{C}^{\tilde{\eta}} \quad {\rm or} \quad \mathscr{A} = \mathscr{C}^{\tilde{\eta}} \otimes \mathscr{T} \;.
\ee
In the second equality, $\mathscr{T}$ is an explicit line bundle of order 2. 

In \cite{Monnier2011}, we did not exclude that $\mathscr{A}$ could be equal to $\mathscr{C}^{\tilde{\eta}} \otimes \mathscr{T}$. But there is in fact a simple argument showing why this is not possible. Let us first recall some facts from \cite{Monnier2011}. There we showed that the partition function over $\mathcal{B}$ of a pair of self-dual fields of the same chirality is given by 
\be
\label{EqPartFuncPairSD}
Z_{2SD}(g) = (\theta^{\tilde{\eta}}(\tau,A_{\rm harm}))^2 \cdot ({\rm one-loop}) \;.
\ee
In the equation above, $A_{\rm harm}$ is the harmonic part of the gauge field $A$ and $\tau$ is the period matrix determined from the metric through the Hodge star operator on $H^{2\ell+1}(M;\mathbbm{R})$ (see Section \ref{SecRelModGeom}). The second factor is a one-loop determinant, which we could show is non-vanishing and is the section of a trivial line bundle over $\mathcal{F}$. (We were forced to consider a pair of self-dual fields because there does not seem to be a Lagrangian formulation of the theory of a single self-dual field able to reproduce the correct one-loop determinant, determined using other techniques in \cite{Belov:2006jd,Monnier:2010ww}.)

Clearly, \eqref{EqPartFuncPairSD} implies that $\mathscr{A} = (\mathscr{C}^{\tilde{\eta}})^2$, as claimed above. But, at least when the vanishing of the source $\check{C}$ of $\check{A}$ does not constrain the metric of $M$, so that $\mathcal{M}$ is the space of Riemannian metrics on $M$, we can deduce more. In this case, $\mathcal{B}$ is simply connected (it is in fact contractible), there can be no ambiguity other than a global sign when taking the square root of \eqref{EqPartFuncPairSD} and we necessarily have 
\be
\label{EqPartFuncSD}
Z_{SD}(g) = \pm \theta^{\tilde{\eta}}(\tau,A_{\rm harm}) \cdot ({\rm one-loop})^{1/2} \;.
\ee 
The global sign ambiguity is irrelevant. As by definition, $\theta^{\tilde{\eta}}$ is the pull-back of a section of $\mathscr{C}^{\tilde{\eta}}$, we conclude that
\be
\label{EqTopClassAnBunApp}
\mathscr{A} = \mathscr{C}^{\tilde{\eta}} \;.
\ee
When the vanishing $\check{C}$ imposes non-trivial constraints on the metric, the argument above fails because $\mathcal{M}$ is not necessarily contractible. Studying the topology of $\mathcal{M}$ can be done only on a case by case basis, which is beyond the scope of this paper. It seems anyway unlikely that the extra constraint on the metric would modify \eqref{EqTopClassAnBun}.

\section{Derivation of the Kähler potential of the theta line bundle}

\label{AppDerKahlPot}

In this appendix, we derive the Kähler potential \eqref{EqKahlPotThetBun} from the theta transformation formula, i.e. the functional equation satisfied by Siegel theta functions under the action of ${\rm Sp}(2n,\mathbbm{Z})$ on $\mathcal{B}_0$. The latter reads
\begin{align}
\label{EqThetaTransForm}
\theta^{\eta}(z(C\tau + D)^{-1}, \gamma.\tau) & \, =  \kappa_\eta(\gamma) \det(C\tau + D)^{1/2} \exp (\pi i z_i ((C\tau + D)C)^{ij}z_j) \theta^\eta(z,\tau) \notag \\
& \,=:  \xi^\eta(\gamma,z,\tau) \theta^\eta(z,\tau) \;.
\end{align}
Here, $\gamma$ is an element of $\Gamma_\eta \subset {\rm Sp}(2n,\mathbbm{Z})$, written in block diagonal form as
\be
\gamma = \left( \begin{array}{cc} A & B \\ C & D \end{array} \right) \;.
\ee
$\gamma.\tau$ is the image of $\tau$ under the standard action of ${\rm Sp}(2n,\mathbbm{Z})$ on the Siegel upper half-space $\mathcal{C}$. $\kappa_\eta(\gamma)$ is a complex number of unit modulus, whose precise definition depends on a choice of square root for $\det(C\tau + D)^{1/2}$ (see for instance Chapter 8 of \cite{MR2062673}).

The theta function can be seen as the section $s_{\theta}^\eta$ of a $\Gamma_\eta$-equivariant hermitian line bundle $\mathcal{L}$ over $\mathcal{B}_0$. Recall that an equivariant hermitian bundle over $\mathcal{B}_0$ is a hermitian line bundle over $\mathcal{B}_0$ equipped with a lift of the action of $\mathcal{G}_0$ by unitary bundle automorphisms. Given a trivialization of $\mathcal{L}$, we get a family of trivializations indexed by $\mathcal{G}_0$ by acting with the corresponding bundle automorphisms. \eqref{EqThetaTransForm} can then be interpreted as a change of trivialization between two trivializations related by the element $\gamma$. The factor $\xi^\eta(\gamma,z,\tau)$ is the fiberwise endomorphism relating the two trivializations. 

As $|\xi^\eta(\gamma,z,\tau)| \neq 1$, we see that the trivialization of $\mathcal{L}$ in which the theta section takes its usual holomorphic form is not unitary. We would like here to find an $\mathbbm{R}_+$-valued function $f$ over $\mathcal{B}_0$ describing a change of trivialization to a unitary trivialization of $\mathcal{L}$. In the unitary trivialization, the theta section takes the form $s^\eta_\theta = f \theta^\eta$. Its norm is 
\be
|s^{\eta}_\theta|^2_{\mathcal{L}} = f^2 |\theta^\eta|^2 \;,
\ee 
so comparing with \eqref{EqDefKahlPot}, we see that $K = -2\ln f$. Moreover, as the new trivialization is unitary and the bundle automorphisms defining the equivariant structure are unitary as well, $f^2 |\theta^\eta|^2$ should be invariant under the action of $\Gamma_\eta$. This implies
\be
|\xi^\eta(\gamma,z,\tau)| = \frac{f\left(z(C\tau + D)^{-1}, \gamma.\tau\right)}{f(z,\tau)} =: \delta_{\gamma} f \;,
\ee
i.e. $f$ trivializes $|\xi^\eta|$, seen as a group 1-cocycle. Recall that $h = i(\tau-\bar{\tau})^{-1}$. It is easy to check that
\be
\label{EqVarDetH}
\delta_\gamma (\det h) = |\det(C\tau + D)|^2 \;.
\ee
With a bit more work (see for instance the proof of Proposition 8.6.1 in \cite{MR2062673}), one can show that
\be
\label{EqVarzhz}
\delta_\gamma \exp -2\pi (z_i-\bar{z}_i)h^{ij}(z_j-\bar{z}_j) = |\exp (2\pi i z_i ((C\tau + D)C)^{ij}z_j)|^2 \;.
\ee
We deduce from \eqref{EqVarDetH} and \eqref{EqVarzhz} that the Kähler potential is given by
\be
K = \pi (z_i-\bar{z}_i)h^{ij}(z_j-\bar{z}_j) - \frac{1}{2} \ln \det h \;,
\ee
as was claimed in \eqref{EqKahlPotThetBun}. Now recall that the Dolbeault operator reads
\be
\partial = dz_i \partial_{z_i} + d\tau_{ij} \partial_{\tau_{ij}} 
\ee
and a straightforward computation yields the curvature \eqref{EqCurvThetaBun}.

{
\small

\begin{thebibliography}{10}

\bibitem{Witten:1985xe}
E.~Witten, ``Global gravitational anomalies'', {\em Commun. Math. Phys.} {\bf
  100} (1985)
197--229.

\bibitem{Monnier:2012pn}
S.~Monnier, ``{Global anomalies and chiral p-forms}'',
\href{http://arXiv.org/abs/1211.2167}{{\tt 1211.2167}}.

\bibitem{Monnier2011}
S.~Monnier, ``The anomaly line bundle of the self-dual field theory'', {\em
  Commun. Math. Phys.} {\bf 325} (2014) 41--72,
  \href{http://arXiv.org/abs/1109.2904}{{\tt 1109.2904}}.

\bibitem{Monnier2011a}
S.~Monnier, ``The global gravitational anomaly of the self-dual field theory'',
  {\em Commun. Math. Phys.} {\bf 325} (2014) 73--104,
  \href{http://arXiv.org/abs/1110.4639}{{\tt 1110.4639}}.

\bibitem{Monnierb}
S.~Monnier, ``Global gravitational anomaly cancellation for five-branes'',
\href{http://arXiv.org/abs/1310.2250}{{\tt 1310.2250}}.

\bibitem{Witten:1996hc}
E.~Witten, ``{Five-brane effective action in M-theory}'', {\em J. Geom. Phys.}
  {\bf 22} (1997) 103--133,
\href{http://arXiv.org/abs/hep-th/9610234}{{\tt hep-th/9610234}}.

\bibitem{Moore:2004jv}
G.~W. Moore, ``{Anomalies, Gauss laws, and Page charges in M-theory}'', {\em
  Comptes Rendus Physique} {\bf 6} (2005) 251--259,
\href{http://arXiv.org/abs/hep-th/0409158}{{\tt hep-th/0409158}}.

\bibitem{Moore2004}
G.~W. Moore, ``Quantization of {P}age charges in {M}-theory''. unpublished
  manuscript, May, 2004.

\bibitem{Witten:1999vg}
E.~Witten, ``Duality relations among topological effects in string theory'',
  {\em JHEP} {\bf 05} (2000) 031,
\href{http://arXiv.org/abs/hep-th/9912086}{{\tt hep-th/9912086}}.

\bibitem{Belov:2006jd}
D.~Belov and G.~W. Moore, ``{Holographic action for the self-dual field}'',
\href{http://arXiv.org/abs/hep-th/0605038}{{\tt hep-th/0605038}}.

\bibitem{hopkins-2005-70}
M.~J. Hopkins and I.~M. Singer, ``Quadratic functions in geometry, topology,
  and {M}-theory'', {\em J. Diff. Geom.} {\bf 70} (2005) 329,
  \href{http://arXiv.org/abs/math/0211216}{{\tt math/0211216}}.

\bibitem{Freed:2006yc}
D.~S. Freed, G.~W. Moore, and G.~Segal, ``{Heisenberg groups and noncommutative
  fluxes}'', {\em Annals Phys.} {\bf 322} (2007) 236--285,
\href{http://arXiv.org/abs/hep-th/0605200}{{\tt hep-th/0605200}}.

\bibitem{Diaconescu:2003bm}
E.~Diaconescu, G.~W. Moore, and D.~S. Freed, ``{The M-theory 3-form and E(8)
  gauge theory}'',
\href{http://arXiv.org/abs/hep-th/0312069}{{\tt hep-th/0312069}}.

\bibitem{springerlink:10.1007/BFb0075216}
J.~Cheeger and J.~Simons, ``Differential characters and geometric invariants'',
  in {\em Geometry and Topology}, vol.~1167 of {\em Lecture Notes in
  Mathematics}, pp.~50--80.
\newblock Springer Berlin / Heidelberg, 1985.
\newblock 10.1007/BFb0075216.

\bibitem{Witten:1996md}
E.~Witten, ``{On flux quantization in M-theory and the effective action}'',
  {\em J. Geom. Phys.} {\bf 22} (1997) 1--13,
\href{http://arXiv.org/abs/hep-th/9609122}{{\tt hep-th/9609122}}.

\bibitem{AlvarezGaume:1983ig}
L.~Alvarez-Gaume and E.~Witten, ``{Gravitational Anomalies}'', {\em Nucl.
  Phys.} {\bf B234} (1984)
269.

\bibitem{Freed:1999vc}
D.~S. Freed and E.~Witten, ``{Anomalies in string theory with D-branes}'',
\href{http://arXiv.org/abs/hep-th/9907189}{{\tt hep-th/9907189}}.

\bibitem{Strominger:1995ac}
A.~Strominger, ``{Open p-branes}'', {\em Phys.Lett.} {\bf B383} (1996) 44--47,
\href{http://arXiv.org/abs/hep-th/9512059}{{\tt hep-th/9512059}}.

\bibitem{Branson2005}
T.~Branson, ``{Q}-curvature and spectral invariants'', in {\em Proceedings of
  the 24th Winter School "Geometry and Physics"}, J.~Slovák and M.~\v{C}adek,
  eds., pp.~11--55.
\newblock Circolo Matematico di Palermo, 2005.

\bibitem{Monnier:2010ww}
S.~Monnier, ``{Geometric quantization and the metric dependence of the
  self-dual field theory}'', {\em Comm. Math. Phys.} {\bf 314} (2012) 305--328,
\href{http://arXiv.org/abs/1011.5890}{{\tt 1011.5890}}.

\bibitem{opac-b1077949}
J.~Milnor and J.~D. Stasheff, {\em Characteristic classes}.
\newblock Annals of mathematics studies. University of Tokyo Press, Princeton,
  N.J. Princeton University Press Tokyo, 1974.

\bibitem{MR861886}
J.-M. Bismut and D.~S. Freed, ``The analysis of elliptic families. {II}.
  {D}irac operators, eta invariants, and the holonomy theorem'', {\em Comm.
  Math. Phys.} {\bf 107} (1986), no.~1, 103--163.

\bibitem{2012arXiv1208.1540M}
S.~{Monnier}, ``{Canonical quadratic refinements of cohomological pairings from
  functorial lifts of the Wu class}'',
  \href{http://arXiv.org/abs/1208.1540}{{\tt 1208.1540}}.

\bibitem{Kriz:2004tu}
I.~Kriz and H.~Sati, ``{Type IIB string theory, S-duality, and generalized
  cohomology}'', {\em Nucl.Phys.} {\bf B715} (2005) 639--664,
\href{http://arXiv.org/abs/hep-th/0410293}{{\tt hep-th/0410293}}.

\bibitem{Freed:1992vw}
D.~S. Freed, ``{Classical Chern-Simons theory. Part 1}'', {\em Adv. Math.} {\bf
  113} (1995) 237--303,
\href{http://arXiv.org/abs/hep-th/9206021}{{\tt hep-th/9206021}}.

\bibitem{Brumfiel1973}
G.~W. Brumfiel and J.~W. Morgan, ``Quadratic functions, the index modulo 8 and
  a {Z}/4-{H}irzebruch formula'', {\em Topology} {\bf 12} (1973) 105--122.

\bibitem{Atiyah1973}
M.~F. Atiyah, V.~K. Patodi, and I.~M. Singer, ``Spectral asymmetry and
  {R}iemannian geometry'', {\em Bull. London Math. Soc.} {\bf 5} (1973)
  229--234.

\bibitem{Taylor}
L.~R. Taylor, ``Gauss sums in algebra and topology'', {\em
  {http://www3.nd.edu/~taylor/papers/Gauss\_sums.pdf}}.

\bibitem{MR2062673}
C.~Birkenhake and H.~Lange, {\em Complex abelian varieties}, vol.~302 of {\em
  Grundlehren der Mathematischen Wissenschaften}.
\newblock Springer-Verlag, Berlin, second~ed., 2004.

\bibitem{Minasian:1997mm}
R.~Minasian and G.~W. Moore, ``K-theory and {R}amond-{R}amond charge'', {\em
  JHEP} {\bf 11} (1997) 002,
\href{http://arXiv.org/abs/hep-th/9710230}{{\tt hep-th/9710230}}.

\bibitem{Witten:1998cd}
E.~Witten, ``D-branes and {K}-theory'', {\em JHEP} {\bf 12} (1998) 019,
\href{http://arXiv.org/abs/hep-th/9810188}{{\tt hep-th/9810188}}.

\bibitem{Moore:1999gb}
G.~W. Moore and E.~Witten, ``Self-duality, {R}amond-{R}amond fields, and
  {K}-theory'', {\em JHEP} {\bf 05} (2000) 032,
\href{http://arXiv.org/abs/hep-th/9912279}{{\tt hep-th/9912279}}.

\bibitem{Freed:2000ta}
D.~S. Freed, ``{Dirac charge quantization and generalized differential
  cohomology}'',
\href{http://arXiv.org/abs/hep-th/0011220}{{\tt hep-th/0011220}}.

\bibitem{Belov:2006xj}
D.~M. Belov and G.~W. Moore, ``Type {II} actions from 11-dimensional
  {C}hern-{S}imons theories'',
\href{http://arXiv.org/abs/hep-th/0611020}{{\tt hep-th/0611020}}.

\bibitem{Stong:1985vj}
R.~Stong, ``{Calculation of $\Omega_{11}^{\rm spin}(K(Z,4))$}'', in {\em
  Unified String Theories}, M.~Green and D.~Gross, eds., pp.~430--437.
\newblock World Scientific,
1986.
\newblock

\end{thebibliography}

\providecommand{\href}[2]{#2}\begingroup\raggedright\endgroup

}

\end{document}